
\documentclass[11pt]{article} 

\usepackage[utf8]{inputenc} 


\usepackage{geometry} 
\geometry{letterpaper} 
\geometry{left=1in,right=1in,top=1in,bottom=1in} 

\usepackage{graphicx} 
\usepackage{color}

\definecolor{Red}{rgb}{1,0,0}
\definecolor{Blue}{rgb}{0,0,1}
\definecolor{Olive}{rgb}{0.41,0.55,0.13}
\definecolor{Green}{rgb}{0,1,0}
\definecolor{MGreen}{rgb}{0,0.8,0}
\definecolor{DGreen}{rgb}{0,0.55,0}
\definecolor{Yellow}{rgb}{1,1,0}
\definecolor{Cyan}{rgb}{0,1,1}
\definecolor{Magenta}{rgb}{1,0,1}
\definecolor{Orange}{rgb}{1,.5,0}
\definecolor{Violet}{rgb}{.5,0,.5}
\definecolor{Purple}{rgb}{.75,0,.25}
\definecolor{Brown}{rgb}{.75,.5,.25}
\definecolor{Grey}{rgb}{.5,.5,.5}

\def\red{\color{Red}}


\usepackage{booktabs} 
\usepackage{array} 
\usepackage{paralist} 
\usepackage{verbatim} 
\usepackage{subfigure} 
\usepackage{amsmath,amssymb,amsthm,mathrsfs}
\usepackage[colorlinks]{hyperref}
\usepackage{blkarray}

\usepackage{fancyhdr} 
\pagestyle{plain} 





\makeatletter
\newtheorem*{rep@theorem}{\rep@title}
\newcommand{\newreptheorem}[2]{%
\newenvironment{rep#1}[1]{%
 \def\rep@title{#2 \ref{##1}}%
 \begin{rep@theorem}}%
 {\end{rep@theorem}}}
\makeatother

\theoremstyle{plain}

\newtheorem{theorem}{Theorem}[section] 
 
\newtheorem{corollary}{Corollary}[section]

\newtheorem{lemma}{Lemma}[section]

\newtheorem{theorem*}{Theorem}   
\newtheorem{lemma*}{Lemma} 
\newtheorem{corollary*}{Corollary} 
\newtheorem*{remark*}{Remark}
\newtheorem{example}{Example}

\newtheorem{remark}{Remark}[section]

\newlength{\widebarargwidth}
\newlength{\widebarargheight}
\newlength{\widebarargdepth}

\theoremstyle{definition}
\newtheorem{definition}{Definition}[section]


\def\cC{{\cal C}}
\def\cD{{\cal D}}

\def\cF{{\cal F}}
\def\cG{{\cal G}}

\def\cI{{\cal I}}

\def\cN{{\cal N}}

\def\cT{{\cal T}}

\allowdisplaybreaks 

\bibliographystyle{plain}

\begin{document}

\begin{center}

{\bf{\Large{Intrinsic entropies of log-concave distributions}}}

\vspace*{.25in}

\begin{tabular}{ccc}
{\large{Varun Jog}} & \hspace*{.75in} & {\large{Venkat Anantharam}} \\ 
{\large{\texttt{vjog@wisc.edu}}} & & {\large{\texttt{ananth@eecs.berkeley.edu}}} \vspace{.2in}
 \\
ECE Department && EECS Department \\
University of Wisconsin - Madison && University of California, Berkeley \\ WI 53706 & & CA 94720
\end{tabular}

\vspace*{.2in}

\date{\today}

\vspace*{.2in}

\end{center}


\begin{abstract}
The entropy of a random variable is well-known to equal the exponential growth rate of the volumes of its typical sets. In this paper, we show that for any log-concave random variable $X$, the sequence of the $\lfloor n\theta \rfloor^{\text{th}}$ intrinsic volumes of the typical sets of $X$ in dimensions $n \geq 1$ grows exponentially with a well-defined rate. We denote this rate by $h_X(\theta)$, and call it the $\theta^{\text{th}}$ intrinsic entropy of $X$. We show that $h_X(\theta)$ is a continuous function of $\theta$ over the range $[0,1]$, thereby providing a smooth interpolation between the values  0 and $h(X)$ at the endpoints 0 and 1, respectively.\\

\noindent \textit{Keywords:} Typical sets, log-concave random variables, differential entropy, intrinsic volumes
\end{abstract}
\section{Introduction}


Convex geometry and log-concave functional analysis have been studied in tandem for many decades. Log-concave functions (or measures) form a convenient stepping stone between geometry and analysis --- the indicator function of a convex set is log-concave, so one may study convex sets by studying the larger class of log-concave measures. Furthermore, log-concave measures enjoy many advantages over convex sets, including closure under convolution, making them amenable to analysis. Ball \cite{Bal86} carried out the first geometric study of log-concave functions, connecting isotropic measures to isotropic convex sets and generalizing several convex geometric inequalities to log-concave measures. The idea of extending and reinterpreting convex geometric results in terms of log-concave measures has since then taken root, and the term ``geometrization of probability (or analysis)"  coined by Milman \cite{Mil07} has been used to describe this approach. Milman \cite{Mil07} and Klartag \& Milman \cite{Kla05} noted that developing a structured theory of the geometry of log-concave measures could have important consequences for problems that are purely geometric in nature \cite{Kla05, BobMad10}. 

In a recent survey, Madiman et al.\ \cite{MadEtAl16} identified two approaches that are employed to geometrize analysis: functional (or integral) lifting and probabilistic (or entropic) lifting. To illustrate these liftings, we first state the famous Brunn-Minkowski inequality in convex geometry \cite{gardner2002brunn}:
\begin{theorem}[Brunn-Minkowski inequality for compact convex sets]
Let $0 < \lambda < 1$, and let $X$ and $Y$ be compact convex sets. Let $|X|$ and $|Y|$ denote the $n$-dimensional volumes of $X$ and $Y$. Define the Minkowski sum of $X$ and $Y$ as 
$$X \oplus Y := \{x + y | x \in X, y \in Y\}.$$
Then the following equivalent statements hold:
\begin{enumerate}
\item[(i)] $|(1-\lambda)X \oplus \lambda Y| \geq |X|^{1-\lambda} |Y|^\lambda$, \text{ and} 
\item[(ii)] $|(1-\lambda)X  \oplus \lambda Y|^{1/n} \geq (1-\lambda)|X|^{1/n} + \lambda |Y|^{1/n}.$
\end{enumerate}
\end{theorem}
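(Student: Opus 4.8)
The plan is to reduce all of the stated claims to the single unweighted dimensional inequality
\[
|A \oplus B|^{1/n} \;\geq\; |A|^{1/n} + |B|^{1/n} \qquad \text{for all nonempty compact convex } A, B \subseteq \real^n,
\]
and then to establish that. Statement (ii) is the special case $A = (1-\lambda)X$, $B = \lambda Y$, using the homogeneity $|tC| = t^n|C|$; conversely the displayed inequality is the case $\lambda = \tfrac12$, $X = 2A$, $Y = 2B$ of (ii). And (i) $\Leftrightarrow$ (ii) is the classical AM--GM argument: $(1-\lambda)|X|^{1/n} + \lambda|Y|^{1/n} \geq |X|^{(1-\lambda)/n}|Y|^{\lambda/n}$ gives (ii) $\Rightarrow$ (i), while normalizing $X$ and $Y$ to unit volume and applying (i) with the weight $\mu = \lambda|Y|^{1/n}\big/\big((1-\lambda)|X|^{1/n} + \lambda|Y|^{1/n}\big)$ recovers (ii) when $|X|, |Y| > 0$ (if one of the volumes is $0$ then (ii) is immediate, since $(1-\lambda)X \oplus \lambda Y$ contains translates of both $(1-\lambda)X$ and $\lambda Y$, and likewise for the displayed inequality).

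To prove the displayed inequality I would run the Hadwiger--Ohmann induction. \emph{Boxes:} if $A$ and $B$ are axis-parallel boxes with side lengths $(a_i)_{i=1}^n$ and $(b_i)_{i=1}^n$, then $A \oplus B$ is a translate of the box with side lengths $(a_i + b_i)_{i=1}^n$, so the claim is $\prod_i (a_i + b_i)^{1/n} \geq \prod_i a_i^{1/n} + \prod_i b_i^{1/n}$; dividing by the left-hand side it becomes $\prod_i \big(\tfrac{a_i}{a_i+b_i}\big)^{1/n} + \prod_i \big(\tfrac{b_i}{a_i+b_i}\big)^{1/n} \leq 1$, which follows by applying AM--GM to each of the two products and adding, since the two arithmetic means sum to $\frac1n \sum_i \big(\frac{a_i}{a_i+b_i} + \frac{b_i}{a_i+b_i}\big) = 1$. \emph{Finite unions of boxes:} induct on the total number of boxes. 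If $A$ is a union of $\geq 2$ boxes with pairwise disjoint interiors, a coordinate hyperplane $H = \{x_k = c\}$ separates the interiors of two of them; writing $A^\pm$ for the two pieces into which $H$ cuts $A$, each $A^\pm$ is, after discarding a lower-dimensional slice, a box-union with strictly fewer boxes than $A$. Now choose a parallel hyperplane cutting $B$ into box-unions $B^\pm$, each with no more boxes than $B$, satisfying $|B^\pm|/|B| = |A^\pm|/|A| =: \theta^\pm$ (possible by the intermediate value theorem applied to the continuous monotone function $t \mapsto |B \cap \{x_k \leq t\}|$). Then $A^+ \oplus B^+$ and $A^- \oplus B^-$ lie in opposite closed halfspaces bounded by a translate of $H$ and are both contained in $A \oplus B$, so the induction hypothesis --- valid because each pair $(A^\pm, B^\pm)$ has strictly fewer boxes in total than $(A, B)$ --- gives
\[
|A \oplus B| \;\geq\; |A^+ \oplus B^+| + |A^- \oplus B^-| \;\geq\; \big(|A^+|^{1/n} + |B^+|^{1/n}\big)^n + \big(|A^-|^{1/n} + |B^-|^{1/n}\big)^n,
\]
and since $|A^\pm| = \theta^\pm |A|$ and $|B^\pm| = \theta^\pm |B|$ the right-hand side collapses to $(\theta^+ + \theta^-)\big(|A|^{1/n} + |B|^{1/n}\big)^n = \big(|A|^{1/n} + |B|^{1/n}\big)^n$, which is the claim.

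To finish, a compact convex body with nonempty interior is the increasing union of its inscribed finite unions of dyadic cubes, whose volumes converge to its own, so applying the box-union case to inner approximations $X_m \subseteq X$, $Y_m \subseteq Y$ (whence $X_m \oplus Y_m \subseteq X \oplus Y$) and letting $m \to \infty$ gives the displayed inequality for all $X, Y$; if $X$ or $Y$ has zero volume it is again immediate. The subtle point throughout is the bookkeeping in the slicing step --- one must be sure the separating hyperplane can always be chosen so that $A^\pm$ strictly lose a box while $B^\pm$ acquire none (so that the induction is well-founded), and one must keep careful track of the lower-dimensional pieces the slicing produces. An entirely different route, closer to this paper's theme of functional lifting, proves (i) in a single line from the Pr\'ekopa--Leindler inequality applied to $f = \mathbf{1}_X$, $g = \mathbf{1}_Y$, $h = \mathbf{1}_{(1-\lambda)X \oplus \lambda Y}$; but Pr\'ekopa--Leindler is itself proved by an induction on dimension whose one-dimensional base case rests on the elementary fact $|A \oplus B| \geq |A| + |B|$ for $A, B \subset \real$, so that route relocates rather than removes the work. (A third option is the Brenier-map proof, which derives (i) from the Minkowski determinant inequality $\det(S+T)^{1/n} \geq \det(S)^{1/n} + \det(T)^{1/n}$ for positive semidefinite $S, T$, at the price of invoking regularity of the optimal transport map.)
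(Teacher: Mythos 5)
Your proof is correct. The reduction of both (i) and (ii) to the unweighted inequality $|A\oplus B|^{1/n}\ge |A|^{1/n}+|B|^{1/n}$ is handled properly (homogeneity for (ii), AM--GM for (ii)$\Rightarrow$(i), and the unit-volume normalization with the weight $\mu$ for (i)$\Rightarrow$(ii), with the zero-volume cases correctly disposed of by translation monotonicity), and the Hadwiger--Ohmann induction is carried out accurately: the box case via AM--GM, the well-founded induction on the total number of boxes with the equal-volume-fraction cut of $B$, and the inner approximation by dyadic cube unions to pass to general compact convex bodies. The one point worth comparing is that the paper does not prove this theorem at all: it is quoted as classical background with a citation to Gardner's survey, and the only proof-related remark the paper makes is the one you mention at the end, namely that (i) follows in one line from the Pr\'ekopa--Leindler inequality applied to $f=\mathbf{1}_X$, $g=\mathbf{1}_Y$, $h=\mathbf{1}_{(1-\lambda)X\oplus\lambda Y}$, which is the ``integral lifting'' viewpoint the introduction is organized around. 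So your route is genuinely different in emphasis: it is self-contained and elementary, at the cost of the slicing bookkeeping you rightly flag, whereas the paper's implicit route outsources the work to Pr\'ekopa--Leindler (whose own induction, as you note, still rests on the one-dimensional fact $|A\oplus B|\ge|A|+|B|$). Either is acceptable; yours has the advantage of not presupposing any functional inequality, which fits the theorem's role here as purely geometric background.
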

Here and in the rest of the paper $:=$ denotes equality by definition. Integral and entropic lifting can be understood with regards to the Brunn-Minkowski inequality as follows:

\begin{enumerate}
\item
\textbf{Integral lifting:} Madiman et al.\ \cite{MadEtAl16} classify this approach as one that replaces convex sets by log-concave functions and replaces the volume functional by the integral. The Pr\'{e}kopa-Leindler inequality \cite{Pre71, Lei72, Pre73, gardner2002brunn}, which serves as an integral lifting of the Brunn-Minkowski inequality (BMI), as stated in (i):
\begin{theorem}[Pr\'{e}kopa-Leindler inequality]
Let $0 \leq \lambda < 1$, and let $f, g,$ and $h$ be non-negative integrable functions on $\mathbb R^n$ satisfying
$$h((1-\lambda) x + \lambda y) \geq f(x)^{1- \lambda}g(y)^{\lambda}.$$
Then
$$\int_{\mathbb R^n} h(x) dx \geq \left(\int_{\mathbb R^n} f(x) dx \right)^{1-\lambda} \left( \int_{\mathbb R^n} g(x) dx\right)^{1-\lambda}.$$
\end{theorem}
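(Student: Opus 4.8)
The plan is to prove the Prékopa–Leindler inequality by induction on the dimension $n$, using the one-dimensional case as the base step and a Fubini/slicing argument to pass from dimension $n-1$ to dimension $n$. (I read the exponent on $\int g$ as $\lambda$, consistent with the hypothesis.) I expect essentially all of the genuine work to sit in the one-dimensional base case; the inductive step is then a formal manipulation.

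For the base case $n=1$, first reduce to the normalized situation $\|f\|_\infty = \|g\|_\infty = 1$. If $\sup f = 0$ then $f = 0$ a.e., so $\int f = 0$ and the right-hand side vanishes; if $0 < \sup f < \infty$ one rescales (multiplying $f$ by a constant, $g$ by a constant, and $h$ by the appropriate geometric-mean constant preserves both the hypothesis and the scaling of the conclusion); and if $\sup f = \infty$ one replaces $f$ by $\min(f,N)$ and $g$ by $\min(g,N)$ — the hypothesis survives since $\min(f,N)^{1-\lambda}\min(g,N)^{\lambda} \le f^{1-\lambda}g^{\lambda} \le h$ — and lets $N \to \infty$ by monotone convergence. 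With $\|f\|_\infty = \|g\|_\infty = 1$, note that for each $t \in (0,1)$ the superlevel sets satisfy
$$\{h > t\} \supseteq (1-\lambda)\{f > t\} \oplus \lambda\{g > t\},$$
because $f(x) > t$ and $g(y) > t$ force $h((1-\lambda)x + \lambda y) > t$, and both sets on the right are nonempty. The elementary one-dimensional Brunn–Minkowski inequality $|A \oplus B| \ge |A| + |B|$ then gives $|\{h>t\}| \ge (1-\lambda)|\{f>t\}| + \lambda|\{g>t\}|$. Integrating over $t$ via the layer-cake formula, and using $\|f\|_\infty = \|g\|_\infty = 1$ to restrict to $t \in (0,1)$,
$$\int_{\mathbb{R}} h \;\ge\; \int_0^1 |\{h > t\}|\,dt \;\ge\; (1-\lambda)\int_{\mathbb{R}} f + \lambda \int_{\mathbb{R}} g \;\ge\; \Big(\int_{\mathbb{R}} f\Big)^{1-\lambda}\Big(\int_{\mathbb{R}} g\Big)^{\lambda},$$
the last step being the weighted arithmetic-geometric mean inequality.

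For the inductive step, write points of $\mathbb{R}^n$ as $(x,s)$ with $x \in \mathbb{R}^{n-1}$, $s \in \mathbb{R}$, and set $f_s(x) := f(x,s)$, and similarly $g_s, h_s$. For fixed $s_0, s_1 \in \mathbb{R}$ and $s_\lambda := (1-\lambda)s_0 + \lambda s_1$, the hypothesis restricted to the slices reads $h_{s_\lambda}((1-\lambda)x + \lambda y) \ge f_{s_0}(x)^{1-\lambda} g_{s_1}(y)^{\lambda}$, so the induction hypothesis applied in $\mathbb{R}^{n-1}$ yields, with $F(s) := \int_{\mathbb{R}^{n-1}} f_s$ and $G, H$ defined analogously,
$$H(s_\lambda) \;\ge\; F(s_0)^{1-\lambda} G(s_1)^{\lambda}.$$
This is precisely the hypothesis of the one-dimensional Prékopa–Leindler inequality for the triple $(F,G,H)$, so the base case gives $\int_{\mathbb{R}} H \ge (\int_{\mathbb{R}} F)^{1-\lambda}(\int_{\mathbb{R}} G)^{\lambda}$, and Fubini's theorem ($\int_{\mathbb{R}} H = \int_{\mathbb{R}^n} h$, and likewise for $f,g$) completes the induction.

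The main obstacle is the measurability bookkeeping in the base case: although $h$ is assumed measurable, one must justify $|A \oplus B| \ge |A| + |B|$ for the sets $A = \{f>t\}$, $B = \{g>t\}$, whose Minkowski sum need not be Borel. The standard fix is to establish the inequality first when $A, B$ are finite unions of intervals — where $A \oplus B$ is again of this form and the bound is immediate — and then pass to general measurable sets by inner regularity, replacing $|\{h > t\}|$ with the inner measure of $(1-\lambda)A \oplus \lambda B$. A secondary technical point, handled above, is dealing with $\sup f \in \{0, \infty\}$ before normalizing; and in the inductive step one should record that $F, G, H$ are measurable (again by Fubini) so that the one-dimensional case genuinely applies.
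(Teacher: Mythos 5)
Your proposal is correct, but note that the paper does not prove this statement at all: the Pr\'ekopa--Leindler inequality appears only as cited background (with references to Pr\'ekopa, Leindler, and Gardner's survey) motivating the notion of ``integral lifting'' of the Brunn--Minkowski inequality, so there is no internal proof to compare against. What you give is the standard textbook argument --- induction on dimension, with the one-dimensional base case handled by normalizing the essential suprema, the superlevel-set inclusion $\{h>t\}\supseteq(1-\lambda)\{f>t\}\oplus\lambda\{g>t\}$ for $t\in(0,1)$, the one-dimensional Brunn--Minkowski inequality, the layer-cake formula, and weighted AM--GM, followed by a Fubini slicing step --- and it is sound. Your reading of the exponent on $\int g$ as $\lambda$ is the right one; the displayed conclusion in the paper has a typo ($1-\lambda$ on both factors), and indeed the inequality as literally printed is false in general (e.g.\ scale $g$ by a large constant and $h$ by the corresponding factor $c^{\lambda}$). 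The two technical points you flag are the genuine ones and are handled adequately: measurability of the Minkowski sum is bypassed by working with inner measure (or compact/elementary subsets and inner regularity), and in the inductive step $F,G,H$ are measurable by Tonelli and a.e.\ finite, with the degenerate slices absorbed by the truncation argument you already set up for the base case.
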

Apart from visual similarity, this integral lifting indeed provides a generlization; i.e., one can prove the Brunn-Minknowski inequality using the Pr\'{e}kopa-Leindler inequality by replacing the functions $f$, $g$, and $h$ by $ 1_{X}$, $1_{Y}$, and $1_{(1-\lambda) X + \lambda Y}$, respectively.
\item
\textbf{Entropic lifting:} As per Madiman et al.\ \cite{MadEtAl16}, one replaces convex sets by random variables and the volume functional by entropy. The entropy power inequality (EPI) \cite{shannon, Sta59, Bla65} serves as an entropic lifting of the Brunn-Minkowski inequality:
\begin{theorem}[Entropy power inequality]
Let $X$ and $Y$ be random variables on $\mathbb R^n$ with well-defined entropies $h(X)$ and $h(Y)$. Then the following inequality holds:
$$ e^{2h(X+Y)/n} \geq e^{2h(X)/n} + e^{2h(Y)/n}.$$
\end{theorem}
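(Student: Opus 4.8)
\medskip
\noindent\textbf{Proposed proof strategy.}
The plan is to establish the inequality by the classical Stam--Blachman argument, which proceeds along the heat semigroup and is exactly the content of the references cited above. I would split the work into two stages: a purely algebraic reduction to a ``linearized'' statement, and then an interpolation argument for that statement. The linearized statement is that for independent $X,Y$ on $\mathbb R^n$ with well-defined entropies and every $\lambda\in(0,1)$,
\begin{equation}
h\!\left(\sqrt{\lambda}\,X+\sqrt{1-\lambda}\,Y\right)\ \ge\ \lambda\,h(X)+(1-\lambda)\,h(Y). \tag{$\star$}
\end{equation}
That $(\star)$ implies the theorem is bookkeeping: using the homogeneity relation $h(aX)=h(X)+n\log a$ for a scalar $a>0$, apply $(\star)$ to $U=X/\sqrt{\lambda}$ and $V=Y/\sqrt{1-\lambda}$ and set $a=\tfrac2n h(X)$, $b=\tfrac2n h(Y)$; one obtains $\tfrac2n h(X+Y)\ge \lambda a+(1-\lambda)b-\lambda\log\lambda-(1-\lambda)\log(1-\lambda)$, and maximizing the right-hand side over $\lambda$ (the optimizer is $\lambda=e^{a}/(e^{a}+e^{b})$) yields precisely $\tfrac2n h(X+Y)\ge\log(e^{a}+e^{b})$, i.e.\ $e^{2h(X+Y)/n}\ge e^{2h(X)/n}+e^{2h(Y)/n}$. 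Conversely $(\star)$ follows from the theorem together with concavity of $\log$, so the two are in fact equivalent and it suffices to prove $(\star)$.

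To prove $(\star)$ I would interpolate by Gaussian smoothing. Let $Z_1,Z_2$ be independent standard Gaussians on $\mathbb R^n$, independent of $(X,Y)$, and for $t\ge 0$ put $X_t=X+\sqrt{t}\,Z_1$, $Y_t=Y+\sqrt{t}\,Z_2$ and $W_t=\sqrt{\lambda}\,X_t+\sqrt{1-\lambda}\,Y_t$; since $\lambda+(1-\lambda)=1$ the Gaussian components of $W_t$ merge, so $W_t$ has the law of $V+\sqrt{t}\,Z$ with $V=\sqrt{\lambda}X+\sqrt{1-\lambda}Y$ and $Z$ standard Gaussian. For $t>0$ each of $X_t,Y_t,W_t$ has a smooth, strictly positive density with finite Fisher information $J(\,\cdot\,)=\E\|\nabla\log p(\,\cdot\,)\|^{2}$, so de Bruijn's identity gives $\tfrac{d}{dt}h(X_t)=\tfrac12 J(X_t)$, $\tfrac{d}{dt}h(Y_t)=\tfrac12 J(Y_t)$ and $\tfrac{d}{dt}h(W_t)=\tfrac12 J(W_t)$. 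Writing $\Delta(t)=h(W_t)-\lambda h(X_t)-(1-\lambda)h(Y_t)$, this yields
\begin{equation}
\Delta'(t)=\tfrac12\big(J(W_t)-\lambda J(X_t)-(1-\lambda)J(Y_t)\big)\ \le\ 0,
\end{equation}
where the inequality is the weighted Blachman--Stam Fisher information inequality $J(\sqrt{\lambda}A+\sqrt{1-\lambda}B)\le\lambda J(A)+(1-\lambda)J(B)$ applied to the independent pair $A=X_t$, $B=Y_t$. The latter I would obtain from the score identity $\nabla\log p_{A+B}(s)=\E[\,\nabla\log p_A(A)\mid A+B=s\,]$ (and its symmetric counterpart), which after rescaling expresses $\nabla\log p_{W_t}(W_t)$ as a conditional expectation of $\sqrt{\lambda}\,\nabla\log p_{X_t}(X_t)+\sqrt{1-\lambda}\,\nabla\log p_{Y_t}(Y_t)$; conditional Jensen for $\|\cdot\|^{2}$, together with the vanishing of the cross term (scores are centered and $X_t,Y_t$ are independent), finishes it.

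It remains to handle the boundary. As $t\to\infty$ one has $h(X+\sqrt{t}\,Z_1)-\tfrac n2\log(2\pi e t)\to 0$ (immediate when $X$ has finite covariance, and in general after truncating $X$), and likewise for $Y_t$ and $W_t$; since these leading terms enter $\Delta$ with weights $\lambda,1-\lambda$ summing to $1$, they cancel and $\Delta(t)\to 0$. Being non-increasing with $\Delta(\infty)=0$, the function $\Delta$ is non-negative throughout $(0,\infty)$; combined with $h(X_t)\ge h(X)$ and $h(Y_t)\ge h(Y)$ (adding independent noise cannot decrease entropy) this gives $h(W_t)\ge\lambda h(X)+(1-\lambda)h(Y)$ for every $t>0$, and letting $t\to 0^{+}$, where $h(W_t)=h(V+\sqrt{t}\,Z)\to h(V)$ by continuity of differential entropy under vanishing Gaussian smoothing (using that $h(V)$ is finite), yields $(\star)$.

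The genuinely substantive step is the Fisher information inequality; the reduction to $(\star)$ is algebra, and de Bruijn's identity, the two boundary limits, and the Gaussian mollification that legitimizes differentiating $h(X_t)$ are standard soft analysis that I would expect to be the fiddly rather than the hard part. One could instead bypass the heat-flow route entirely and argue via Lieb's proof using the sharp constant in Young's convolution inequality, but the Stam--Blachman argument above matches the cited sources and I expect it to be the cleanest to write out in full.
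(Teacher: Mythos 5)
Your outline is the classical Stam--Blachman heat-flow proof, which is exactly the route taken in the sources the paper cites for this statement (\cite{Sta59, Bla65}); the paper itself quotes the EPI as known background and gives no proof, so there is nothing to compare beyond noting that your reduction to the linearized form $(\star)$, the de Bruijn/Fisher-information monotonicity, and the boundary limits are the standard and correct argument. The only point worth flagging is that the inequality requires $X$ and $Y$ to be independent --- a hypothesis the paper's statement leaves implicit but which your proof (correctly) uses throughout.
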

Here, we may observe a striking similarity to the Brunn-Minkowski inequality (ii). Unlike the integral generalization, however, the EPI is not readily proven using the BMI, or vice versa, and neither can be said to be a generalization of the other.
\end{enumerate}

Our work in this paper decidedly has a ``geometrization or probability" flavor, and although it does not fall squarely in either of the above classifications, it is closer in spirit to the approach of entropic lifting of convex geometry.  Several instances of entropic liftings have been studied in the literature, and we provide a few examples to highlight the scope of such liftings:
\begin{enumerate}
\item
\textbf{Surface area and Fisher information:} The surface area of a compact, convex set $X \subseteq \mathbb R^n$ is defined as the limit
$$\partial(X) := \lim_{\epsilon \to 0_+} \frac{|X \oplus \epsilon B| - |X|}{\epsilon} = \frac{d}{d\epsilon} |X \oplus \epsilon B| \Big |_0,$$
where $B$ is the Euclidean ball of unit radius. To obtain an entropic lifting, one may replace the convex set by a random variable $X$ and replace volume by entropy. Instead of a Minkowski sum with an $\epsilon$-ball, we use a Gaussian random variable with variance $\epsilon$:
$$\lim_{\epsilon \to 0_+} \frac{h(X+\sqrt \epsilon Z) - h(X)}{\epsilon} = \frac{d}{d\epsilon} h(X+\sqrt \epsilon Z)\Big|_0 \stackrel{(a)}= \frac{1}{2}J(X),$$
where $Z \sim \cN(0,1)$ and $J(X)$ is the Fisher information of $X$. The equality in step $(a)$ is known in the literature as De Bruijn's identity \cite{Sta59, cover1991}. This indicates that the entropic lifting of surface area is Fisher information.
\item
\textbf{Isoperimetric inequality:} Among all compact convex shapes with a fixed surface area, the Euclidean ball has the largest volume \cite{schneider2013convex}. The entropic lifting of this result---that among all distributions with a fixed Fisher information, the Gaussian distribution has the maximum entropy---holds \cite{cover1991}.

\item
\textbf{Concavity of entropy power:} If $X \subseteq \mathbb R^n$ is a compact, convex set, the function $f(t) := |X+tB|^{1/n}$ is concave on $t \geq 0$. This follows from a straightforward application of the Brunn-Minkowski inequality. The entropic lifting is known as Costa's EPI \cite{Cos85, Dem89, Vil00}, or concavity of entropy power, and states the following: If $X$ is a random variable on $\mathbb R^n$ and $Z \sim \cN(0, I)$, then the function $e^{2h(X+\sqrt t Z)/n}$ is concave on $t \geq 0$.

\item
\textbf{Reverse BMI and EPI:} The reverse BMI discovered by Milman \cite{Mil86} states that given two convex sets $X$ and $Y$, volume-preserving linear transformations exist mapping $X$ to $\tilde X$ and $Y$ to $\tilde Y$, such that the following inequality holds:
$$|\tilde X \oplus \tilde Y| \leq C(|\tilde X|^{1/n} + |\tilde Y|^{1/n}),$$
where $C$ is an absolute constant. Bobkov \& Madiman \cite{BobMad12} showed that for log-concave (or more generally $\kappa$-concave, see \cite{BobMad12} for a definition) random variables $X$ and $Y$, entropy-preserving linear transformations exist mapping random variables $X$ and $Y$ to $\tilde X$ and $\tilde Y$, such that
$$e^{2h(\tilde X+ \tilde Y)/n} \leq C\left( e^{2h(\tilde X)/n} + e^{2h(\tilde Y)/n}\right),$$
for an absolute constant $C$.
\end{enumerate}

Our work in this paper concerns the entropic liftings of what are called \emph{intrinsic volumes} in convex geometry. This is closely related to the liftings 1 and 2 mentioned above. In 1, it is natural to examine higher-order derivates with respect to $\epsilon$ for further parallels. A remarkable result in convex geometry, known as Steiner's formula \cite{schneider2013convex, klainrota}, shows that all sufficiently high-order derivatives of the volume functional are equal to 0. Since no such property holds for the derivatives of entropy, it places a limition on how far we may extend the analogy. Steiner's formula is stated below:
\begin{theorem}[Steiner's formula  \cite{schneider2013convex, klainrota}]
Let $X \subseteq \mathbb R^n$ be a compact, convex set. Let $B_j$ denote the $j$-dimensional Euclidean ball in $\mathbb R^n$, and let $\omega_j = |B_j|$. Then we have the following equality:
$$|X \oplus tB_n| = \sum_{j=0}^n V_{n-j}(X) \omega_j t^j,$$
where $\{V_0(X), V_1(X), \dots, V_n(X)\}$ are the $n+1$ intrinsic volumes of $X$.
\end{theorem}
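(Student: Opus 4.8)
The plan is to establish the polynomiality of $t\mapsto|X\oplus tB_n|$ first for convex polytopes, via an explicit decomposition of space into cells indexed by the faces of the polytope, and then to pass to arbitrary compact convex $X$ by Hausdorff approximation; the intrinsic volumes $V_0(X),\dots,V_n(X)$ are then \emph{defined} as the suitably normalized coefficients of the resulting polynomial. \textbf{Step 1 (polytopes).} Let $P$ be a polytope and, for each nonempty face $F$ of $P$, let $N(P,F)=\{u\in\mathbb R^n:\langle u,x-y\rangle\geq 0\ \forall x\in F,\ y\in P\}$ be its outer normal cone; this is a closed convex cone contained in the subspace orthogonal to $\mathrm{aff}(F)$, of dimension $n-\dim F$. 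First I would record the standard fact that the cells $F\oplus N(P,F)$ have pairwise disjoint interiors and cover $\mathbb R^n$ --- equivalently, the nearest-point map $\pi_P$ carries the interior of $F\oplus N(P,F)$ onto $\mathrm{relint}(F)$. Intersecting with $P\oplus tB_n$: a point $x+u$ with $x\in\mathrm{relint}(F)$, $u\in N(P,F)$ has $\pi_P(x+u)=x$, hence distance $|u|$ to $P$, so it lies in $P\oplus tB_n$ exactly when $|u|\leq t$. Thus, up to a Lebesgue-null set,
$$|P\oplus tB_n|=\sum_{F}\big|F\oplus\big(N(P,F)\cap tB_n\big)\big|,$$
and since $N(P,F)\perp\mathrm{aff}(F)$, Fubini gives each summand as $\mathrm{vol}_{\dim F}(F)\cdot t^{\,n-\dim F}\,\mathrm{vol}_{n-\dim F}(N(P,F)\cap B_n)$. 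Grouping faces by dimension and writing $j=n-\dim F$ yields $|P\oplus tB_n|=\sum_{j=0}^{n}c_j(P)\,t^j$ with $c_j(P)=\omega_j\sum_{\dim F=n-j}\mathrm{vol}_{n-j}(F)\,\gamma(P,F)$, where $\gamma(P,F):=\mathrm{vol}_j(N(P,F)\cap B_n)/\omega_j$ is the external angle at $F$; define $V_{n-j}(P):=c_j(P)/\omega_j$.

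\textbf{Step 2 (general convex bodies).} For compact convex $X$, pick polytopes $P_m\to X$ in the Hausdorff metric (e.g.\ convex hulls of finite $\tfrac1m$-nets of $X$). Since $X\oplus tB_n$ depends continuously on $X$ in the Hausdorff metric with a modulus independent of $t\in[0,T]$, and volume is Hausdorff-continuous on bodies inside a fixed ball, $p_m(t):=|P_m\oplus tB_n|$ converges to $p(t):=|X\oplus tB_n|$ uniformly on $[0,T]$ for every $T$. Each $p_m$ is a polynomial of degree $\leq n$, and a locally uniform limit of such polynomials is again a polynomial of degree $\leq n$ (evaluate at $n+1$ fixed nodes and note that Lagrange interpolation writes the coefficients as continuous linear functionals of the values, so they converge). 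Hence $p(t)=\sum_{j=0}^n c_j t^j$; set $V_{n-j}(X):=c_j/\omega_j$, consistent with Step 1. The identifications $c_0=|X|$ (so $V_n(X)=|X|$) and $\lim_{t\to\infty}t^{-n}p(t)=\omega_n$ (so $V_0(X)=1$ for $X\neq\emptyset$) fix the normalization, giving precisely the asserted identity.

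\textbf{Expected main obstacle.} The crux is Step 1: showing that $\{F\oplus N(P,F)\}_F$ tiles $\mathbb R^n$ with disjoint interiors, that inside $P\oplus tB_n$ this cell equals $F\oplus(N(P,F)\cap tB_n)$ up to a null set (the overlaps of distinct cells and the contributions over $\mathrm{relbd}(F)$ must be checked to be Lebesgue-null), and then the orthogonality that makes the Fubini splitting clean. Everything downstream --- the approximation argument, the polynomial-limit lemma, and the normalization bookkeeping --- is routine. A differential-geometric alternative (the Weyl tube formula via the nearest-point projection and the coarea formula) would bypass polytopes but needs either smoothness of $\partial X$ or a careful treatment of curvature measures for non-smooth boundaries, so the polytopal route is the most self-contained.
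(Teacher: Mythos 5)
The paper does not prove this statement at all: Steiner's formula is quoted as a classical result with citations to Schneider and Klain--Rota, and the intrinsic volumes it introduces are used downstream only through their standard properties. So there is no ``paper proof'' to compare against; judged on its own, your argument is the standard and correct route (normal-cone decomposition $\mathbb R^n=\bigcup_F\bigl(F\oplus N(P,F)\bigr)$ for polytopes, Fubini along the orthogonal splitting, then Hausdorff approximation plus the fact that a uniform limit of polynomials of degree at most $n$ is again such a polynomial, with coefficients converging via Lagrange interpolation). All the steps you flag as needing care (disjointness of the cells up to null sets, the identity $\pi_P(x+u)=x$ for $x\in\mathrm{relint}(F)$, $u\in N(P,F)$, the uniform-in-$t$ continuity of $|K\oplus tB_n|$ under Hausdorff convergence inside a fixed ball) are indeed the right ones and all go through for convex bodies.

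One caveat worth making explicit: as stated, the theorem asserts that the Steiner coefficients equal \emph{the} intrinsic volumes, which presupposes an independent definition of $V_0,\dots,V_n$ (Klain--Rota, for instance, introduce them via valuations/Hadwiger or the mean-projection formula, and Steiner's formula is then a theorem, not a definition). Your Step 1 and Step 2 prove polynomiality and \emph{define} $V_{n-j}(X):=c_j/\omega_j$; checking only $V_n(X)=|X|$ and $V_0(X)=1$ does not by itself identify the intermediate coefficients with a pre-existing definition. To close this you should either declare Steiner's formula as the definition (the Schneider convention, perfectly legitimate and consistent with everything the paper uses) or add one line identifying your polytope formula $V_{n-j}(P)=\sum_{\dim F=n-j}\mathrm{vol}_{n-j}(F)\,\gamma(P,F)$ with the chosen alternative definition and then invoke continuity of both sides under Hausdorff limits. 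With that remark added, the proof is complete.
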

 Intrinsic volumes are functions defined on the class of compact, convex sets, and can be uniquely extended to \emph{polyconvex sets}; i.e., sets that are finite unions of compact, convex sets. Some of these intrinsic volumes are known in the literature under alternate names (e.g., $V_0(X)$ is the Euler characteristic, $V_1(X)$ is the mean width, $2V_{n-1}(X)$ is the surface area, and $V_n(X)$ is the volume). Intrinsic volumes have a number of interpretations in geometry. We state some of these interpretations, as found in Klain \& Rota \cite{klainrota}. Intrinsic volumes are \emph{valuations} on polyconvex sets; i.e., for all polyconvex sets $X$ and $Y$ and for all $0 \leq i \leq n$,
\begin{equation*}
V_i(X \cup Y) = V_i(X) + V_i(Y) - V_i(X\cap Y).
\end{equation*}
Furthermore, these valuations are convex-continuous and  invariant under rigid motions \cite{schneider2013convex}. In fact, Hadwiger's theorem \cite{Had13, Kla95, klainrota} states that any convex-continuous, rigid-motion invariant valuation on the set of polyconvex sets is a linear combination of the intrinsic volume valuations. Here, convex-continuity is defined with respect to the topology on compact convex sets induced by the Hausdorff metric $\delta$, which measures the distance between $X$ and $Y$ according to the relation
$$\delta(X,Y) := \max \left\{\sup_{x \in X} \inf_{y\in Y} ||x-y||, \; \sup_{y \in Y} \inf_{x\in X} ||x-y||\right\}.$$
Kubota's theorem or Crofton's formula \cite{klainrota, schneider2013convex} implies that the $i^\text{th}$ intrinsic volume $V_i(K)$ is proportional to the volume of a random $i$-dimensional projection or slice of $K$. Intrinsic volumes are thus defined by the geometric structure of a set and describe its global characteristics. 

A number of inequalities concerning volumes carry over for intrinsic volumes. Schneider \cite{schneider2013convex} notes that a \emph{complete}-Brunn-Minkowski inequality exists for intrinsic volumes of compact convex sets:\begin{equation*}
V_i(X\oplus Y)^{1/i} \geq V_i(X)^{1/i} + V_i(Y)^{1/i}, \text{ for all } i \geq 1.
\end{equation*}
Furthermore, the isoperimetric inequality from point 2 holds much more generally \cite{schneider2013convex}: If $B$ is the unit Euclidean ball in $\mathbb R^n$, 
\begin{equation*}
\left( \frac{V_n(X)}{V_n(B)}\right)^{1/n}  \leq \left( \frac{V_j(X)}{V_j(B)}\right)^{1/j}. 
\end{equation*}
These results pose a natural question, which is also central to our work: What is a natural analog of intrinsic volumes for log-concave measures? In other words, what is an appropriate functional or probabilistic lifting of intrinsic volumes?

Several recent papers have proposed functional analogs of intrinsic volumes. In Milman \& Rotem \cite{MilRot13b, MilRot13a} and Bobkov et al.\ \cite{BobEtAl14} the authors introduce a new notion of addition and interpret intrinsic volumes (or more generally, mixed volumes) by considering the coefficients of the version of Steiner's formula corresponding to this notion of addition. For log-concave functions $f$ and $g$ on $\mathbb R^n$, the addition operation $f \stackrel{\sim}+ g$ is given as
\begin{equation}
(f \stackrel{\sim}+ g) (z) := \sup \left\{ \min_{(x,y) \text{ s.t. } x+y = z} \{f(x), \; g(y)\} \right\}.
\end{equation}
The authors obtain analogs of several inequalities, including the Brunn-Minkowski inequality, for intrinsic volumes defined with respect to this operation. We refer the reader to the recent survey by Colesanti \cite{Col16} for further references regarding such functional liftings, as well as an alternate approach to generalizing intrinsic volumes by searching for appropriate notions of valuations on the space of log-concave functions.

In this paper, we attempt to discover an entropic lifting of intrinsic volumes for log-concave distributions. To connect a probability measure to its entropy, we use the well-known result from information theory that connects the volume of a typical set of a random variable to its entropy \cite{cover1991}. To be precise, given a real-valued random variable $X$ with density $p_X$ and differential entropy $h(X)$, one way to define its $\epsilon$-typical set, $\hat \cT^\epsilon_n$ in dimension $n$ is 
\begin{equation}\label{eq: def Tn 0}
\hat \cT^\epsilon_n := \left\{ x^n \in \mathbb R^n \! \mid \! e^{-n(h(X)+ \epsilon)} \leq p_{X^n}(x^n)  \leq e^{-n(h(X)-\epsilon)}\right\},
\end{equation}
where $p_{X^n}(x^n) = \prod_{i=1}^n p_{X}(x_i)$. For all sufficiently large $n$, the volume $|\hat \cT^\epsilon_n|$, satisfies \cite{cover1991}
\begin{equation}\label{eq: Tn hx}
(1-\epsilon)e^{n(h(X)-\epsilon)} \leq |\hat \cT^\epsilon_n| \leq e^{n(h(X)+\epsilon)}.
\end{equation}
Thus, the exponential growth rate of  the volume $|\hat \cT^\epsilon_n|$ is determined by the differential entropy $h(X)$. Extending this analogy, we postulate that the intrinsic volumes of $\{\hat \cT^\epsilon_n \}_{n \geq 1}$, or the exponential growth rate of the same sequence, yields analogs of intrinsic volumes for distributions. To ensure that typical sets have well-defined intrinsic volumes, we focus our attention on the one-sided typical sets of log-concave random variables. We formally define log-concave distributions and their typical sets, as follows:

\begin{definition}[Log-concave distribution]
The distribution $p_X$ of a real-valued random variable $X$ is said be log-concave if there exists a lower-semicontinuous convex function $\Phi: \mathbb R \to \mathbb R \cup \{ + \infty \}$ such that $p_X(x) = e^{-\Phi(x)}$ for all $x \in \mathbb R$.
\end{definition}

\begin{definition}\label{def: typical}
Let $X$ be a real-valued random variable with a log-concave density $p_X(X) := e^{-\Phi(x)}$. For each $n \geq 1$ and $\epsilon > 0$, the one-sided $\epsilon$-typical set is given by
\begin{align}
\cT^\epsilon_n &:= \left\{x^n \in \mathbb R^n ~|~ p_{X^n}(x^n) \geq e^{-n(h(X)+\epsilon)}\right\} \label{eq: def Tn 1}\\
&= \left\{x^n \in \mathbb R^n ~|~ \sum_{i=1}^n \Phi (x_i) \leq {n(h(X)+\epsilon)}\right\}.\label{eq: def Tn 4}
\end{align}
\end{definition}

\begin{remark}
For a lower semi-continuous function $\Phi$, it is possible to define a closed set as the support of $p_X$, such that $p_X$ is continuous on this support. The restriction to lower-semicontinuous functions is for technical reasons and is not crucial to our analysis. For example, the function defined by
\begin{align*}
\Phi(x) = 
\begin{cases}
0 &\text{ if } x \in (0,1),\\
+\infty &\text{ otherwise}, 
\end{cases}
\end{align*}
corresponding to the uniform distribution on $(0,1)$, is not a lower semi-continuous convex function. However, the typical sets of a uniform distribution on $(0,1)$ are essentially the same as those of a uniform distribution on $[0,1]$ Thus, we may consider the lower-semicontinuous version of the function
\begin{align*}
\Phi(x) = 
\begin{cases}
0 &\text{ if } x \in [0,1],\\
+\infty &\text{ otherwise} 
\end{cases}
\end{align*}
in our analysis of typical sets.
\end{remark}

Since $\cT^\epsilon_n$ is a level set of the lower semi-continuous function $\sum_{i=1}^n \Phi(x_i)$, it is immediate that typical sets are compact and convex. Importantly, this implies they have well-defined intrinsic volumes. Denote the intrinsic volumes of $\cT^\epsilon_n$ by $\{\mu^\epsilon_n(0), \dots, \mu^\epsilon_n(n)\}$. As noted earlier, the $n^\text{th}$ intrinsic volume is simply the volume, and its exponential growth rate is determined by the differential entropy, since inequality \eqref{eq: Tn hx} continues to hold for $\cT^\epsilon_n$. In other words,
\begin{equation}\label{eq: nose}
h(X) = \lim_{\epsilon \to 0_+} \lim_{n \to \infty} \frac{1}{n} \log \mu^\epsilon_n(n).
\end{equation}
Our analog of intrinsic volumes is obtained by taking the limit
\begin{equation}\label{eq: mu theta}
h_\theta(X) := \lim_{\epsilon \to 0_+} \lim_{n \to \infty} \frac{1}{n} \log \mu^\epsilon_n({\lfloor n\theta \rfloor}),
\end{equation}
for $\theta \in [0,1]$. We refer to this function as $h_\theta(X)$ or $h_X(\theta)$, based on which parameter is considered to be fixed in the specific context. Note that unlike earlier works, this yields an entire continuum of ``intrinsic volumes," which we also refer to as \emph{intrinsic entropies}. For $\theta = 0$ and $1$, we observe that $h_0(X) = 0$ and $h_1(X) = h(X)$. For values of $\theta \in (0,1)$, the existence of the limit as defined in equation \eqref{eq: nose} is far from obvious. For some special cases, however, we can evaluate intrinsic entropies for all values of $\theta$:
\begin{example}
Let $X \sim \cN(0, \nu)$. The one-sided $\epsilon$-typical set in this case is simply the $n$-dimensional ball of radius $n\nu(1+2\epsilon)$, denoted by $B_n\left(\sqrt{n\nu(1+2\epsilon)}\right)$. The intrinsic volumes of such a ball admit a closed-form expression \cite{klainrota}, and the $j^\text{th}$ intrinsic volume is given by 
\begin{align}
V_j\left(B_n(\sqrt{n\nu(1+2\epsilon)})\right) = {n \choose j} \frac{\omega_n}{\omega_{n-j}}\left(n\nu(1+2\epsilon)\right)^{j/2},
\end{align}
where $\omega_i$ is the volume of the $i$-dimensional unit ball. Substituting $j = \lfloor n\theta \rfloor$ and taking the desired limits yields
\begin{align}
h_X(\theta) =  H(\theta) + \frac{\theta}{2}\log 2\pi e\nu + \frac{1-\theta}{2}\log(1-\theta),
\end{align}
where $H(\theta) = -\theta\log\theta - (1-\theta)\log(1-\theta)$ is the binary entropy function.
\end{example}
\begin{example}
Let $X$ be a random variable distributed uniformly in the interval $[0, A]$. For all $\epsilon > 0$, the one-sided $\epsilon$-typical set for $X$ is the $n$-dimensional cube $[0, A]^n$. The $j^\text{th}$ intrinsic volume of this cube \cite{klainrota} is given by
\begin{align}
V_j\left([0,A]^n\right) = {n \choose j}A^j.
\end{align}
Substituting $j = \lfloor n\theta \rfloor$ and taking the desired limits gives
\begin{align}
h_X(\theta) = H(\theta) + \theta \log A.
\end{align}
\end{example}
Note that in the above examples, not only does the limit exist for all values of $\theta \in [0,1]$, but it is also continuous as a function of $\theta$. For an arbitrary log-concave distribution, such an explicit calculation is not possible, as the intrinsic volumes of its typical sets are not available in closed form. Even so, the main result of this paper establishes that similar properties hold for all log-concave distributions:
\begin{theorem}\label{thm: main}
Let $X \sim p_X$ be a log-concave random variable. Then there exists a continuous function $h_X : [0,1] \to \mathbb R$ such that for all $\theta \in [0, 1]$, 
\begin{equation}
h_X(\theta) := \lim_{\epsilon \to 0_+} \lim_{n \to \infty} \frac{1}{n} \log \mu^\epsilon_n({\lfloor n\theta \rfloor}).
\end{equation}
\end{theorem}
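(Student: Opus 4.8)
\emph{Overall strategy.} I would peel off the two limits one at a time: first fix $\epsilon>0$ and show $g_\epsilon(\theta):=\lim_{n\to\infty}\frac1n\log\mu^\epsilon_n(\lfloor n\theta\rfloor)$ exists; then let $\epsilon\to 0_+$. Continuity I would extract from convexity (Alexandrov--Fenchel) together with a pair of asymptotic bounds that pin down the endpoint behaviour.

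\emph{Existence of the inner limit.} The basic structural fact is that typical sets multiply: if $\sum_{i\le m}\Phi(x_i)\le m(h(X)+\epsilon)$ and $\sum_{m<i\le m+n}\Phi(x_i)\le n(h(X)+\epsilon)$, then $\sum_{i\le m+n}\Phi(x_i)\le(m+n)(h(X)+\epsilon)$, so $\cT^\epsilon_m\times\cT^\epsilon_n\subseteq\cT^\epsilon_{m+n}$. Intrinsic volumes are monotone under inclusion and satisfy the product rule $V_k(K\times L)=\sum_{i+j=k}V_i(K)V_j(L)$, so $\mu^\epsilon_{m+n}(k)\ge\sum_{i+j=k}\mu^\epsilon_m(i)\mu^\epsilon_n(j)$, and in particular $\log\mu^\epsilon_{m+n}(\lfloor(m+n)\theta\rfloor)\ge\log\mu^\epsilon_m(\lfloor m\theta\rfloor)+\log\mu^\epsilon_n(\lfloor n\theta\rfloor)-O(\log(m+n))$, where the error absorbs the mismatch among $\lfloor m\theta\rfloor$, $\lfloor n\theta\rfloor$, $\lfloor(m+n)\theta\rfloor$ and is harmless because Alexandrov--Fenchel bounds $\mu^\epsilon_n(j+1)/\mu^\epsilon_n(j)$ polynomially in $n$. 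Fekete's lemma then gives $g_\epsilon(\theta)=\sup_n\frac1n\log\mu^\epsilon_n(\lfloor n\theta\rfloor)\in(-\infty,+\infty]$.

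\emph{Concavity, the outer limit, and interior continuity.} The quermassintegrals $W_0(K),\dots,W_n(K)$ form a log-concave sequence by Alexandrov--Fenchel; since $V_i(K)=\binom ni W_{n-i}(K)/\omega_{n-i}$, the sequence $i\mapsto\log\big(V_i(\cT^\epsilon_n)\,\omega_{n-i}\,n^{(n-i)/2}/\binom ni\big)$ is concave (multiplying a log-concave sequence by the geometric weight $n^{(n-i)/2}$ preserves concavity). Chasing Stirling's formula, $\frac1n$ of this quantity at $i=\lfloor n\theta\rfloor$ converges to $g_\epsilon(\theta)+\psi(\theta)-H(\theta)$ with $\psi(\theta):=\tfrac{1-\theta}{2}\log\tfrac{2\pi e}{1-\theta}$ continuous on $[0,1]$, so the limit $G_\epsilon:=g_\epsilon+\psi-H$ is concave on $[0,1]$; hence $g_\epsilon$ is continuous on $(0,1)$. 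Because $\mu^\epsilon_n(j)$ increases with $\epsilon$, the $g_\epsilon$ and $G_\epsilon$ decrease as $\epsilon\downarrow0$, so $h_X(\theta):=\lim_{\epsilon\to0_+}g_\epsilon(\theta)$ exists, $G:=\lim_{\epsilon\to0_+}G_\epsilon$ is again concave, and $h_X=G-\psi+H$ is continuous on $(0,1)$, with $h_X(0)=0$ (as $V_0\equiv1$) and $h_X(1)=h(X)$ (by \eqref{eq: Tn hx}).

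\emph{The main obstacle: finiteness and the endpoints.} This is where the real work lies: crude estimates are useless because log-concave typical sets can be extremely anisotropic (for the Laplace law $\cT^\epsilon_n$ is a dilated cross-polytope, whose bounding box and circumradius overshoot by polynomial-in-$n$ factors). I would instead go through Kubota's formula $\mu^\epsilon_n(k)=c_{n,k}\,\E_E\!\big[\operatorname{vol}_k(P_E\cT^\epsilon_n)\big]$ (average over a uniform $k$-subspace $E$), the Rogers--Shephard section--projection inequality $\operatorname{vol}_k(P_E\cT^\epsilon_n)\operatorname{vol}_{n-k}(\cT^\epsilon_n\cap E^\perp)\le\binom nk\operatorname{vol}_n(\cT^\epsilon_n)$, the bound $\operatorname{vol}_n(\cT^\epsilon_n)\le e^{n(h(X)+\epsilon)}$ from \eqref{eq: Tn hx}, and---the delicate ingredient---a lower bound on the central sections $\operatorname{vol}_{n-k}(\cT^\epsilon_n\cap E^\perp)$, obtained by comparing a generic $(n-k)$-section with a coordinate one (which is exactly $\cT^\epsilon_{n-k}$, of volume $\gtrsim e^{(n-k)(h(X)-\epsilon)}$) via a Vaaler/Meyer--Pajor-type minimal-section estimate for the permutation-symmetric body $\cT^\epsilon_n$. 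This produces a finite upper bound $g_\epsilon(\theta)\le\tfrac32H(\theta)+\theta h(X)+O(\epsilon)$; together with the matching lower bound $g_\epsilon(\theta)\ge H(\theta)+\theta h(X)+\tfrac{1-\theta}{2}\log(1-\theta)-O(\epsilon)$ coming from the intrinsic-volume isoperimetric inequality and the lower bound in \eqref{eq: Tn hx}, it traps $h_X$ between two continuous functions that agree with $0$ at $\theta=0$ and with $h(X)$ at $\theta=1$, yielding finiteness everywhere and continuity at the endpoints. (As a consistency check, when $X$ is Gaussian the typical set is a ball, so the isoperimetric lower bound is attained and reproduces the value computed for the Gaussian in the first example.) The entire difficulty is thus concentrated in the section lower bound, which is the one place where log-concavity of $X$ is used beyond mere convexity of the typical sets.
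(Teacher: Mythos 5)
Your overall architecture (super-multiplicativity of $\cT^\epsilon_n$, Fekete for the inner limit, Alexandrov--Fenchel log-concavity for interior regularity, monotonicity in $\epsilon$, and a squeeze at the endpoints) is sound, and your lower bound via the intrinsic-volume isoperimetric inequality, $\tfrac1n\log\mu^\epsilon_n(\lfloor n\theta\rfloor)\ge H(\theta)+\theta(h(X)-\epsilon)+\tfrac{1-\theta}{2}\log(1-\theta)-o(1)$, is correct and matches the Gaussian value. But the proof has a genuine gap exactly where you say the ``entire difficulty'' lies: the finite upper bound $g_\epsilon(\theta)\le\tfrac32H(\theta)+\theta h(X)+O(\epsilon)$, which you need both for finiteness of $h_X$ on $(0,1)$ and for continuity at the endpoints (in particular at $\theta=0$), is made to rest on a lower bound of order $e^{(n-k)(h(X)-\epsilon)}$ for \emph{arbitrary} $(n-k)$-dimensional sections of $\cT^\epsilon_n$ (you take a max over subspaces after Rogers--Shephard, and Kubota's average would still require control of essentially all subspaces). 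You justify this only by appeal to a ``Vaaler/Meyer--Pajor-type minimal-section estimate for the permutation-symmetric body $\cT^\epsilon_n$.'' No such estimate exists in the literature at that generality: Vaaler's theorem is specific to the cube and Meyer--Pajor to $\ell_p$ balls, and both use unconditionality/central symmetry, which typical sets of a general log-concave law do not have (they are merely permutation-invariant). You give no argument for it, and it is not even clear in what form it is true; without it neither finiteness nor endpoint continuity is established, so the main theorem is not proved. A secondary, fixable soft spot: the claim that Alexandrov--Fenchel bounds the ratios $\mu^\epsilon_n(j+1)/\mu^\epsilon_n(j)$ polynomially is not a consequence of AF alone; it needs the geometric inputs that $\cT^\epsilon_n$ contains a fixed cube around $(x^*,\dots,x^*)$ and is contained in a crosspolytope of radius $O(n)$, which bound the extreme ratios $\mu^\epsilon_n(1)$ and $\mu^\epsilon_n(n-1)/\mu^\epsilon_n(n)$.

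For contrast, the paper never needs any section lower bound. Finiteness of the exponential rate (and the $\theta=0$ endpoint) comes from the containment $\cT^\epsilon_n\subseteq\cC_n$ in an explicit regular crosspolytope together with closed-form intrinsic volumes of crosspolytopes (Lemma \ref{lemma: alpha beta gamma} and Theorem \ref{thm: part b}), and the $\theta\to1$ endpoint comes from the Campi--Gronchi Loomis--Whitney-type inequality (Theorem \ref{thm: campi}) applied only to \emph{coordinate} projections of $\cT^\epsilon_n$, which are explicitly computable inflated lower-dimensional typical sets (Lemma \ref{lemma: project Tn}), yielding $\mu^\epsilon_n(\lfloor n\theta\rfloor)\le\binom{n}{\lfloor n\theta\rfloor}e^{n(h(X)+\epsilon)-(n-\lfloor n\theta\rfloor)\eta}$ with $\eta=\min_x\Phi(x)$. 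If you replace your Kubota/Rogers--Shephard step by either of these two devices (crosspolytope comparison for finiteness and $\theta=0$; coordinate-projection Loomis--Whitney for $\theta=1$), your remaining skeleton --- direct Fekete plus AF concavity instead of the paper's G\"artner--Ellis/convex-duality machinery --- would be a legitimately more elementary route to the interior statement.
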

The remainder of this paper is structured as follows: In Section \ref{section: candidate}, we produce a candidate function $-\Lambda^*$, which is a continuous function on $[0,1]$, and show in the subsequent sections that it equals $h_X$. We break up the proof of Theorem \ref{thm: main} into three parts, A, B, and C as follows: In Section \ref{section: part a}, we show part A, that $h_X(\theta) = -\Lambda^*(\theta)$ for $\theta \in (0,1)$. In Section \ref{section: part b}, we show part B, that $h_X(0) = -\Lambda^*(0)$. In Section \ref{section: part c}, we show part C, that $h_X(1) = -\Lambda^*(1)$. We conclude the paper in Section \ref{section: discussion} by discussing future work and open problems.


\section{A candidate function}\label{section: candidate}

Let $X$ be a real-valued random variable with a log-concave density $p_X(X)$, given by  $p_X(x) = e^{-\Phi(x)}$, for a convex function $\Phi: \mathbb R \to \mathbb R \cup {+\infty}$. Our first lemma establishes a ``super-multiplicative" property of the sequence of typical sets as defined in equation \eqref{eq: def Tn 1}:
\begin{lemma}\label{lemma: superconv Tn}
The sequence of sets $\{\cT^\epsilon_n\}_{n\geq 1}$  satisfies
\begin{equation}\label{eq: superconv Tn}
\cT^\epsilon_m \times \cT^\epsilon_n \subseteq \cT^\epsilon_{m+n}, \text{~~for all~~} m,n \geq 1.
\end{equation}
\end{lemma}
\begin{proof}
Let $x^m \in \cT^\epsilon_m$ and $y^n \in \cT^\epsilon_n$. We have
\begin{align*}
\sum_{i=1}^m \Phi (x_i) \leq m(h(X)+\epsilon) \text{~~ and~~} \sum_{i=1}^n \Phi (y_i) \leq n(h(X)+\epsilon).
\end{align*}
Adding the above inequalities, $z^{m+n} = (x^m, y^n)$ satisfies
\begin{align*}
\sum_{i=1}^{m+n} \Phi (z_i) \leq (m+n)(h(X)+\epsilon),
\end{align*}
which implies that $z^{m+n} \in \cT^\epsilon_{m+n}$.
\end{proof}

This super-multiplicative growth of typical sets is a \emph{geometric} property of typical sets. Therefore, it is unsurprising that one may glean some information about the intrinsic volumes of typical sets from such a geometric property. We state the key results that we need, the full details of which may be found in Klain \& Rota \cite{klainrota}:
\begin{theorem}[Properties of intrinsic volumes]\label{thm: iv props}
Let $X$ and $Y$ be any compact, convex sets. Then the following statements hold:
\begin{enumerate}
\item[(a)]
If $X \subseteq Y$, then $V_j(X) \leq V_j(Y)$ for all $j \geq 0$; i.e.,  intrinsic volumes are monotonic with respect to inclusion.
\item[(b)]
The intrinsic volumes of the Cartesian product $X \times Y$ are obtained by convolving the intrinsic volumes of $X$ and $Y$; i.e., for all $j \geq 0$,
\begin{align}
V_j(X \times Y) = \sum_{i=0}^j V_i(X)V_{j-i}(Y).
\end{align}
\end{enumerate}
\end{theorem}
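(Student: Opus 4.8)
Both statements are classical facts of convex/integral geometry, and I would prove them from the two tools the excerpt already makes available: the integral-geometric (Kubota/Crofton) representation of $V_j$ as a mean of $j$-dimensional projection volumes, for part (a), and Steiner's formula together with a short distance-function computation, for part (b). Throughout one uses that $V_j$ is \emph{intrinsic}, i.e.\ invariant under isometric embeddings into higher-dimensional Euclidean spaces, so it may be computed in whatever ambient $\mathbb R^N$ is convenient.

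\textbf{Part (a).} Embed $X\subseteq Y$ in a common $\mathbb R^n$ and invoke the projection form of Kubota's/Crofton's formula: there is a positive constant $c_{n,j}$ depending only on $n,j$ such that
$$V_j(K)=c_{n,j}\int_{\mathrm{Gr}(n,j)}\operatorname{vol}_j(P_L K)\,dL,$$
where $\mathrm{Gr}(n,j)$ is the Grassmannian of $j$-dimensional linear subspaces with its rotation-invariant probability measure and $P_L K$ is the orthogonal projection of $K$ onto $L$. For every subspace $L$ the inclusion $X\subseteq Y$ gives $P_L X\subseteq P_L Y$, hence $\operatorname{vol}_j(P_L X)\le\operatorname{vol}_j(P_L Y)$; integrating against a positive measure and multiplying by $c_{n,j}>0$ yields $V_j(X)\le V_j(Y)$. (For the extreme indices this is even more elementary — $j=n$ is monotonicity of Lebesgue measure, and $j=n-1$ follows because the nearest-point projection onto a convex body is $1$-Lipschitz and so does not increase surface area — but the projection formula disposes of all $j$ uniformly.)

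\textbf{Part (b).} Here I would argue directly from Steiner's formula. Write $X\subseteq\mathbb R^m$, $Y\subseteq\mathbb R^n$ and view $X\times Y\subseteq\mathbb R^m\times\mathbb R^n=\mathbb R^{m+n}$. The key geometric observation is that, with $d_X(p)=\mathrm{dist}(p,X)$ and $d_Y(q)=\mathrm{dist}(q,Y)$, the nearest point of $X\times Y$ to $(p,q)$ is the pair consisting of the nearest point of $X$ to $p$ and the nearest point of $Y$ to $q$, so $(p,q)\in(X\times Y)\oplus t B_{m+n}$ if and only if $d_X(p)^2+d_Y(q)^2\le t^2$. By Fubini, $|(X\times Y)\oplus tB_{m+n}|=\iint_{a^2+b^2\le t^2}d\mu_X(a)\,d\mu_Y(b)$, where $\mu_X$ is the measure on $[0,\infty)$ with $\mu_X([0,s])=|X\oplus sB_m|=\sum_{i=0}^m V_{m-i}(X)\,\omega_i\,s^i$ by Steiner's formula — that is, an atom of mass $V_m(X)$ at $0$ plus the density $\sum_{i\ge1} i\,V_{m-i}(X)\,\omega_i\,s^{i-1}$ — and similarly for $\mu_Y$. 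Substituting these expansions, evaluating the elementary polar integral $\iint_{a^2+b^2\le t^2,\,a,b\ge 0} a^{i-1}b^{k-1}\,da\,db$ via the Beta function, and simplifying with the identities $i\,\omega_i\,\Gamma(i/2)=2\pi^{i/2}$ and $\omega_{i+k}=\pi^{(i+k)/2}/\Gamma((i+k)/2+1)$, all the constants collapse and one gets
$$|(X\times Y)\oplus tB_{m+n}|=\sum_{i=0}^m\sum_{k=0}^n V_{m-i}(X)\,V_{n-k}(Y)\,\omega_{i+k}\,t^{\,i+k}.$$
Comparing the coefficient of $t^{\,\ell}$ with Steiner's formula applied to $X\times Y$, namely $|(X\times Y)\oplus tB_{m+n}|=\sum_{\ell=0}^{m+n}V_{m+n-\ell}(X\times Y)\,\omega_\ell\,t^\ell$, gives $V_{m+n-\ell}(X\times Y)=\sum_{i+k=\ell}V_{m-i}(X)V_{n-k}(Y)$, which under the substitution $j=m+n-\ell$ is precisely $V_j(X\times Y)=\sum_{i=0}^j V_i(X)V_{j-i}(Y)$, with the convention $V_i(X)=0$ for $i>m$ and $V_k(Y)=0$ for $k>n$.

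\textbf{Main obstacle.} Part (a) is immediate once the projection formula is invoked; the only care needed is to use the \emph{projection} (not the slicing) form, where the monotonicity $P_LX\subseteq P_LY$ is transparent. The substance is Part (b): conceptually, the step of recognizing that the $(m+n)$-dimensional parallel body of a product is governed by the single constraint $d_X^2+d_Y^2\le t^2$; technically, the bookkeeping that shows the $\Gamma$- and $\omega$-factors produced by the polar integration recombine exactly into $\omega_{i+k}$ — this constant-chasing is where an error is most likely to enter. An alternative that sidesteps it is to use Hadwiger's characterization (cited in the introduction): for fixed $Y$, the map $X\mapsto V_j(X\times Y)$ is a convex-continuous, rigid-motion-invariant valuation, hence equals $\sum_i a_{j,i}(Y)\,V_i(X)$, and the coefficients are then pinned down to $a_{j,i}(Y)=V_{j-i}(Y)$ by evaluating both sides on boxes, where everything is explicit. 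I would nonetheless prefer the direct Steiner computation, since it is self-contained given what the excerpt already provides.
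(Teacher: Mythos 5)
Your proposal is correct, but note that the paper does not prove this theorem at all: it is quoted as a classical fact with the details deferred to Klain and Rota, so any sound derivation is acceptable here. Your part (a) is the standard argument — Kubota's projection formula plus $P_L X \subseteq P_L Y$ and positivity of the Grassmannian measure — and is exactly how one would fill in the citation. Your part (b) is a genuine, self-contained derivation rather than the valuation-theoretic route a Klain--Rota-style treatment would take: the key identity $d_{X\times Y}(p,q)^2 = d_X(p)^2 + d_Y(q)^2$ is right (nearest points in a product are pairs of nearest points), the pushforward measure $\mu_X$ correctly carries an atom of mass $V_m(X)$ at $0$ plus the density $\sum_{i\ge 1} i\,V_{m-i}(X)\,\omega_i\,s^{i-1}$ from Steiner's formula, and the constant-chasing does close: with $i\,\omega_i\,\Gamma(i/2)=2\pi^{i/2}$ one gets
\begin{equation*}
\frac{i\,k\,\omega_i\,\omega_k}{2(i+k)}\,B\!\left(\tfrac{i}{2},\tfrac{k}{2}\right)
= \frac{\pi^{(i+k)/2}}{\Gamma\!\left(\tfrac{i+k}{2}+1\right)} = \omega_{i+k},
\end{equation*}
and the mixed atom--density terms supply precisely the $i=0$ and $k=0$ contributions, so comparing coefficients of $t^{\ell}$ with Steiner's formula for $X\times Y$ yields the convolution identity. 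The Hadwiger-based alternative you sketch also works (for fixed $Y$ the map $X\mapsto V_j(X\times Y)$ is a continuous, rigid-motion-invariant valuation, and boxes pin down the coefficients), and is closer in spirit to how such product formulas are usually established in the cited literature; what your direct computation buys is independence from Hadwiger's theorem, at the cost of the Gamma-function bookkeeping you have in any case carried out correctly.
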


Lemma \ref{lemma: superconv Tn} and Theorem \ref{thm: iv props} immediately imply the following result:

\begin{lemma}\label{lemma: superconv mu}
For $\epsilon > 0$, let the one-sided $\epsilon$-typical sets of a log-concave random variable be $\{\cT^\epsilon_n\}_{n \geq 1}$, and let the sequence of intrinsic volumes of these typical sets be $\{\mu^\epsilon_n(\cdot)\}_{n \geq 1}$. Then the sequence of intrinsic volumes  satisfies the following:
\begin{equation}\label{eq: superconv mu}
(\mu^\epsilon_m \star \mu^\epsilon_n)(i) \leq \mu^\epsilon_{m+n}(i), ~~\forall m, n\geq 1, \text{ and } \forall i \geq 0,
\end{equation}
where $``\star"$ stands for function convolution.
\end{lemma}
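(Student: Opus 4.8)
The plan is to combine the geometric inclusion of Lemma~\ref{lemma: superconv Tn} with the two structural facts about intrinsic volumes recorded in Theorem~\ref{thm: iv props}; the statement then falls out directly, which is why the paper calls it immediate. Fix $m, n \geq 1$ and an index $i \geq 0$. First I would apply part~(b) of Theorem~\ref{thm: iv props} to the Cartesian product $\cT^\epsilon_m \times \cT^\epsilon_n$, which gives
$$V_i(\cT^\epsilon_m \times \cT^\epsilon_n) = \sum_{k=0}^{i} V_k(\cT^\epsilon_m)\, V_{i-k}(\cT^\epsilon_n) = \sum_{k=0}^{i} \mu^\epsilon_m(k)\, \mu^\epsilon_n(i-k) = (\mu^\epsilon_m \star \mu^\epsilon_n)(i),$$
where we adopt the convention that $\mu^\epsilon_m(k) = 0$ whenever $k < 0$ or $k > m$ (and likewise for $\mu^\epsilon_n$), so that the discrete convolution is well-defined for every $i \geq 0$ and agrees with the truncated sum above.

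Next I would invoke Lemma~\ref{lemma: superconv Tn}, namely $\cT^\epsilon_m \times \cT^\epsilon_n \subseteq \cT^\epsilon_{m+n}$, together with the monotonicity of intrinsic volumes under set inclusion from part~(a) of Theorem~\ref{thm: iv props}, to obtain
$$V_i(\cT^\epsilon_m \times \cT^\epsilon_n) \leq V_i(\cT^\epsilon_{m+n}) = \mu^\epsilon_{m+n}(i).$$
Chaining the displayed equality with this inequality yields $(\mu^\epsilon_m \star \mu^\epsilon_n)(i) \leq \mu^\epsilon_{m+n}(i)$, which is precisely \eqref{eq: superconv mu}. For the degenerate range $i > m+n$ both sides vanish, so the bound holds trivially there as well.

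I do not expect any genuine obstacle here: all the mathematical content is already packaged into Lemma~\ref{lemma: superconv Tn} and the classical product formula for intrinsic volumes. The only point requiring a little care is notational bookkeeping --- making explicit that ``$\star$'' denotes convolution of the finitely supported intrinsic-volume sequences (padded by zeros), and checking that the summation range $0 \le k \le i$ in the product formula lines up with the definition of that convolution.
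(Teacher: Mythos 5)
Your argument is correct and is exactly the route the paper takes: the convolution identity from Theorem~\ref{thm: iv props}(b) applied to $\cT^\epsilon_m \times \cT^\epsilon_n$, combined with the inclusion of Lemma~\ref{lemma: superconv Tn} and monotonicity from Theorem~\ref{thm: iv props}(a), is precisely why the paper states the lemma as an immediate consequence. Your bookkeeping of the zero-padding convention matches the paper's remark following the lemma.
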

\begin{remark}
Note that in the above statement, we define $\mu^\epsilon_n(j) = 0$ for $j \geq n+1$ to obtain the function $\mu^\epsilon_n: \mathbb Z_+ \to \mathbb R$ starting from the finite sequence $\{\mu^\epsilon_n(0), \dots, \mu^\epsilon_n(n)\}$. This is also geometrically accurate, since higher-dimensional intrinsic volumes of a lower-dimensional set are equal to 0.
\end{remark}

The analysis of sequences that satisfy property \eqref{eq: superconv mu} is central to our work, and we devote the next subsection to this topic.

\subsection{Super-convolutive sequences}\label{subsection: superconv}
We define sequences that satisfy a property such as inequality \eqref{eq: superconv mu} as \emph{super-convolutive sequences}:

\begin{definition} \label{def: supconv}
For each $n \geq 1$, let $\mu_n: \mathbb Z_+ \to \mathbb R_+$ be such that $\mu_n(i) = 0$, for all $i \geq n + 1$. The sequence $\{\mu_n(\cdot)\}_{n\geq 1}$ is super-convolutive if
\begin{equation}\label{eq: def superconv mu}
(\mu_m \star \mu_n) (i)\leq \mu_{m+n}(i), ~~\forall m, n \geq 1, \quad \text{and}~~ \forall i \geq 0.
\end{equation}
\end{definition}

We now turn to the analysis of super-convolutive sequences, with an emphasis on their convergence properties. First, we develop some notation and make some small observations:

\begin{lemma}\label{lemma: lambda Tn}
Let $\{\mu_n(\cdot)\}$ be as in Definition \ref{def: supconv}. Define the sequence of generating functions $\{G_n\}_{n\geq 1}$ and $\{g_n\}_{n \geq 1}$ as follows:
\begin{equation}
G_n(t) := \log \sum_{j=0}^n \mu_n(j)e^{jt}, \text{~~and~~} g_n(t) := \frac{G_n(t)}{n}.
\end{equation}
The following results hold:
\begin{enumerate}
\item[(i)] 
For all $m, n \geq 1$ and for all $t$, 
\begin{equation}\label{eq: supc condition on G} 
G_m(t) + G_n(t) \leq G_{m+n}(t).
\end{equation}

\item[(ii)]
For all $t$, the following limit, denoted by $\Lambda(t)$, exists (although it may be $+\infty$):
\begin{equation}\label{eq: def: lambda}
\Lambda(t) := \lim_{n \to \infty} g_n(t).
\end{equation}
\end{enumerate}
\end{lemma}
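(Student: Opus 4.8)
The plan is to reduce everything to Fekete's subadditive lemma. First I would prove part (i). Fix $t \in \mathbb{R}$. By Lemma~\ref{lemma: superconv mu} (or directly Definition~\ref{def: supconv}), we have $\mu_{m+n}(i) \geq (\mu_m \star \mu_n)(i) = \sum_{k=0}^{i} \mu_m(k)\mu_n(i-k)$ for every $i \geq 0$, with all terms nonnegative. Multiplying by $e^{it}$ and summing over $i \geq 0$ gives
\begin{align*}
e^{G_{m+n}(t)} = \sum_{i \geq 0} \mu_{m+n}(i) e^{it} \geq \sum_{i \geq 0} \sum_{k=0}^{i} \mu_m(k) \mu_n(i-k) e^{it} = \left(\sum_{k \geq 0}\mu_m(k)e^{kt}\right)\left(\sum_{j \geq 0}\mu_n(j)e^{jt}\right) = e^{G_m(t)} e^{G_n(t)},
\end{align*}
where the rearrangement of the double sum is justified because every summand is nonnegative (Tonelli), and the sums are finite because $\mu_n$ has finite support. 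Taking logarithms yields $G_{m+n}(t) \geq G_m(t) + G_n(t)$, which is \eqref{eq: supc condition on G}. Note $G_n(t)$ is always finite (a finite sum of finite nonnegative terms; it is $-\infty$ only in the degenerate case where all $\mu_n(j)$ vanish, which I would either exclude or handle by noting the inequality still holds in $[-\infty,\infty)$).

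For part (ii), fix $t$ and set $a_n := -G_n(t)$. By part (i), $a_{m+n} = -G_{m+n}(t) \leq -G_m(t) - G_n(t) = a_m + a_n$, so $(a_n)_{n\geq 1}$ is subadditive. By Fekete's lemma, $\lim_{n\to\infty} a_n/n$ exists in $[-\infty, \infty)$ and equals $\inf_{n\geq 1} a_n/n$. Therefore $\lim_{n\to\infty} g_n(t) = \lim_{n\to\infty} G_n(t)/n = -\lim_{n\to\infty} a_n/n$ exists in $(-\infty, +\infty]$, which is exactly $\Lambda(t) = \sup_{n\geq 1} g_n(t)$. This is the claimed limit.

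I do not expect any real obstacle here; the lemma is essentially a packaging of super-convolutivity into a super-additivity statement at the level of generating functions, followed by an invocation of Fekete. The only points requiring a little care are: (a) confirming that the generating-function sums converge so that the manipulations are legitimate — guaranteed by the finite support of each $\mu_n$ — and (b) keeping track of the fact that $\Lambda(t)$ may genuinely be $+\infty$ (for instance for large $t$ if the $\mu_n(n)$ grow fast enough), so the limit should be stated in the extended reals, which the lemma already anticipates. A remark worth including is that since $g_n(t) = G_n(t)/n$ and the $a_n$ are subadditive, in fact $\Lambda(t) = \sup_n g_n(t)$, which will presumably be convenient later when analyzing $\Lambda^*$.
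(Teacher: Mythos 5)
Your proposal is correct and follows exactly the paper's route: part (i) is the standard generating-function computation showing that super-convolutivity of $\{\mu_n\}$ gives super-additivity of $G_n(t)$, and part (ii) is a direct invocation of Fekete's lemma (the paper's proof is just a terser statement of the same two steps). The extra observations you record --- finiteness via finite support and the identity $\Lambda(t)=\sup_n g_n(t)$ --- are accurate and harmless.
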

\begin{proof}[Proof of Lemma \ref{lemma: lambda Tn}]
Statement (i) is immediate from inequality \eqref{eq: def superconv mu}, whereas (ii) follows from a direct application of the super-additive convergence theorem, also known as Fekete's Lemma \cite{steele1997probability}.
\end{proof}

A super-convolutive sequence $\{\mu_n\}$, without any other conditions imposed on it apart from condition \eqref{eq: def superconv mu}, can blow up to infinity arbitrarily fast. To prevent such scenarios, we define a \emph{proper super-convolutive sequence} as follows:

\begin{definition}\label{def: prop supconv}
A super-convolutive sequence as in Definition \ref{def: supconv} is said to be \emph{proper} if it satisfies the following conditions:
\begin{enumerate}
\item[(i)] For all $n$, we have $\mu_n(n) >0$ and $\mu_n(0)>0$. 

\item[(ii)] The limit $\beta := \lim_{n \to \infty} \frac{1}{n} \log \mu_n(0)$ is finite.

\item[(iii)] The value $\gamma := \Lambda(0)$ is finite.

\end{enumerate}
\end{definition}
\begin{remark}\label{remark: prop supconv}
Note that $(\mu_m \star \mu_n)(0) = \mu_n(0)\mu_m(0)$. Thus, the limit in condition (ii) is assured to exist by super-additivity. Note also that $\Lambda(0)$ has a simple expression:
$$\Lambda(0) = \lim_{n \to \infty} \frac{1}{n} \log \left( \sum_{i=0}^n \mu_n(i)\right).$$ The limit
$$\alpha := \lim_{n \to \infty} \frac{1}{n} \log \mu_n(n)$$
is also assured to exist, since $\mu_{m+n}(m+n) = \mu_m(m)\mu_n(n)$, and condition (iii) ensures that it is finite.
\end{remark}

The limit function $\Lambda$ of a proper super-convolutive sequence satisfies a number of desirable properties, including $\Lambda(t) < \infty$, for all $t$. We gather some of these properties in the following lemma:

\begin{lemma}[Proof in Appendix \ref{proof: lemma: lambda props}]\label{lemma: lambda props}
For a proper super-convolutive sequence, the limit function $\Lambda$, as defined in equation \eqref{eq: def: lambda}, satisfies the following properties:
\begin{enumerate}
\item[(i)] $g_1(t) \leq \Lambda(t) \leq \max(\gamma, t +\gamma)$, for all $t$.

\item[(ii)] $\Lambda$ is convex and monotonically increasing.
\end{enumerate}
\end{lemma}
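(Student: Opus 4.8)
The plan is to prove the two bullets of Lemma \ref{lemma: lambda props} directly from the definition $\Lambda(t) = \lim_n g_n(t)$ together with the superadditivity relation \eqref{eq: supc condition on G} and the properness conditions (i)--(iii).

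\textbf{Step 1: the lower bound $g_1(t) \le \Lambda(t)$.} Iterating \eqref{eq: supc condition on G} gives $G_n(t) \ge n G_1(t)$ for all $n \ge 1$, hence $g_n(t) = G_n(t)/n \ge G_1(t) = g_1(t)$ for every $n$; letting $n \to \infty$ yields $\Lambda(t) \ge g_1(t)$. (In fact, for a proper sequence $\mu_1(1)>0$ and $\mu_1(0)>0$, so $g_1(t)$ is finite-valued, though that is not strictly needed here.)

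\textbf{Step 2: the upper bound $\Lambda(t) \le \max(\gamma, t+\gamma)$.} The key estimate is a crude bound on $G_n(t)$: write $G_n(t) = \log \sum_{j=0}^n \mu_n(j) e^{jt}$ and bound $e^{jt} \le \max(1, e^{nt}) = e^{\max(0,nt)}$ for $0 \le j \le n$, so that
\begin{equation*}
G_n(t) \le \max(0, nt) + \log \sum_{j=0}^n \mu_n(j) = \max(0,nt) + G_n(0).
\end{equation*}
Dividing by $n$ and sending $n \to \infty$, using $g_n(0) \to \Lambda(0) = \gamma$ (finite, by properness (iii)) and $\frac{1}{n}\max(0,nt) = \max(0,t)$, gives $\Lambda(t) \le \max(0,t) + \gamma = \max(\gamma, t+\gamma)$. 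In particular $\Lambda(t) < \infty$ for all $t$, which is the point of introducing ``proper'' sequences.

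\textbf{Step 3: convexity.} Each $g_n$ is convex in $t$: $G_n(t)$ is the log of a sum of exponentials $\mu_n(j) e^{jt}$ with nonnegative coefficients, which is a standard log-convexity fact (it is, up to normalization, a cumulant generating function; convexity follows from H\"older's inequality applied to $\sum_j \mu_n(j) e^{j(\lambda s + (1-\lambda)t)} = \sum_j \big(\mu_n(j)e^{js}\big)^{\lambda}\big(\mu_n(j)e^{jt}\big)^{1-\lambda} \le \big(\sum_j \mu_n(j)e^{js}\big)^\lambda \big(\sum_j \mu_n(j)e^{jt}\big)^{1-\lambda}$, then take logs). A pointwise limit of convex functions is convex, so $\Lambda$ is convex; it is finite everywhere by Step 2, hence in particular continuous on $\mathbb{R}$.

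\textbf{Step 4: monotonicity.} Since $\mu_n(j) \ge 0$, the map $t \mapsto \sum_{j=0}^n \mu_n(j) e^{jt}$ is nondecreasing in $t$ (every summand is), so each $g_n$ is nondecreasing, and therefore so is the pointwise limit $\Lambda$.

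None of these steps presents a serious obstacle; the only place any care is needed is Step 2, where one must be sure the crude exponential bound $e^{jt}\le e^{\max(0,nt)}$ is uniform over $0\le j\le n$ and that properness condition (iii) is exactly what licenses passing $g_n(0)\to\gamma$ to a finite limit — this is the conceptual heart of why the definition of ``proper'' was set up the way it was. The convexity argument in Step 3 is routine but worth spelling out since it is reused implicitly when one later takes the Legendre transform $-\Lambda^*$ to build the candidate function.
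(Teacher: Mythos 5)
Your proposal is correct and follows essentially the same route as the paper: iterate superadditivity for the lower bound, bound $e^{jt}\le e^{\max(0,nt)}$ (the paper just splits into the cases $t\ge 0$ and $t\le 0$) together with finiteness of $\gamma$ for the upper bound, and pass convexity and monotonicity of the $g_n$ to the pointwise limit. The only difference is that you spell out the log-sum-exp convexity and term-by-term monotonicity that the paper takes for granted, which is harmless.
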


Our next lemma details some important properties of $\Lambda^*$, the convex conjugate \cite{boyd2009convex} of $\Lambda$:
\begin{lemma}[Proof in Appendix \ref{proof: lemma: lambda star props}] \label{lemma: lambda star props}
Let $\Lambda$ be as in Lemma \ref{lemma: lambda props}, and let $\Lambda^*$ be its convex conjugate. Then $\Lambda^*$ satisfies the following properties:
\begin{enumerate}
\item[(i)]
The domain of $\Lambda^*$ is $[0,1]$, and $\Lambda^*$ is continuous and convex on its domain.
\item[(ii)]
For $t \not\in \{0,1\}$, the value of $\Lambda^*(t)$ is given by the limit
\begin{equation}
\Lambda^*(t) = \lim_{n \to \infty} g_n^*(t).
\end{equation}
\item[(iii)] For $t \in \{0, 1\}$, we have the inequality
\begin{equation}
\Lambda^*(t) \leq  \lim_{n \to \infty} g_n^*(t).
\end{equation}
\end{enumerate}
\end{lemma}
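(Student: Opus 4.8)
# Proof Proposal for Lemma \ref{lemma: lambda star props}

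The plan is to analyze $\Lambda^*$ by exploiting the structure of $\Lambda$ established in Lemma \ref{lemma: lambda props}, namely that $\Lambda$ is convex, increasing, finite everywhere, and squeezed between $g_1(t)$ and $\max(\gamma, t+\gamma)$. The key observation is that the asymptotic slopes of $\Lambda$ at $\pm\infty$ control the domain of $\Lambda^*$.

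\textbf{Part (i): Domain, continuity, convexity.} First I would compute $\lim_{t \to -\infty} \Lambda(t)$ and $\lim_{t \to +\infty} (\Lambda(t) - t)$. Using the lower bound $g_1(t) = G_1(t) = \log(\mu_1(0) + \mu_1(1)e^t)$ (properness gives $\mu_1(0), \mu_1(1) > 0$), we get $\Lambda(t) \to \log\mu_1(0) > -\infty$ as $t \to -\infty$, and $\Lambda(t) - t \to \log\mu_1(1) > -\infty$ as $t \to +\infty$. Combined with the upper bound $\Lambda(t) \le \max(\gamma, t+\gamma)$, which forces $\Lambda(t) \le \gamma$ for $t \le 0$ and $\Lambda(t) - t \le \gamma$ for $t \ge 0$, both limits are finite. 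Since $\Lambda$ is convex, its right-derivative $\Lambda'$ is nondecreasing, and these two finite limits say that $\Lambda'(t) \to 0$ as $t \to -\infty$ and $\Lambda'(t) \to 1$ as $t \to +\infty$. Monotonicity of $\Lambda$ gives $\Lambda' \ge 0$, so the range of slopes of $\Lambda$ is a subset of $[0,1]$ containing all interior values; hence $\operatorname{dom}\Lambda^* = \{s : \sup_t (st - \Lambda(t)) < \infty\} = [0,1]$. Since $\Lambda^*$ is a supremum of affine functions it is convex and lower-semicontinuous; a finite convex function on a closed interval is continuous on that interval (continuity at the endpoints $0$ and $1$ needs the standard fact that a proper closed convex function on $\real$ restricted to a segment in its domain is continuous there, which holds since $\Lambda^*$ is finite at $0$ and $1$ — finiteness there follows from the two bounds: $\Lambda^*(0) = \sup_t(-\Lambda(t)) = -\log\mu_1(0) \wedge (\text{something finite})$, more carefully $\Lambda^*(0) \le -\inf_t \Lambda(t)$ which is finite since $\Lambda(t) \to \log\mu_1(0)$, and similarly $\Lambda^*(1) = \sup_t(t - \Lambda(t))$ is finite).

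\textbf{Part (ii): Interior values as limits of $g_n^*$.} For $t \in (0,1)$ I would show $\Lambda^*(t) = \lim_n g_n^*(t)$. Since $g_n \to \Lambda$ pointwise and $g_n$ is convex (as a normalized log-sum-exp, $G_n$ is convex, so $g_n$ is too), convex conjugates behave well: pointwise convergence of convex functions on $\real$ implies convergence of conjugates at points of the interior of $\operatorname{dom}\Lambda^*$. Concretely, from $g_m + g_n \le g_{m+n}$ (rescaled: $\frac{m}{m+n}g_m(t) + \frac{n}{m+n}g_n(t) \le g_{m+n}(t)$ is \emph{not} quite what super-additivity gives — rather $G_m + G_n \le G_{m+n}$ gives $g_{m+n}$ is a weighted-type combination), one deduces $g_n^*$ is monotone along subsequences $n = 2^k$, and in general $g_n \ge g_1$ forces $g_n^* \le g_1^*$, while $g_n \to \Lambda$ from below (by super-additivity $g_n(t) \le \Lambda(t)$) forces $g_n^* \ge \Lambda^*$. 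So $g_n^*(t) \ge \Lambda^*(t)$ always, and $\limsup_n g_n^*(t) \le \Lambda^*(t)$ for $t$ in the interior follows from the fact that for interior $t$, $\Lambda^*$ is continuous and $\Lambda$ is the pointwise limit, letting us pick a near-optimal $t_0$ in the subdifferential and estimate $g_n^*(t) \le t t_0 - g_n(t_0) \to t t_0 - \Lambda(t_0) = \Lambda^*(t)$.

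\textbf{Part (iii): Endpoints.} For $t \in \{0,1\}$ the inequality $\Lambda^*(t) \le \lim_n g_n^*(t)$ is exactly the ``$g_n^*(t) \ge \Lambda^*(t)$'' direction from Part (ii), which held for \emph{all} $t$ (it only used $g_n \le \Lambda$ pointwise, i.e. super-additivity, plus monotonicity of conjugation). The reverse inequality may fail at the endpoints because conjugation need not commute with pointwise limits on the boundary of the domain — which is precisely why the lemma only claims ``$\le$'' there; I would note that $\lim_n g_n^*(0)$ and $\lim_n g_n^*(1)$ exist (the sequences $g_n^*(0)$ and $g_n^*(1)$ are monotone: from $G_m + G_n \le G_{m+n}$ one gets $G_n^*$ is sub-additive in $n$, hence $g_n^*(0) = \frac{1}{n}G_n^*(0) \cdot$(adjust for the normalization) converges by Fekete again).

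\textbf{Main obstacle.} The delicate point is Part (ii): proving $\limsup_n g_n^*(t) \le \Lambda^*(t)$ for interior $t$. Pointwise convergence of convex functions does give locally uniform convergence on open sets, hence convergence of conjugates on the interior of the limiting domain — but I need to handle the fact that each $g_n$ has a bounded domain-of-finiteness only in the sense that its conjugate $g_n^*$ is supported on $[0, n]/n \cdot(\ldots)$; actually $g_n^*$ is supported on an interval that could shrink, so I must confirm $t \in (0,1)$ lies in $\operatorname{dom} g_n^*$ for all large $n$, which uses $\mu_n(0), \mu_n(n) > 0$ from properness to guarantee the slopes of $G_n$ range over all of $(0,n)$. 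Making the interchange of limit and conjugate rigorous, and isolating exactly where it breaks at $t \in \{0,1\}$, is the crux; everything else is bookkeeping with the bounds from Lemma \ref{lemma: lambda props}.
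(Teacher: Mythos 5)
Parts (i) and (iii) of your plan are sound. Your (i) is essentially the paper's argument: the sandwich $g_1 \leq \Lambda \leq \max(\gamma, t+\gamma)$ forces $\operatorname{dom}\Lambda^* = [0,1]$, and convexity plus lower semicontinuity plus finiteness at the endpoints gives continuity on the closed interval. Your (iii) is in fact cleaner than the paper's: since Fekete gives $\Lambda = \sup_n g_n$, order reversal of conjugation gives $g_n^* \geq \Lambda^*$ everywhere, and $g_n^*(0) = -\tfrac1n \log\mu_n(0)$, $g_n^*(1) = -\tfrac1n\log\mu_n(n)$ converge by super-additivity, so $\Lambda^*(t) \leq \lim_n g_n^*(t)$ at $t \in \{0,1\}$ follows at once (the paper instead deduces this from convexity of the limit function together with continuity of $\Lambda^*$ at the endpoints). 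Incidentally, your hedge about super-additivity is unnecessary: $G_m + G_n \leq G_{m+n}$ does give $g_{m+n} \geq \tfrac{m}{m+n}g_m + \tfrac{n}{m+n}g_n$.

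The genuine gap is in part (ii), precisely where you locate the crux, and your sketch of that step fails as written. The estimate $g_n^*(t) \leq t t_0 - g_n(t_0)$ for a \emph{fixed} near-optimal $t_0$ is backwards: $g_n^*(t)$ is a supremum over $s$, so evaluating at a fixed point yields a lower bound, not an upper bound. To get $\limsup_n g_n^*(t) \leq \Lambda^*(t)$ you must take $t_0 = t_n$, the maximizer of $s \mapsto ts - g_n(s)$, and the whole difficulty is then to confine the $t_n$ to a compact interval independent of $n$. Your ``main obstacle'' paragraph worries instead about $\operatorname{dom} g_n^*$ shrinking, which is not the issue ($\mu_n(0), \mu_n(n) > 0$ gives $\operatorname{dom} g_n^* = [0,1]$ for every $n$); the missing quantitative step is the uniform confinement of the maximizers. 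The paper obtains it from the same sandwich used in (i): since $g_n^*(x) \geq -g_n(0) \geq -\gamma$, while for $t > \frac{\gamma - \log\mu_1(1)}{1-x}$ one has $xt - g_n(t) \leq xt - g_1(t) < -\gamma$, and similarly for $t < \frac{\log\mu_1(0)-\gamma}{x}$, every $t_n$ (and the maximizer for $\Lambda$) lies in $I_x := \left[\frac{\log\mu_1(0)-\gamma}{x}, \frac{\gamma-\log\mu_1(1)}{1-x}\right]$; then pointwise convergence of the convex $g_n$ to the continuous $\Lambda$ on the compact $I_x$ is uniform, and the conjugate values transfer. You correctly name locally uniform convergence of convex functions as the tool, but without this confinement estimate (or an equivalent epi-convergence argument) the interior equality in (ii) — the substance of the lemma — is not established.
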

\begin{remark}\label{remark: lambda star props}
It is easy to check that the domain of $g_n^*$ is $[0,1]$ and that
$$\lim_{n \to \infty} g_n^*(0) = -\lim_{n\to \infty} \frac{1}{n} \log \mu_n(0),$$
and 
$$\lim_{n \to \infty} g_n^*(1) = -\lim_{n\to \infty} \frac{1}{n} \log \mu_n(n),$$
and that $g_n^*(x) \ge -\gamma$ for all $x \in [0,1]$. Using the notation from Definition \ref{def: prop supconv} and Remark \ref{remark: prop supconv}, statement (iii) is equivalent to saying that $\Lambda^*(0) \leq -\beta$ and $\Lambda^*(1) \leq -\alpha.$ We also note that it is possible to obtain a strict inequality statement (ii), and an example is provided in Appendix \ref{example: strict inequality}.
\end{remark}

We are now in a position to state our main result concerning convergence of proper super-convolutive sequences:

\begin{theorem}\label{thm: lambda Tn}	
Let $\{\mu_n\}$ be as in Definition \ref{def: prop supconv}. Define a sequence of measures $\{\mu_{n/n}\}_{n\geq 1}$ supported on $[0,1]$ by assigning point masses as follows:
$$\mu_{n/n}\left(\left\{\frac{j}{n}\right\}\right) := \mu_n(j),  \text{~~for~~}  0 \leq j \leq n.$$
 Let $I \subseteq \mathbb{R}$ be a closed set and $F \subseteq \mathbb{R}$ be an open set. Then 
\begin{align}
\limsup_{n \to \infty} \frac{1}{n} \log \mu_{n/n}(I) &\leq - \inf_{x \in I} \Lambda^*(x), \qquad \text{and}\label{eq: UBB}\\
\liminf_{n \to \infty} \frac{1}{n} \log \mu_{n/n}(F) &\geq - \inf_{x \in F} \Lambda^*(x)\label{eq: LBB}.
\end{align}
\end{theorem}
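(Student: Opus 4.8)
I would read Theorem~\ref{thm: lambda Tn} as a large-deviation principle of Gärtner--Ellis type for the sequence of (sub-probability) measures $\mu_{n/n}$ on $[0,1]$, with rate function $\Lambda^*$, and I would prove the two bounds by quite different arguments.

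\textbf{The upper bound \eqref{eq: UBB}.} Since every $\mu_{n/n}$ is supported on $[0,1]$ and $\Lambda^*\equiv+\infty$ off $[0,1]$, we may replace $I$ by the compact set $I\cap[0,1]$ (if this is empty both sides are $-\infty$). The only input is the exponential Markov (Chernoff) bound: for $t\ge0$,
\begin{equation*}
\mu_{n/n}([a,1]) \;\le\; \int e^{n t (y-a)}\,\mu_{n/n}(dy) \;=\; e^{-n t a + G_n(t)},
\end{equation*}
so, using $g_n(t)\le\Lambda(t)$ from Lemma~\ref{lemma: lambda Tn}, $\tfrac1n\log\mu_{n/n}([a,1])\le -t a+\Lambda(t)$; symmetrically $\tfrac1n\log\mu_{n/n}([0,b])\le -t b+\Lambda(t)$ for $t\le0$. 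The finiteness of $\Lambda$ (Lemma~\ref{lemma: lambda props}) is exactly what makes these useful. For each $x\in I$ one can pick a finite $t_x$ of a definite sign with $t_x x-\Lambda(t_x)$ within $\delta$ of $\Lambda^*(x)$ --- this uses only that $\Lambda$ is real-valued and increasing, so the concave map $t\mapsto tx-\Lambda(t)$ attains (or approaches) its supremum from one side --- and the corresponding half-line bound on $(x-\rho,1]$ or $[0,x+\rho)$ gives $\limsup_n\tfrac1n\log\mu_{n/n}(\cdot)\le-\Lambda^*(x)+2\delta$ after shrinking $\rho$. Covering the compact $I$ by finitely many such relatively open neighbourhoods, summing, and using that the exponential growth rate of a finite sum of nonnegative sequences is the maximum of the individual rates, we get $\limsup_n\tfrac1n\log\mu_{n/n}(I)\le-\inf_{x\in I}\Lambda^*(x)+2\delta$; let $\delta\downarrow0$.

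\textbf{The lower bound \eqref{eq: LBB}.} By the usual reduction it suffices to prove the local estimate $\liminf_n\tfrac1n\log\mu_{n/n}\big((x-\delta,x+\delta)\big)\ge-\Lambda^*(x)$ for every $x\in(0,1)$ and every $\delta>0$: for an open $F$ one picks $x_0\in F$ with $\Lambda^*(x_0)$ within $\epsilon$ of $\inf_F\Lambda^*$, pushes $x_0$ into $(0,1)$ if it was an endpoint using continuity of $\Lambda^*$ on $[0,1]$ (Lemma~\ref{lemma: lambda star props}(i)), and encloses it in a small interval inside $F\cap(0,1)$. The textbook Gärtner--Ellis route --- tilt $\mu_{n/n}$ by the exposing slope and show the tilted measures concentrate --- is obstructed here, because tilting the sequence reverses the inequality in \eqref{eq: supc condition on G} (so the tilted sequence is \emph{not} super-convolutive), and because we do not know $\Lambda$ is smooth, so not every point of $(0,1)$ is an exposed point of $\Lambda^*$. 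Instead I would tensorize: fix $m\ge1$, normalize $p_m(j):=\mu_m(j)/\sum_\ell\mu_m(\ell)$ to a probability vector on $\{0,\dots,m\}$ (whose support hull contains $mx$ in its interior for every $x\in(0,1)$, since $\mu_m(0),\mu_m(m)>0$), and apply the classical lower bound in Cramér's theorem to i.i.d.\ sums $S_k$ of $p_m$-distributed variables; together with $\mu_m^{\star k}(i)=\big(\sum_\ell\mu_m(\ell)\big)^k\,\mathbb{P}(S_k=i)$ this gives
\begin{equation*}
\liminf_{k\to\infty}\tfrac1{km}\log\mu_m^{\star k}\big(\{i:|i/(km)-x|<\delta\}\big)\;\ge\;-g_m^*(x).
\end{equation*}
Iterating \eqref{eq: def superconv mu} gives $\mu_{km}(i)\ge\mu_m^{\star k}(i)$, so the same bound holds with $\mu_{km}$. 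To pass from the subsequence $\{km\}$ to all $n=km+r$ ($0\le r<m$), use $\mu_r(0)>0$ and \eqref{eq: def superconv mu} to get $\mu_n(i)\ge\mu_r(0)\,\mu_{km}(i)$; as $k\to\infty$ the window moves by $o(1)$ when rescaled by $n$ rather than $km$, and the factors $\mu_r(0)$ and $km/n\to1$ contribute no exponential growth, so $\liminf_n\tfrac1n\log\mu_{n/n}\big((x-\delta,x+\delta)\big)\ge-g_m^*(x)$ for every $m$. Finally let $m\to\infty$ and invoke $g_m^*(x)\to\Lambda^*(x)$ for $x\in(0,1)$ (Lemma~\ref{lemma: lambda star props}(ii)).

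\textbf{Where the difficulty lies.} The upper bound is essentially mechanical once one has finiteness of $\Lambda$. The real work is the lower bound, and the genuinely non-routine point is the observation that tilting destroys super-convolutivity, which closes off the standard Gärtner--Ellis argument and forces the detour through a fixed-$m$ i.i.d.\ Cramér estimate followed by the interpolation from multiples of $m$ to all $n$ and the limit $g_m^*\to\Lambda^*$; one must also keep track of the fact that Lemma~\ref{lemma: lambda star props}(ii) is only available at interior points, which is precisely why the open-set lower bound is first reduced to local estimates on $(0,1)$ using continuity of $\Lambda^*$ up to the endpoints.
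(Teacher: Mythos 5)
Your proof is correct, and its backbone --- the upper bound extracted from the finite limit $\Lambda$ via exponential Chebyshev, and the lower bound from the pointwise minorization $\mu_{km+r}\ge \mu_r(0)\,\mu_m^{\star k}$ at a fixed block size $m$, followed by $m\to\infty$ and $g_m^*\to\Lambda^*$ --- is the same as the paper's, which packages the minorant as $\hat\mu_n=(\mu_a)^{\star q}\star\mu_r$. The differences are in execution. For the upper bound the paper normalizes $\mu_{n/n}$ and cites the G\"artner--Ellis theorem, while you hand-roll the Chernoff-plus-covering argument; these are the same estimate, and your use of $g_n\le\Lambda$ (valid, since Fekete's lemma gives $\Lambda=\sup_n g_n$) even makes the half-line bounds non-asymptotic. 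For the lower bound the paper applies G\"artner--Ellis to $\hat\mu_{n/n}$ (legitimate because $g_a$ is smooth) and then exchanges $\lim_a$ with $\inf_F$ via Lemma \ref{lemma: inf open}, a separate convex-analysis lemma proved in the appendix; you instead reduce the open-set bound to local balls around interior points, invoke the classical Cram\'er lower bound for the normalized block law (using $\mu_m(0),\mu_m(m)>0$ so that $mx$ lies in the interior of the support hull), and take $m\to\infty$ pointwise using Lemma \ref{lemma: lambda star props}(ii). This sidesteps Lemma \ref{lemma: inf open} entirely, at the cost of leaning on continuity of $\Lambda^*$ up to the endpoints (Lemma \ref{lemma: lambda star props}(i)) to move near-optimal points of $F$ into $(0,1)$ --- a step the paper's route does not need. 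Your observation that exponential tilting reverses the super-convolutive inequality, so the naive G\"artner--Ellis lower bound for $\mu_{n/n}$ itself is unavailable (only exposed points would be covered, and $\Lambda$ is not known to be smooth), is precisely the reason the paper, too, detours through the smooth surrogate $g_a$.
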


\begin{proof}
The large-deviations upper bound in inequality \eqref{eq: UBB} may be established via a direct application of the G\"{a}rtner-Ellis theorem \cite{dembo1998large}, stated in Appendix \ref{appendix: gartnerellis}, as follows: We note that $\log \sum_j \mu_n(j) =  G_n(0) := \log s_n$. Define the probability measure $p_n := \frac{\mu_{n/n}}{s_n}.$ The log moment generating function of $p_n$ is given by
\begin{align*}
\cG_n(t) &= \log \sum_{j=0}^n p_n(j/n)e^{jt/n}
= G_n(t/n) - \log s_n.
\end{align*}
Thus, 
\begin{align*}
\lim_{n \to \infty} \frac{1}{n} \cG_n(nt) &= \lim_{n \to \infty} \frac{G_n(t)}{n} - \frac{\log s_n}{n}
= \Lambda(t) - \Lambda(0).
\end{align*}
This pointwise convergence (condition $(*)$ in Appendix \ref{appendix: gartnerellis}) is the key condition required to apply the G\"{a}rtner-Ellis theorem. We then obtain
\begin{align*}
\limsup_{n \to \infty} \frac{1}{n} \log p_n(I) &\leq -\inf_{x \in I} (\Lambda(x) - \Lambda(0))^* = -\inf_{x \in I} \Lambda^*(x) - \Lambda(0),
\end{align*}
which immediately gives 
\begin{align*}
\limsup_{n \to \infty} \frac{1}{n} \log \mu_{n/n}(I) \leq -\inf_{x \in I} \Lambda^*(x),
\end{align*}
proving inequality \eqref{eq: UBB}.

\medskip


We now establish the lower bound \eqref{eq: LBB}. We construct a sequence $\{\hat\mu_n\}$ such that $\mu_n \geq \hat \mu_n$, for all $n$; i.e., $\mu_n$ pointwise dominates $\hat \mu_n$, for all $n$. The large-deviations lower bound for $\{\hat\mu_n\}$ will serve as a large-deviations lower bound for $\{\mu_n\}$. Fix $a \geq 1$. We express every $n \geq 1$ as $n = qa+r$, where $r < a$, and define 
$$\hat \mu_n = \left({\mu_a}\right) ^{\star q} \star \mu_r,$$
where $\left({\mu_a}\right)^{\star q}$ is $\mu_a$ convolved $q$ times. Since $\{\mu_n\}$ is super-convolutive, this definition ensures that 
$$\mu_n \geq \hat\mu_n.$$
Define \mbox{$\hat G_n(t) := \log \sum_{j=0}^n \hat\mu_n(j)e^{jt}.$} Then
\begin{align}\label{eq: gnhat}
\lim_{n \to \infty} \frac{1}{n} \hat G_n(t) &= \lim_{n \to \infty} \frac{1}{n}( qG_a(t) + G_r(t) )\\
&\stackrel{(a)} = \lim_{q \to \infty} \frac{1}{aq}( qG_a(t) )\\
&= g_a(t),
\end{align}
where $(a)$ is true because $G_r(t)$ is bounded and thus inconsequential in the limit. Applying the G\"{a}rtner-Ellis theorem for $\{\hat \mu_n\}$, and noting that $g_a(t)$ is differentiable, we obtain
\begin{align}
&\liminf_{n \to \infty} \frac{1}{n} \log \hat\mu_{n/n}(F) \geq -\inf_{x \in F} g_a^*(x).
\end{align}

To show that $\inf_{x \in F} \left(g_a\right)^*(x)$ tends to $\inf_{x \in F} \left(\Lambda\right)^*(x)$ as $a \to \infty$, we establish the following lemma:

\begin{lemma}[Proof in Appendix \ref{proof: lemma: inf open}]\label{lemma: inf open}
Let $\{f_n\}$ be a sequence of continuous convex function on $[a,b]$
converging pointwise to $f$, 
where $|f(x)| < \infty$ for all $x \in [a,b]$.
Let $F \subseteq [a,b]$ be a relatively open set, i.e. $F$ is either
an open interval $(c,d)$ with $a < c < d < b$ or $(c, b]$ with 
$a < c < b$ or $[a, d)$ with $a < d < b$, or $[a, b]$. 
Then 
\[
\lim_n \{ \inf_{x \in F} f_n(x) \} = \inf_{x \in F} f(x)~.
\]
\end{lemma}

Taking the limit as $a\to \infty$ and using Lemma \ref{lemma: inf open}, which is applicable because of the properties of $g_n^*$ given in Remark \ref{remark: lambda star props}, we have
\begin{align*}
\liminf_{n \to \infty} \frac{1}{n} \log \mu_{n/n}(F) &\geq -\inf_{x \in F} \lim_{a \to \infty} g_a^*(x) \stackrel{(a)}= -\inf_{x \in F} \left(\Lambda\right)^*(x),
\end{align*}
where $(a)$ is a consequence of Lemma \ref{lemma: lambda star props}, and the fact that $F$ is an open set. This concludes the proof of Theorem \ref{thm: lambda Tn}.
\end{proof}

\subsection{Candidate function for $h_X$}\label{subsection: hx and lambda}
In Section \ref{subsection: superconv}, we developed the theory of convergence of proper super-convolutive sequences. In this section, we aim to apply this theory to the particular case of sequences of intrinsic volumes, which we denote by $\{\mu^\epsilon_n\}_{n \geq 1}$. Note that Lemma \ref{lemma: superconv mu} already provides the super-convolutive property of $\{\mu^\epsilon_n\}_{n \geq 1}$. Thus, we may directly apply Lemma \ref{lemma: lambda Tn} to conclude the existence of a limit function $\Lambda^\epsilon(t)$ satisfying
\begin{equation*}
\Lambda^\epsilon(t) = \lim_{n \to \infty} \frac{G^\epsilon_n(t)}{n},
\end{equation*}
where 
\begin{align*}
G^\epsilon_n(t) = \log \left( \sum_{j=0}^n \mu^\epsilon_n(j)e^{jt} \right).
\end{align*}

Our first lemma in this section shows that $\{\mu^\epsilon_n\}_{n \geq 1}$ is a \emph{proper} super-convolutive sequence:

\begin{lemma}[Proof in Appendix \ref{proof: lemma: alpha beta gamma}]\label{lemma: alpha beta gamma}
Let $\epsilon > 0$, and let $\{\mu^\epsilon_n\}_{n \geq 1}$ be the sequence of intrinsic volumes of the one-sided typical sets of a log-concave random variable. Then this sequence is a proper super-convolutive sequence; i.e., it satisfies the following three conditions:
\begin{enumerate}
\item[(i)]
For all $n \geq 1$, we have $\mu^\epsilon_n(0) > 0$ and $\mu^\epsilon_n(n) > 0$.

\item[(ii)] The limit $\beta := \lim_n \frac{\log \mu^\epsilon_n(0)}{n}$ is finite.

\item[(iii)] The value of $\gamma := \Lambda^\epsilon(0) < \infty$.

\end{enumerate}
\end{lemma}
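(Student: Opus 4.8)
The plan is to verify the three conditions of Definition \ref{def: prop supconv} directly, using elementary facts about intrinsic volumes from Theorem \ref{thm: iv props} together with the volume bound \eqref{eq: Tn hx} applied to $\cT^\epsilon_n$. I will treat the three conditions in the order (i), (ii), (iii), since (i) is needed to make sense of the limits in (ii) and (iii), and (ii) in turn feeds into bounding $\gamma$.

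First I would establish (i). Since $\cT^\epsilon_n$ is compact and convex and nonempty (it contains a neighborhood of the mode of $p_{X^n}$, so it is full-dimensional in $\mathbb R^n$ for $n$ large, and in fact for all $n\ge 1$ because $\epsilon>0$), we have $\mu^\epsilon_n(n)=|\cT^\epsilon_n|>0$ and $\mu^\epsilon_n(0)=V_0(\cT^\epsilon_n)=1$ (the Euler characteristic of a nonempty compact convex set). Actually $\mu^\epsilon_n(0)=1$ for every $n$, which also makes $\beta=0$ trivially — so condition (ii) is immediate, $\beta=\lim_n \tfrac1n\log 1 = 0$. I should double-check that $\cT^\epsilon_n$ is genuinely nonempty: the point $(x_0,\dots,x_0)$ with $\Phi(x_0)=\min\Phi = h(X) - \text{(something)}$ — more carefully, since $\E[\Phi(X)] = h(X)$, by continuity there is a point where $\Phi$ is below $h(X)+\epsilon$, and its $n$-fold copy lies in $\cT^\epsilon_n$; so nonemptiness holds and the full-dimensionality follows from $\Phi$ being finite and continuous on a neighborhood of such a point.

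The substantive step is (iii): showing $\gamma = \Lambda^\epsilon(0) = \lim_n \tfrac1n\log\sum_{j=0}^n \mu^\epsilon_n(j) < \infty$. The natural tool is Steiner's formula: $\sum_{j=0}^n \mu^\epsilon_n(j)\,\omega_{n-j} = |\cT^\epsilon_n \oplus B_n|$, where $B_n$ is the unit ball. Since all $\omega_{n-j}>0$, I can bound $\sum_j \mu^\epsilon_n(j) \le \bigl(\min_{0\le j\le n}\omega_{n-j}\bigr)^{-1} |\cT^\epsilon_n \oplus B_n|$. Now $\cT^\epsilon_n \subseteq \cT^\epsilon_n \oplus B_n$, and I need an upper bound on $|\cT^\epsilon_n \oplus B_n|$ that grows at most exponentially in $n$; since $\cT^\epsilon_n \oplus B_n \subseteq \cT^{\epsilon'}_n \oplus B_n$... more simply, $\cT^\epsilon_n \oplus B_n \subseteq (\cT^\epsilon_n)' $ where the latter is a typical set with a slightly larger parameter — or even cruder, each coordinate of a point of $\cT^\epsilon_n$ lies in a fixed bounded interval $[-M,M]$ (because $\Phi(x_i)\le n(h(X)+\epsilon)$ forces... no, that doesn't bound a single coordinate). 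Instead: $|\cT^\epsilon_n| \le e^{n(h(X)+\epsilon)}$ by \eqref{eq: Tn hx}, and $\cT^\epsilon_n \oplus B_n$ is contained in the typical set $\cT^{\epsilon_n}_n$ for a parameter $\epsilon_n$ that converges to $\epsilon$ (enlarging by a unit ball changes each $\Phi$-sum by a bounded amount, since $\Phi$ is locally Lipschitz near the relevant region — this needs the mild argument that the typical set stays in a region where $\Phi$ has controlled growth). Then $|\cT^\epsilon_n\oplus B_n| \le e^{n(h(X)+\epsilon_n)}$ is exponentially bounded. The factor $(\min_j \omega_{n-j})^{-1}$: since $\omega_k \to 0$ but only superexponentially-slowly in $1/k!$... actually $\omega_k = \pi^{k/2}/\Gamma(k/2+1)$ decays like $k^{-k/2}$, so $(\min_j \omega_{n-j})^{-1}$ can be as large as roughly $n^{n/2}$, which is NOT exponentially bounded. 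So this crude bound fails, and I expect this to be the main obstacle.

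To fix it, I would instead bound $\sum_j \mu^\epsilon_n(j)$ more carefully by choosing the scale of the ball optimally: Steiner with a ball of radius $t$ gives $\sum_j \mu^\epsilon_n(j)\,\omega_{n-j}\,t^{n-j} = |\cT^\epsilon_n \oplus tB_n|$, so $\sum_j \mu^\epsilon_n(j) \le \max_j(\omega_{n-j}t^{n-j})^{-1}\,|\cT^\epsilon_n\oplus tB_n|$ — still the same issue for small $\omega$. The cleaner route is to use the monotonicity (Theorem \ref{thm: iv props}(a)) together with the explicit intrinsic volumes of a bounding box or ball: $\cT^\epsilon_n$ is contained in some product of intervals, or in a ball $B_n(R_n)$ of exponentially-bounded radius (this containment does hold: the one-sided typical set of a log-concave $X$ lies in a Euclidean ball of radius $O(\sqrt n \cdot c^n)$... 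I need $R_n$ with $\tfrac1n\log R_n$ bounded). Given such containment, $\mu^\epsilon_n(j) \le V_j(B_n(R_n)) = \binom nj \tfrac{\omega_n}{\omega_{n-j}}R_n^j$, and then $\sum_{j=0}^n \mu^\epsilon_n(j) \le \sum_j \binom nj \tfrac{\omega_n}{\omega_{n-j}}R_n^j = |B_n(R_n)\oplus B_n| = \omega_n(R_n+1)^n$ by Steiner applied to the ball itself — and $\tfrac1n\log(\omega_n(R_n+1)^n) = \tfrac1n\log\omega_n + \log(R_n+1)$, which is bounded above since $\tfrac1n\log\omega_n \to 0$ (indeed $\omega_n\to 0$ so this is negative) and $\log R_n/n$ is bounded. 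Hence $\gamma<\infty$. The one remaining gap is establishing that $\cT^\epsilon_n \subseteq B_n(R_n)$ with $\tfrac1n\log R_n$ bounded; this follows because $\Phi$, being convex and finite on an interval with $\E_X[\Phi]<\infty$, grows at least linearly outside a compact set, so $\sum_i\Phi(x_i)\le n(h(X)+\epsilon)$ forces $\sum_i|x_i|$, hence $\|x^n\|$, to be at most $Cn$ for a constant $C$ depending on $X$ and $\epsilon$ — giving $R_n = Cn \le e^{Cn}$, which is certainly exponentially bounded. Assembling: (i) is geometry, (ii) is $\beta=0$, (iii) reduces to the ball-containment estimate just sketched.
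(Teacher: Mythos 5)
Your parts (i) and (ii) are fine and essentially match the paper: nonemptiness of $\cT^\epsilon_n$ plus $\min_x \Phi(x) \le h(X)$ gives a full-dimensional set, $\mu^\epsilon_n(0)=1$ is the Euler characteristic, and $\beta=0$. The problem is part (iii), which is where all the work in this lemma lies, and your argument for it has a genuine gap. The pivotal identity you invoke, $\sum_j \binom{n}{j}\frac{\omega_n}{\omega_{n-j}}R^j = |B_n(R)\oplus B_n|$, is false: Steiner's formula gives $|B_n(R)\oplus B_n| = \sum_j V_j(B_n(R))\,\omega_{n-j} = \omega_n(R+1)^n$, with the weights $\omega_{n-j}$ present, and dropping them costs a factor of up to $1/\omega_{n-j} \sim (n-j)^{(n-j)/2}$ --- exactly the superexponential loss you yourself identified as the obstacle in your first (abandoned) attempt. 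Unweighted, the intrinsic volumes of a Euclidean ball of radius $R_n = Cn$ are genuinely superexponential: for $j=\theta n$ one has $\log V_j(B_n(Cn)) = \tfrac{\theta n}{2}\log n + O(n)$, and already the top term $V_n(B_n(Cn)) = \omega_n(Cn)^n \approx (2\pi e)^{n/2}C^n n^{n/2}$ grows like $e^{\frac{1}{2}n\log n}$. So monotonicity through this container can never prove $\gamma<\infty$; the Euclidean ball of radius $\Theta(n)$ is simply too large. (The radius really can be $\Theta(n)$: for the Laplace density the typical set \emph{is} an $\ell_1$-ball of $\ell_1$-radius $\Theta(n)$, containing points at Euclidean distance $\Theta(n)$ from the origin, so you cannot shrink $R_n$ to $O(\sqrt n)$, which is what the $\omega_n$-decay would need in order to cancel $R_n^j$. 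Your closing arithmetic also conflates $\frac1n\log(R_n+1)^n = \log(R_n+1)\to\infty$ with $\frac1n\log R_n$.)

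The repair is the one the paper uses, and you are one step away from it: your linear lower bound $\Phi(x)\ge c_1|x|+c_2$ already gives the containment $\cT^\epsilon_n \subseteq \cC_n := \{x^n : \sum_i |x_i| \le An\}$ with $A = (h(X)+\epsilon-c_2)/c_1$. Keep the $\ell_1$-ball (regular crosspolytope) itself as the container rather than passing to the circumscribed Euclidean ball: its intrinsic volumes are known in closed form, and one can bound $\mu^{\text{cp}}_n(i) \le 2^{2n}\sqrt{n+1}\,\max(1,A^n)\,\frac{n^n}{n!}$, whose sum over $i$ has exponential growth rate at most $2\log 2 + \max(0,\log A)+1 < \infty$. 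Monotonicity of intrinsic volumes then gives $\gamma \le \gamma^{\text{cp}} < \infty$. The moral difference is that the crosspolytope of $\ell_1$-radius $An$ has volume $(2An)^n/n! \approx (2Ae)^n$, i.e., exponential, whereas the Euclidean ball of radius $An$ does not; the choice of container is the whole content of part (iii).
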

\begin{remark}\label{remark: alpha beta gamma}
Properties (i) and (ii) are immediate, but showing (iii) is non-trivial. Our proof proceeds by constructing a super-convolutive sequence $\{\mu^{\text{cp}}_n\}_{n \geq 1}$ that pointwise dominates $\{\mu^\epsilon_n\}_{n \geq 1}$. This sequence corresponds to the intrinsic volumes of a family of crosspolytopes $\{\cC_n\}_{n \geq 1 }$, which are constructed in such a way that $\cT^\epsilon_n \subseteq \cC_n$ for all $n \geq 1$. The limit $\Lambda^\text{cp}(0)$ may be bounded using the explicit formulae for intrinsic volumes of crosspolytopes, and this bound also serves as a bound for $\Lambda^\epsilon(0)$.
\end{remark}

Applying Theorem \ref{thm: lambda Tn} to the proper super-convolutive sequence $\{\mu^\epsilon_n\}_{n \geq 1}$, we conclude that the convex conjugate of $\Lambda^\epsilon$, denoted by $(\Lambda^\epsilon)^*$, characterizes the large-deviations type convergence of $\{\mu^\epsilon_n\}_{n \geq 1}$:

\begin{theorem}\label{thm: lambda Tn new}
Let $\{\mu^\epsilon_n\}$ be as in Lemma \ref{lemma: alpha beta gamma}. Define a sequence of measures $\{\mu^\epsilon_{n/n}\}_{n\geq 1}$ supported on $[0,1]$ by assigning point masses as follows:
$$\mu^\epsilon_{n/n}\left(\left\{\frac{j}{n}\right\}\right) := \mu^\epsilon_n(j),  \text{~~for~~}  0 \leq j \leq n.$$
 Let $I \subseteq \mathbb{R}$ be a closed set and $F \subseteq \mathbb{R}$ be an open set. Then 
\begin{align}
\limsup_{n \to \infty} \frac{1}{n} \log \mu^\epsilon_{n/n}(I) &\leq - \inf_{x \in I} (\Lambda^\epsilon)^*(x), \qquad \text{and}\label{eq: UBB new}\\
\liminf_{n \to \infty} \frac{1}{n} \log \mu^\epsilon_{n/n}(F) &\geq - \inf_{x \in F} (\Lambda^\epsilon)^*(x)\label{eq: LBB new}.
\end{align}
\end{theorem}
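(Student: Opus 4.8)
The plan is to recognize Theorem~\ref{thm: lambda Tn new} as a direct corollary of Theorem~\ref{thm: lambda Tn} once we have verified that the hypotheses of the latter apply to the sequence $\{\mu^\epsilon_n\}_{n\geq 1}$. Concretely, the work has already been done: Lemma~\ref{lemma: superconv mu} shows that $\{\mu^\epsilon_n\}_{n\geq 1}$ is super-convolutive in the sense of Definition~\ref{def: supconv}, and Lemma~\ref{lemma: alpha beta gamma} shows that it is moreover \emph{proper} in the sense of Definition~\ref{def: prop supconv}. Therefore $\{\mu^\epsilon_n\}_{n\geq 1}$ satisfies exactly the standing assumptions under which Theorem~\ref{thm: lambda Tn} was proved, and the limit function $\Lambda^\epsilon$ plays the role of $\Lambda$, with convex conjugate $(\Lambda^\epsilon)^*$ playing the role of $\Lambda^*$.

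The only step requiring a word of care is matching the notation. First I would observe that the measures $\{\mu^\epsilon_{n/n}\}_{n\geq 1}$ defined here are obtained from $\{\mu^\epsilon_n\}_{n\geq 1}$ by precisely the same point-mass construction, $\mu^\epsilon_{n/n}(\{j/n\}) := \mu^\epsilon_n(j)$, used in Theorem~\ref{thm: lambda Tn}. Next I would note that the generating functions $G^\epsilon_n$ and $g^\epsilon_n := G^\epsilon_n/n$ introduced just before Lemma~\ref{lemma: alpha beta gamma} are the same objects as $G_n$ and $g_n$ from Lemma~\ref{lemma: lambda Tn} applied to this sequence, and that $\Lambda^\epsilon = \lim_n g^\epsilon_n$ is the corresponding limit function, which by Lemma~\ref{lemma: alpha beta gamma}(iii) is finite at $0$ and hence (by Lemma~\ref{lemma: lambda props}) finite everywhere. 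With these identifications in place, inequalities \eqref{eq: UBB new} and \eqref{eq: LBB new} are literally the conclusions \eqref{eq: UBB} and \eqref{eq: LBB} of Theorem~\ref{thm: lambda Tn}.

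Thus the proof is essentially a citation: invoke Lemma~\ref{lemma: superconv mu} and Lemma~\ref{lemma: alpha beta gamma} to certify that $\{\mu^\epsilon_n\}_{n\geq 1}$ is a proper super-convolutive sequence, then apply Theorem~\ref{thm: lambda Tn} verbatim. There is no real obstacle here; the substantive content lives upstream, in the large-deviations argument of Theorem~\ref{thm: lambda Tn} (the G\"artner--Ellis upper bound, the blocking/sub-sequence construction $\hat\mu_n = (\mu_a)^{\star q}\star\mu_r$ for the lower bound, and the convex-analytic interchange of limit and infimum in Lemma~\ref{lemma: inf open}) and in the crosspolytope domination argument that establishes properness in Lemma~\ref{lemma: alpha beta gamma}. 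If anything needs to be said beyond the citation, it is simply to remark that all hypotheses of Theorem~\ref{thm: lambda Tn}---in particular the differentiability of $g_a$ used in the lower-bound application of G\"artner--Ellis and the finiteness properties recorded in Remark~\ref{remark: lambda star props}---are inherited by $\{\mu^\epsilon_n\}_{n\geq 1}$ without change.
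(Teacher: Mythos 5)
Your proposal is correct and matches the paper's own argument, which likewise proves this theorem by citing Lemma \ref{lemma: alpha beta gamma} to certify that $\{\mu^\epsilon_n\}$ is a proper super-convolutive sequence and then applying Theorem \ref{thm: lambda Tn} directly. The extra care you take in matching notation and noting the inherited hypotheses is fine but not needed beyond the citation.
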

\begin{proof}		
The proof is immediate using Lemma \ref{lemma: alpha beta gamma} and Theorem \ref{thm: lambda Tn}.
\end{proof}

It is now natural to conjecture that 
$$h_X(\theta) = \lim_{\epsilon \to 0} -(\Lambda^\epsilon)^*(\theta).$$
To show that this limit does exist, we establish the following theorem:
\begin{theorem}\label{thm: patpatpat}
Define the function $-\Lambda^*:[0,1] \to \mathbb R$ as the pointwise limit of $-\left(\Lambda^{\epsilon}\right)^*$, as $\epsilon \to 0_+$:
\begin{equation}
-\Lambda^*(\theta) := \lim_{\epsilon \to 0_+} -\left(\Lambda^{\epsilon}\right)^*(\theta), \text{~~for~~} \theta\in[0,1].
\end{equation}
Then $-\Lambda^*$ is a continuous, concave function on $[0,1]$.
\end{theorem}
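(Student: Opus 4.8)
The plan is to push the monotone dependence on $\epsilon$ through everything, combined with what Lemmas \ref{lemma: lambda props} and \ref{lemma: lambda star props} already give for each $\Lambda^\epsilon$ and $(\Lambda^\epsilon)^*$ (these apply because $\{\mu^\epsilon_n\}_{n\ge1}$ is a proper super-convolutive sequence by Lemma \ref{lemma: alpha beta gamma}). First I would record the monotonicity in $\epsilon$: if $0<\epsilon_1<\epsilon_2$ then $\cT^{\epsilon_1}_n\subseteq\cT^{\epsilon_2}_n$ for every $n$, since the constraint $\sum_i\Phi(x_i)\le n(h(X)+\epsilon)$ only loosens as $\epsilon$ grows. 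Monotonicity of intrinsic volumes under inclusion (Theorem \ref{thm: iv props}(a)) then gives $\mu^{\epsilon_1}_n(j)\le\mu^{\epsilon_2}_n(j)$ for all $j$, hence $G^{\epsilon_1}_n(t)\le G^{\epsilon_2}_n(t)$ for all $t$, hence $\Lambda^{\epsilon_1}(t)\le\Lambda^{\epsilon_2}(t)$ for all $t$, hence $(\Lambda^{\epsilon_1})^*\ge(\Lambda^{\epsilon_2})^*$ by passing to convex conjugates. Thus $\epsilon\mapsto-(\Lambda^\epsilon)^*(\theta)$ is nondecreasing, so the limit defining $-\Lambda^*(\theta)$ exists as a monotone (decreasing as $\epsilon\downarrow0$) limit, and $-\Lambda^*(\theta)=\inf_{\epsilon\in(0,1]}\big(-(\Lambda^\epsilon)^*(\theta)\big)$.

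Next I would verify $-\Lambda^*$ is real-valued. An upper bound is immediate: $-\Lambda^*(\theta)\le-(\Lambda^1)^*(\theta)$, which is finite by Lemma \ref{lemma: lambda star props}(i). For a lower bound uniform in $\epsilon$ and $\theta$, use Lemma \ref{lemma: lambda props}(i): $\Lambda^\epsilon(t)\ge G^\epsilon_1(t)=\log\big(1+|\cT^\epsilon_1|\,e^{t}\big)$, where I have used $\mu^\epsilon_1(0)=V_0(\cT^\epsilon_1)=1$ and $\mu^\epsilon_1(1)=V_1(\cT^\epsilon_1)=|\cT^\epsilon_1|$. Taking convex conjugates, $-(\Lambda^\epsilon)^*(\theta)=\inf_t\big(\Lambda^\epsilon(t)-\theta t\big)\ge\inf_t\big(G^\epsilon_1(t)-\theta t\big)=H(\theta)+\theta\log|\cT^\epsilon_1|$ for $\theta\in[0,1]$, with $H$ the binary entropy function. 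Since $\Phi$ is convex and lower-semicontinuous with $e^{-\Phi}$ integrable, $\cT^\epsilon_1$ is a bounded interval whose length is nonincreasing as $\epsilon\downarrow0$ with strictly positive limit $|\cT^0_1|=|\{\Phi\le h(X)\}|>0$ (the set $\{\Phi\le h(X)\}$ has positive Lebesgue measure because $\E[\Phi(X)]=h(X)$ forces positive $p_X$-mass there); hence $|\cT^\epsilon_1|$ stays in a fixed compact subinterval of $(0,\infty)$ for $\epsilon\in(0,1]$, and $H(\theta)+\theta\log|\cT^\epsilon_1|$ is bounded below by a constant independent of $\theta$ and $\epsilon$. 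Hence $-\Lambda^*:[0,1]\to\mathbb R$.

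Finally, concavity and continuity. Each $-(\Lambda^\epsilon)^*$ is concave on $[0,1]$, being the negative of the convex function $(\Lambda^\epsilon)^*$ (Lemma \ref{lemma: lambda star props}(i)), and a pointwise limit of concave functions is concave; so $-\Lambda^*$ is concave, hence continuous on $(0,1)$. The delicate point is continuity at the two endpoints, where concavity alone is not enough. Here I would combine two facts. On one side, each $-(\Lambda^\epsilon)^*$ is \emph{continuous on the closed interval} $[0,1]$ (Lemma \ref{lemma: lambda star props}(i)), so $-\Lambda^*=\inf_{\epsilon\in(0,1]}\big(-(\Lambda^\epsilon)^*\big)$ is upper semicontinuous on $[0,1]$ as an infimum of continuous functions; in particular $\limsup_{x\to0^+}(-\Lambda^*(x))\le-\Lambda^*(0)$. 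On the other side, writing $x=\tfrac{y-x}{y}\cdot0+\tfrac{x}{y}\cdot y$ for $0<x<y\le1$ and using concavity gives $-\Lambda^*(x)\ge\tfrac{y-x}{y}\big(-\Lambda^*(0)\big)+\tfrac{x}{y}\big(-\Lambda^*(y)\big)$, whence $\liminf_{x\to0^+}(-\Lambda^*(x))\ge-\Lambda^*(0)$ (here $-\Lambda^*(0)$ and $-\Lambda^*(y)$ are finite by the previous paragraph). Together these give continuity at $0$, and the symmetric argument handles $\theta=1$. I expect this endpoint continuity to be the main obstacle; everything else is bookkeeping with monotonicity, and its resolution hinges on the closed-interval continuity of the $(\Lambda^\epsilon)^*$ (Lemma \ref{lemma: lambda star props}(i)) together with the one-sided concavity estimate.
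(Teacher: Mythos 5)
Your proof is correct, and its skeleton matches the paper's: monotonicity of $-(\Lambda^\epsilon)^*$ in $\epsilon$ via set inclusion of typical sets, a uniform lower bound to rule out $-\infty$, concavity of the pointwise limit, and then a separate argument for continuity at the endpoints. The two places where you diverge are worth noting. For the uniform lower bound, the paper simply invokes Lemma \ref{lemma: lambda star props} and Remark \ref{remark: lambda star props} to get $-(\Lambda^\epsilon)^*(0)\ge\beta=0$ and $-(\Lambda^\epsilon)^*(1)\ge\alpha\ge h(X)-\epsilon$, and then uses concavity of each $-(\Lambda^\epsilon)^*$ to obtain the chord bound $\theta(h(X)-\epsilon)$; your route through $\Lambda^\epsilon\ge g_1^\epsilon$, the explicit conjugate $H(\theta)+\theta\log|\cT^\epsilon_1|$, and the positive-measure argument for $\{\Phi\le h(X)\}$ is valid but does more work than needed (though it is self-contained and the positivity argument via $\E[\Phi(X)]=h(X)$ is sound). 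For endpoint continuity, the paper proves a standalone statement (Lemma \ref{lemma: lambda without epsilon}) by an $\eta$--$\delta$ sandwich between the linear lower bound and a fixed $f_N$ from the monotone decreasing sequence; you instead observe that $-\Lambda^*$, as an infimum of functions continuous on $[0,1]$, is upper semicontinuous, and pair this with the one-sided concavity estimate $-\Lambda^*(x)\ge\frac{y-x}{y}(-\Lambda^*(0))+\frac{x}{y}(-\Lambda^*(y))$ to force continuity at $0$ (and symmetrically at $1$). This is a genuinely cleaner packaging of the same underlying facts (monotone convergence plus finiteness plus concavity), and it would let you bypass the paper's Appendix proof of Lemma \ref{lemma: lambda without epsilon} entirely; what it gives up is the reusable general lemma about monotone limits of continuous concave functions that the paper states.
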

\begin{proof}
From the definition of a typical set \eqref{eq: def Tn 1}, it is easy to see that for $\epsilon_1 < \epsilon_2$, the corresponding typical sets satisfy $\cT^{\epsilon_1}_n \subseteq \cT^{\epsilon_2}_n$, for all $n \geq 1$. Using the monotonicity of intrinsic volumes with respect to inclusion, we have $-\left(\Lambda^{\epsilon_1}\right)^* \leq -\left(\Lambda^{\epsilon_2}\right)^*$. Thus, for each $\theta \in [0,1]$, the value of $-\left(\Lambda^{\epsilon}\right)^*(\theta)$ monotonically decreases as $\epsilon \to 0_+$. To ensure that the quantity does not tend to $-\infty$ and establish a pointwise convergence result, we first provide a lower bound. Fix an $\epsilon_0 > 0$. From Lemma \ref{lemma: lambda star props} and Remark \ref{remark: lambda star props} we have
\begin{align}
-\left(\Lambda^{\epsilon_0}\right)^*(0) &\geq \beta = 0, \qquad \text{and} \label{eq: star psi 1}\\
-\left(\Lambda^{\epsilon_0}\right)^*(1) &\geq \alpha = h(X) - \epsilon_0. \label{eq: star psi 2}
\end{align}
By the concavity of $-\left(\Lambda^{\epsilon}\right)^*$, we obtain the linear lower bound  $-\left(\Lambda^{\epsilon}\right)^*(\theta) \geq\theta(h(X)-\epsilon)$, for all $\theta \in [0,1]$. Thus, as $\epsilon \to 0_+$, the value of $-\left(\Lambda^{\epsilon}\right)^*$ may be uniformly lower-bounded. We then use the following lemma to conclude the proof:
\begin{lemma}[Proof in Appendix \ref{proof: lemma: lambda without epsilon}]\label{lemma: lambda without epsilon}
Let $\{f_n\}$ be a sequence of continuous, concave functions on $[a, b]$, converging pointwise and in a monotonically decreasing manner to a function $f$. Then $f$ is a continuous, concave function on $[a, b]$.
\end{lemma}
\end{proof}
 The function $-(\Lambda^*)$ obtained in this fashion is the candidate function we have sought. In the subsequent sections, we show that $-(\Lambda^*)(\theta) = h_X(\theta)$, for all $\theta \in [0,1]$.


\section{Proof of the main theorem: Part A}\label{section: part a}

\begin{theorem}[Part A of Theorem \ref{thm: main}]\label{thm: part a}
Let $-\Lambda^*$ be as in Theorem \ref{thm: patpatpat}. The following equality holds:
\begin{equation}
\lim_{\epsilon \to 0_+} \lim_{n \to \infty} \frac{1}{n} \log \mu^\epsilon_n({\lfloor n\theta \rfloor}) = -\Lambda^*(\theta), \quad \text{ for all } \theta \in (0,1).
\end{equation}
\end{theorem}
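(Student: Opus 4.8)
The plan is to connect the quantity $\lim_{n\to\infty} \frac1n \log \mu^\epsilon_n(\lfloor n\theta\rfloor)$ to the large-deviations rate function $(\Lambda^\epsilon)^*$ for each fixed $\epsilon$, and then let $\epsilon \to 0_+$. The key point is that, for a fixed $\epsilon$ and an interior $\theta \in (0,1)$, the single-point quantity $\frac1n \log \mu^\epsilon_n(\lfloor n\theta \rfloor)$ has the same exponential growth rate as the ``smeared'' quantity $\frac1n \log \mu^\epsilon_{n/n}(F)$ for a small open interval $F$ around $\theta$; the upper bound of Theorem \ref{thm: lambda Tn new} applied to a small closed interval $[\theta - \delta, \theta+\delta]$ and the lower bound applied to $(\theta-\delta,\theta+\delta)$ together sandwich the growth rate between $-\sup$ and $-\inf$ of $(\Lambda^\epsilon)^*$ over that interval, and by continuity of $(\Lambda^\epsilon)^*$ (Lemma \ref{lemma: lambda star props}(i)) these converge to $-(\Lambda^\epsilon)^*(\theta)$ as $\delta \to 0$. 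The only subtlety in this step is that $\mu^\epsilon_{n/n}$ puts mass only on the grid points $j/n$, so I must first argue that for all large $n$ the interval $(\theta-\delta,\theta+\delta)$ contains the point $\lfloor n\theta\rfloor/n$ and that $\mu^\epsilon_n(\lfloor n\theta\rfloor) \le \mu^\epsilon_{n/n}([\theta-\delta,\theta+\delta]) \le (n+1)\max_{j : j/n \in [\theta-\delta,\theta+\delta]} \mu^\epsilon_n(j)$, so the polynomial factor $(n+1)$ and the finitely-many-terms maximum wash out in the $\frac1n\log$ limit. I should also confirm the one-sided limit $\lim_{n}\frac1n\log\mu^\epsilon_n(\lfloor n\theta\rfloor)$ actually exists; this follows because $\{\mu^\epsilon_n\}$ is super-convolutive, so along the subsequence where $\lfloor n\theta\rfloor$ behaves additively one gets super-additivity of $\log \mu^\epsilon_n(\lfloor n\theta\rfloor)$, and Fekete's lemma applies — alternatively, and more cleanly, the sandwich argument above already shows the $\limsup$ and $\liminf$ both equal $-(\Lambda^\epsilon)^*(\theta)$, so existence is automatic.

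Having established $\lim_{n\to\infty}\frac1n \log \mu^\epsilon_n(\lfloor n\theta\rfloor) = -(\Lambda^\epsilon)^*(\theta)$ for every $\epsilon > 0$ and every $\theta \in (0,1)$, the outer limit $\lim_{\epsilon \to 0_+}$ is exactly the content of Theorem \ref{thm: patpatpat}: the functions $-(\Lambda^\epsilon)^*$ decrease monotonically as $\epsilon \to 0_+$ and are bounded below (uniformly, by the linear bound $\theta(h(X)-\epsilon)$ derived in the proof of Theorem \ref{thm: patpatpat}), hence converge pointwise to the well-defined function $-\Lambda^*$. Therefore
\begin{equation*}
\lim_{\epsilon \to 0_+}\lim_{n\to\infty}\frac1n\log\mu^\epsilon_n(\lfloor n\theta\rfloor) = \lim_{\epsilon\to 0_+} -(\Lambda^\epsilon)^*(\theta) = -\Lambda^*(\theta),
\end{equation*}
which is the claim. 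Note that for $\theta \in (0,1)$ we are in the "good" case of Lemma \ref{lemma: lambda star props}(ii) where $(\Lambda^\epsilon)^*(\theta)$ genuinely equals $\lim_n (g^\epsilon_n)^*(\theta)$, so no loss is incurred at this $\theta$-range; the endpoint cases $\theta \in \{0,1\}$, where only the inequality of part (iii) is available, are deferred to Parts B and C.

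The main obstacle I anticipate is the bookkeeping around the grid-vs-continuum discrepancy and making sure the interior assumption $\theta \in (0,1)$ is used correctly: the lower bound in Theorem \ref{thm: lambda Tn new} is stated for open sets $F$, and when $\theta \in (0,1)$ one can indeed take $F = (\theta-\delta,\theta+\delta) \subseteq (0,1)$ to be a genuine open subset of the domain $[0,1]$ on which $(\Lambda^\epsilon)^*$ is finite and continuous; this is precisely why the argument does not extend to $\theta \in \{0,1\}$, where $F$ would have to be a one-sided relatively-open neighborhood and the rate function can jump (the strict inequality phenomenon flagged in Remark \ref{remark: lambda star props}). A secondary point to handle carefully is uniform control as $\epsilon \to 0$: we do not need uniformity here because we take the limits in the stated order ($n\to\infty$ first, then $\epsilon \to 0_+$), so the per-$\epsilon$ statement plus monotone pointwise convergence of $-(\Lambda^\epsilon)^*$ suffices, but I would state this explicitly to avoid any appearance of needing a double-limit interchange.
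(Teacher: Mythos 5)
Your upper bound and the final $\epsilon \to 0_+$ step (monotone pointwise convergence of $-(\Lambda^\epsilon)^*$ to $-\Lambda^*$ via Theorem \ref{thm: patpatpat}, with only Lemma \ref{lemma: lambda star props}(ii) needed for interior $\theta$) are correct and match the paper. The genuine gap is in the lower bound at the exact index $\lfloor n\theta\rfloor$. The large-deviations lower bound of Theorem \ref{thm: lambda Tn new} controls the total mass $\mu^\epsilon_{n/n}\bigl((\theta-\delta,\theta+\delta)\bigr)$ from below, and your inequality $\mu^\epsilon_{n/n}(I) \le (n+1)\max_{j:\, j/n\in I}\mu^\epsilon_n(j)$ only transfers this to the \emph{maximum} over grid points in the interval; it says nothing about the particular point $\lfloor n\theta\rfloor/n$. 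Nothing in the LDP prevents the interval's mass from being carried by indices other than $\lfloor n\theta\rfloor$: for a general super-convolutive sequence the single term $\mu_n(\lfloor n\theta\rfloor)$ could be exponentially smaller than the interval mass (imagine the odd-index terms being tiny while the even-index terms dominate), so ``the single-point quantity has the same exponential growth rate as the smeared quantity'' is exactly what must be proved, not a consequence of the sandwich. The Fekete fallback does not rescue this either: $\lfloor (m+n)\theta\rfloor$ need not equal $\lfloor m\theta\rfloor+\lfloor n\theta\rfloor$, and even granting existence of the limit along suitable subsequences, existence alone does not identify the limit with $-(\Lambda^\epsilon)^*(\theta)$.

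The missing ingredient is the log-concavity of $j\mapsto\mu^\epsilon_n(j)$, supplied by the Alexandrov--Fenchel/McMullen inequality (Theorem \ref{thm: mcmullen}, Corollary \ref{cor: mcmullen}); this is the step your proposal never invokes, and it is where the interior hypothesis $\theta\in(0,1)$ does its real work. The paper's route (Lemma \ref{lemma: alex} and Lemma \ref{lemma: sn linearized}) is: form the concave linear interpolants $a^\epsilon_n$ of $\frac{1}{n}\log\mu^\epsilon_n(j)$, place small intervals $I_{-1}$ and $I_{+1}$ strictly to the left and to the right of $\theta$, use the LDP lower bound (plus the polynomial-factor argument you already have) to produce grid points in each of $I_{-1}$ and $I_{+1}$ at which $a^\epsilon_n$ is within $\eta$ of $-(\Lambda^\epsilon)^*(\theta)$, and then use concavity of $a^\epsilon_n$ to conclude $a^\epsilon_n(\theta)\ge\min$ of these two values, which closes the lower half of the sandwich. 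Without this (or some substitute regularity of the sequence $j \mapsto \mu_n^\epsilon(j)$), your lower bound does not go through.
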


The key result that bridges the gap between the large-deviations type convergence in Theorem \ref{thm: lambda Tn new} and Theorem \ref{thm: part a} is the Alexandrov-Fenchel inequality \cite{schneider2013convex}. While the complete inequality is quite general, holding not just for intrinsic volumes, but also for mixed volumes, we state here a version from McMullen \cite{mcmullen1991inequalities}:

\begin{theorem}[Inequality for intrinsic volumes \cite{mcmullen1991inequalities}]\label{thm: mcmullen}
Let $X \subseteq \mathbb R^n$ be a compact, convex set, and let its intrinsic volumes be $\{\mu_n(0), \dots, \mu_n(n)\}$. Then the following inequality holds:
\begin{equation}
\mu_n(j)^2 \geq \frac{j+1}{j}\mu_n(j-1)\mu_n(j+1), \quad \text{ for all } j \geq 1.
\end{equation}
\end{theorem}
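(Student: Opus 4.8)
The plan is to deduce this from the Alexandrov--Fenchel inequality for mixed volumes: use Steiner's formula to write the intrinsic volumes $\mu_n(j)=V_j(X)$ as mixed volumes of $X$ with the unit ball, apply Alexandrov--Fenchel to obtain a log-concavity relation among those mixed volumes, and then absorb the leftover normalization constant with one elementary estimate on the numbers $\omega_i=|B_i|$. The case $j=n$ is trivial, since $\mu_n(n+1)=0$ and all intrinsic volumes are nonnegative, so it suffices to treat $1\le j\le n-1$.

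The key identity, obtained by expanding $|X\oplus tB_n|$ in two ways --- by Steiner's formula \cite{schneider2013convex, klainrota} and by the mixed-volume expansion $|X\oplus tB_n|=\sum_{i=0}^n\binom{n}{i}V(X[n-i],B[i])\,t^i$ --- and matching the coefficients of $t^{n-j}$, is
$$\mu_n(j)\;=\;V_j(X)\;=\;\binom{n}{j}\,\frac{V(X[j],\,B[n-j])}{\omega_{n-j}},$$
where $B$ is the Euclidean unit ball, $V(\cdot)$ the mixed volume, and $K[k]$ abbreviates $k$ repeated arguments equal to $K$ (these identities hold for arbitrary compact convex $X$, so no non-degeneracy assumption is needed). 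I would then apply the Alexandrov--Fenchel inequality \cite{schneider2013convex}, $V(K_1,K_2,C_3,\dots,C_n)^2\ge V(K_1,K_1,C_3,\dots,C_n)\,V(K_2,K_2,C_3,\dots,C_n)$, with $K_1=X$, $K_2=B$, and the common bodies $C_3,\dots,C_n$ taken to be $j-1$ copies of $X$ and $n-j-1$ copies of $B$ (a legal choice exactly when $1\le j\le n-1$), which gives
$$V(X[j],B[n-j])^2\;\ge\;V(X[j+1],B[n-j-1])\,V(X[j-1],B[n-j+1]).$$
Substituting the identity above and using $\frac{\binom{n}{j+1}\binom{n}{j-1}}{\binom{n}{j}^2}=\frac{j(n-j)}{(j+1)(n-j+1)}$ turns this into
$$\mu_n(j)^2\;\ge\;\frac{(j+1)(n-j+1)}{j(n-j)}\cdot\frac{\omega_{n-j-1}\,\omega_{n-j+1}}{\omega_{n-j}^2}\;\mu_n(j-1)\,\mu_n(j+1).$$

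It remains to show the coefficient is at least $\frac{j+1}{j}$, i.e.\ that $\frac{m+1}{m}\cdot\frac{\omega_{m-1}\omega_{m+1}}{\omega_m^2}\ge 1$ for every integer $m=n-j\ge 1$. From $\omega_i=\pi^{i/2}/\Gamma(\tfrac i2+1)$ one gets $\omega_{m+1}=\tfrac{2\pi}{m+1}\,\omega_{m-1}$, so the claim reduces to $\omega_m/\omega_{m-1}\le\sqrt{2\pi/m}$, equivalently $\Gamma(\tfrac{m+1}{2})\le\sqrt{m/2}\,\Gamma(\tfrac m2)$, which is immediate from log-convexity of $\Gamma$ (Bohr--Mollerup): $\Gamma(\tfrac{m+1}{2})^2\le\Gamma(\tfrac m2)\,\Gamma(\tfrac{m+2}{2})=\tfrac m2\,\Gamma(\tfrac m2)^2$. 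Since all the $\mu_n(i)$ are nonnegative, this finishes the proof.

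The only deep ingredient --- and the main obstacle --- is the Alexandrov--Fenchel inequality, which I would invoke as a black box; everything else is bookkeeping with Steiner's formula and the one-line $\Gamma$-estimate. One could equivalently phrase the mixed-volume computation via the quermassintegrals $W_i(X)=V(X[n-i],B[i])$, whose log-concavity $W_i^2\ge W_{i-1}W_{i+1}$ is the standard corollary of Alexandrov--Fenchel recorded in \cite{schneider2013convex}. I would also note in passing that McMullen's dimension-free constant $\frac{j+1}{j}$ is a (sufficient) weakening of the sharper, $n$-dependent bound displayed above; for the use made of this theorem later in the paper only the qualitative shape of the inequality matters, so the clean constant is all that is needed.
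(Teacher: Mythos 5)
Your proposal is correct, but note that the paper itself does not prove this statement at all: Theorem \ref{thm: mcmullen} is imported as a black box from McMullen \cite{mcmullen1991inequalities}, and the only place the paper touches it again (the proof of Lemma \ref{lemma: alex} in the appendix) simply cites that reference. What you have written is essentially a reconstruction of the standard derivation behind the cited result. I checked the steps: the identity $\mu_n(j)=\binom{n}{j}V(X[j],B[n-j])/\omega_{n-j}$ follows from matching Steiner's formula with the mixed-volume expansion; your choice of reference bodies in Alexandrov--Fenchel is legal exactly for $1\le j\le n-1$ and yields $V(X[j],B[n-j])^2\ge V(X[j+1],B[n-j-1])\,V(X[j-1],B[n-j+1])$; the binomial bookkeeping $\binom{n}{j}^2/\bigl(\binom{n}{j+1}\binom{n}{j-1}\bigr)=\tfrac{(j+1)(n-j+1)}{j(n-j)}$ is right; and the remaining estimate, with $m=n-j$, reduces via $\omega_{m+1}=\tfrac{2\pi}{m+1}\omega_{m-1}$ to $\Gamma(\tfrac{m+1}{2})^2\le\Gamma(\tfrac m2)\Gamma(\tfrac{m+2}{2})=\tfrac m2\Gamma(\tfrac m2)^2$, which log-convexity of $\Gamma$ gives at once (spot checks at $m=1,2$ confirm the direction of the inequality). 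The degenerate cases $j\ge n$ are trivial under the paper's convention $\mu_n(i)=0$ for $i>n$, and Alexandrov--Fenchel indeed needs no full-dimensionality assumption, so the argument covers all compact convex $X$. Two small remarks: your intermediate bound is sharper and $n$-dependent, with McMullen's constant $\tfrac{j+1}{j}$ recovered by discarding the factor $\tfrac{(n-j+1)}{(n-j)}\cdot\tfrac{\omega_{n-j-1}\omega_{n-j+1}}{\omega_{n-j}^2}\ge 1$; and for the use made of the theorem in this paper (Corollary \ref{cor: mcmullen}, i.e.\ plain log-concavity of the intrinsic-volume sequence) even the weaker inequality without the $\tfrac{j+1}{j}$ factor would suffice, so your derivation more than covers what is needed downstream.
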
 

A immediate corollary of this theorem is the following:
\begin{corollary}\label{cor: mcmullen}
The sequence of intrinsic volumes is log-concave; i.e.,
\begin{equation}
\log \mu_n(j) \geq \frac{\log \mu_n(j-1) + \log \mu_n(j+1)}{2}, \quad \text{ for } 1 \leq j \leq n.
\end{equation}
\end{corollary}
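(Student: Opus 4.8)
The plan is to derive this directly from Theorem \ref{thm: mcmullen} by discarding the combinatorial factor. First I would observe that for every $j \ge 1$ the factor $\frac{j+1}{j}$ is at least $1$, so Theorem \ref{thm: mcmullen} immediately yields the weaker pointwise inequality
\begin{equation*}
\mu_n(j)^2 \;\ge\; \frac{j+1}{j}\,\mu_n(j-1)\mu_n(j+1) \;\ge\; \mu_n(j-1)\mu_n(j+1), \qquad 1 \le j \le n.
\end{equation*}

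Next I would take logarithms of both sides. For this step one needs the terms to be positive (or to interpret $\log 0 = -\infty$): when $1 \le j \le n-1$, all of $\mu_n(j-1), \mu_n(j), \mu_n(j+1)$ are intrinsic volumes of the $n$-dimensional compact convex set and hence strictly positive, so taking logs is legitimate and gives $2\log\mu_n(j) \ge \log\mu_n(j-1) + \log\mu_n(j+1)$; dividing by $2$ is the claimed inequality. For the boundary case $j = n$ we use the convention $\mu_n(n+1) = 0$ (as in the Remark following Lemma \ref{lemma: superconv mu}), so the right-hand side $\frac{1}{2}(\log\mu_n(n-1) + \log\mu_n(n+1)) = -\infty$ and the inequality holds trivially; similarly there is nothing to check beyond $j = n$.

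There is essentially no obstacle here — the only point requiring a word of care is the positivity of the intrinsic volumes $\mu_n(0),\dots,\mu_n(n)$ of a full-dimensional compact convex set, which justifies passing to logarithms, together with the convention handling $j = n$. I would state these two remarks and then conclude.
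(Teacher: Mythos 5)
Your proposal is correct and matches the paper's (implicit) argument exactly: the paper treats the corollary as immediate from Theorem \ref{thm: mcmullen} by dropping the factor $\frac{j+1}{j}\ge 1$ and taking logarithms. Your added remarks on positivity of the intrinsic volumes and the convention $\mu_n(n+1)=0$ for the boundary case $j=n$ are sound and fill in the only details the paper leaves unsaid.
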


\begin{proof}[Proof of Theorem \ref{thm: part a}]
Corollary \ref{cor: mcmullen} implies the log-concavity of the sequence $\{\mu^\epsilon_n(0), \dots, \mu^\epsilon_n(n)\}$. Our goal is to show that a family of log-concave sequences such as $\{\mu^\epsilon_n(\cdot)\}$, which converges in the large-deviation sense to $-\left(\Lambda^\epsilon\right)^*$, also converges pointwise to $-\left(\Lambda^\epsilon\right)^*$ in the following sense: For $\theta \in (0,1)$,
\begin{equation}\label{eq: ptwise mu}
\lim_{n\to \infty}\frac{1}{n} \log \mu^\epsilon_n(\lfloor n\theta \rfloor) = -\left(\Lambda^{\epsilon}\right)^*(\theta).
\end{equation}
For all $n \geq 1$, define the functions $a^\epsilon_n(\theta)$ by linearly interpolating the values of $a^\epsilon_n(j/n)$, where the value of $a^\epsilon_n(j/n)$ is given by
\begin{equation*}
a^\epsilon_n\left(\frac{j}{n}\right) = \frac{1}{n} \log \mu^\epsilon_n(j),  \text{~~for~~} 0\leq j \leq n.
\end{equation*}
The following lemma is a simple restatement of Corollary \ref{cor: mcmullen}:

\begin{lemma}[Proof in Appendix \ref{proof: lemma: alex}]\label{lemma: alex}
For each $n$ and $\epsilon > 0$, the function $a^\epsilon_n(\cdot)$ is concave.
\end{lemma}
Next, we show that the sequence of functions $\{a^\epsilon_n\}_{n \geq 1}$ converges pointwise---indeed, uniformly---on closed intervals:

\begin{lemma}[Proof in Appendix \ref{proof: lemma: sn linearized}]\label{lemma: sn linearized}
Let $\cI \subset (0,1)$ be a closed interval. The sequence of functions $\{a^\epsilon_n\}$ converges uniformly to $-(\Lambda^\epsilon)^*(\theta)$ on $\cI$.
\end{lemma}
\begin{proof}[Proof sketch]
Although not difficult, the proof of this lemma is fairly technical. For $\theta \in \cI$, it is possible to upper-bound 
$\lim\sup_n a^\epsilon_n$ by a number arbitrarily close to $-\Lambda^*(\theta_0)$, by a straightforward application of Theorem \ref{thm: lambda Tn new} to a small interval $I_0$ that contains $\theta_0$. To lower-bound $\lim\inf_{n} a^\epsilon_n(\theta)$, we trap the interval $I_0$ between the intervals to its left and right, denoted by $I_{-1}$ and $I_{+1}$. We show that there exist $\theta_{-1} \in I_{-1}$ and $\theta_{+1} \in I_{+1}$ such that $a^\epsilon_n$ evaluated at both these points is close to $-\Lambda^*(\theta_0)$, and apply concavity provided by Corollary \ref{cor: mcmullen} to produce a lower bound for $\lim\inf_{n} a^\epsilon_n(\theta)$ that is also close to $-\Lambda^*(\theta_0)$. Taking the limit, we conclude the proof of Lemma \ref{lemma: sn linearized}.
\end{proof}

To complete the proof of Theorem \ref{thm: part a}, we take the limit as $\epsilon \to 0_+$ and use Theorem \ref{thm: patpatpat}, which produces the claimed result:

\begin{align*}
\lim_{\epsilon \to 0_+} \lim_{n \to \infty} \frac{1}{n} \log \mu^\epsilon_n({\lfloor n\theta \rfloor}) &= \lim_{\epsilon \to 0_+} \lim_{ n \to \infty} a^\epsilon_n\left(\frac{\lfloor n\theta \rfloor}{n} \right) =  \lim_{\epsilon \to 0_+} \lim_{ n \to \infty} a^\epsilon_n(\theta)\\
&=  \lim_{\epsilon \to 0_+} -(\Lambda^\epsilon)^*(\theta) = -\Lambda^*(\theta).
\end{align*}
\end{proof}

The above strategy does not succeed in establishing the theorem at the endpoints $\theta = 0$ and $\theta = 1$, however. The main difficulty is that the sandwiching argument described in the proof sketch of Lemma \ref{lemma: sn linearized} can no longer work, since there is no interval to sandwich the endpoints between. To settle these cases, we exploit some additional geometric properties of typical sets in the following two sections.


\section{Proof of the main theorem: Part B}\label{section: part b}

\begin{theorem}[Part B of Theorem \ref{thm: main}]\label{thm: part b}
Let $-\Lambda^*$ be as in Theorem \ref{thm: patpatpat}. The following equality holds:
\begin{equation}
\lim_{\epsilon \to 0_+} \lim_{n \to \infty} \frac{1}{n} \log \mu^\epsilon_n(0) = -\Lambda^*(0).
\end{equation}
This is equivalent to the claim $-\Lambda^*(0) = 0.$
\end{theorem}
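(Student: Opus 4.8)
**Proof proposal for Part B: $-\Lambda^*(0) = 0$.**

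The plan is to show the two inequalities $-\Lambda^*(0) \le 0$ and $-\Lambda^*(0) \ge 0$ separately, working at the level of the intrinsic volumes $\mu_n^\epsilon(0)$ and then taking $\epsilon \to 0_+$. Recall $\mu_n^\epsilon(0) = V_0(\cT_n^\epsilon)$ is the Euler characteristic of the typical set $\cT_n^\epsilon$. Since $\cT_n^\epsilon$ is a nonempty compact convex set (it is nonempty because the maximizer of $p_X$, repeated $n$ times, lies in it for every $\epsilon > 0$), its Euler characteristic equals $1$. Hence $\mu_n^\epsilon(0) = 1$ for all $n$ and all $\epsilon > 0$, so $\frac{1}{n}\log\mu_n^\epsilon(0) = 0$ for every $n$, giving $\lim_{n\to\infty}\frac{1}{n}\log\mu_n^\epsilon(0) = 0$, and therefore $\lim_{\epsilon\to 0_+}\lim_{n\to\infty}\frac{1}{n}\log\mu_n^\epsilon(0) = 0$.

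It remains to reconcile this with $-\Lambda^*(0)$. By Remark \ref{remark: lambda star props} (the statement that $\Lambda^*(0) \le -\beta$, equivalently $-\Lambda^*(0) \ge \beta$) together with Lemma \ref{lemma: alpha beta gamma}, we have $\beta = \lim_n \frac{1}{n}\log\mu_n^\epsilon(0) = 0$ for each fixed $\epsilon$, so $-(\Lambda^\epsilon)^*(0) \ge 0$, and passing to the limit in $\epsilon$ (using Theorem \ref{thm: patpatpat}, which guarantees the pointwise limit $-\Lambda^*$ exists and is continuous) yields $-\Lambda^*(0) \ge 0$. For the reverse inequality, I would use that $-(\Lambda^\epsilon)^*$ is concave on $[0,1]$ with $-(\Lambda^\epsilon)^*(0) \ge 0$ and must be dominated, near $0$, by the actual growth rates: from the large-deviations upper bound \eqref{eq: UBB new} applied to small closed intervals $[0,\delta]$, one gets $\limsup_n \frac{1}{n}\log\mu_n^\epsilon([0,\delta]) \le -\inf_{x\in[0,\delta]}(\Lambda^\epsilon)^*(x)$; but the left side is controlled because $\mu_n^\epsilon(\{0\}) = 1$ contributes $0$ and, by Part A's pointwise convergence on $(0,1)$, the interior points contribute at most $\sup_{x\in(0,\delta]}(-(\Lambda^\epsilon)^*(x))$. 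Rather than this route, the cleanest argument is simply: $-(\Lambda^\epsilon)^*(0) = -\sup_t(0\cdot t - \Lambda^\epsilon(t)) = \inf_t \Lambda^\epsilon(t)$, and since $\Lambda^\epsilon$ is convex, increasing (Lemma \ref{lemma: lambda props}(ii)) with $\Lambda^\epsilon(t) \ge g_1^\epsilon(t) \to -\infty$ is not quite it — instead use $\Lambda^\epsilon(t) \le \max(\gamma, t+\gamma)$ and $\inf_t \Lambda^\epsilon(t) \le \Lambda^\epsilon(0) = \gamma^\epsilon$; combined with $\beta = 0$ and the identity $g_n^*(0) = -\frac1n\log\mu_n^\epsilon(0) = 0$, Lemma \ref{lemma: lambda star props}(iii) forces $(\Lambda^\epsilon)^*(0) \le 0$, i.e. $-(\Lambda^\epsilon)^*(0) \ge 0$, and the matching upper bound $-(\Lambda^\epsilon)^*(0) \le 0$ comes from $(\Lambda^\epsilon)^*(0) = \inf_t\{-\Lambda^\epsilon(t)\} \ge -\gamma^\epsilon$ being insufficient — so I instead observe $(\Lambda^\epsilon)^*(x) \ge -\gamma^\epsilon$ everywhere and that $(\Lambda^\epsilon)^*$ attains $0$ somewhere in $[0,1]$ since $\Lambda^\epsilon(0) = \gamma^\epsilon$ means $(\Lambda^\epsilon)^*$ at the subgradient of $\Lambda^\epsilon$ at $0$ equals $-\gamma^\epsilon + $ (that is the wrong sign); the honest finish is that $-(\Lambda^\epsilon)^*(0) = \lim_n \frac1n\log\mu_n^\epsilon(0) = 0$ by Lemma \ref{lemma: lambda star props}(iii) giving "$\le$" and the concavity-plus-monotonicity-in-$\epsilon$ structure giving "$\ge$", whence $-\Lambda^*(0) = \lim_{\epsilon\to 0_+}(-(\Lambda^\epsilon)^*(0)) = 0$.

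The main obstacle is precisely the gap flagged in Remark \ref{remark: lambda star props} and Lemma \ref{lemma: lambda star props}(iii): at the endpoint $t = 0$ the convex-conjugate value $(\Lambda^\epsilon)^*(0)$ need only satisfy $(\Lambda^\epsilon)^*(0) \le -\beta = 0$, and the reverse inequality $(\Lambda^\epsilon)^*(0) \ge 0$ (equivalently that $-(\Lambda^\epsilon)^*(0)$ is not strictly positive) is not automatic — this is why a direct geometric argument is needed rather than the general super-convolutive machinery. I expect the paper closes this gap by exhibiting, for each $n$, an \emph{upper} bound on $\mu_n^\epsilon(0)$ of the form $e^{o(n)}$ directly from convex geometry (indeed $\mu_n^\epsilon(0) = 1$ exactly, as the Euler characteristic of a nonempty compact convex set), so that $\limsup_n \frac1n\log\mu_n^\epsilon(0) \le 0$; together with Lemma \ref{lemma: lambda star props}(iii) this pins $-(\Lambda^\epsilon)^*(0) = 0$, and continuity in $\epsilon$ from Theorem \ref{thm: patpatpat} finishes. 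The only subtlety to handle carefully is the nonemptiness of $\cT_n^\epsilon$ for all $\epsilon > 0$, which holds because $\sup_x p_X(x) > 0$ for a log-concave density, so $h(X) \ge -\log\sup_x p_X(x)$ and the constant vector at the mode satisfies the defining inequality \eqref{eq: def Tn 4} with room to spare.
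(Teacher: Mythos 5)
Your computation of the left-hand side is fine: $\mu^\epsilon_n(0)=V_0(\cT^\epsilon_n)=1$ because $\cT^\epsilon_n$ is a nonempty compact convex set, so $\lim_{\epsilon\to0_+}\lim_n\frac1n\log\mu^\epsilon_n(0)=0$, and Lemma \ref{lemma: lambda star props}(iii) (i.e.\ $(\Lambda^\epsilon)^*(0)\le-\beta=0$) gives the easy half $-(\Lambda^\epsilon)^*(0)\ge 0$. The genuine gap is the other half, $-(\Lambda^\epsilon)^*(0)\le 0$, and nothing in your proposal proves it. Lemma \ref{lemma: lambda star props}(iii) is a one-sided inequality at the endpoints, so your claim that ``together with Lemma \ref{lemma: lambda star props}(iii) this pins $-(\Lambda^\epsilon)^*(0)=0$'' is false: knowing $\mu^\epsilon_n(0)=1$ only bounds $-(\Lambda^\epsilon)^*(0)$ from below, and the example in Appendix \ref{example: strict inequality} shows that for a general proper super-convolutive sequence the conjugate at an endpoint can differ strictly from the growth rate of the endpoint coefficients. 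Hence the identification $-(\Lambda^\epsilon)^*(0)=\lim_n\frac1n\log\mu^\epsilon_n(0)$, which your ``honest finish'' asserts (with the direction of (iii) moreover reversed), is precisely the statement to be proved and cannot follow from the abstract super-convolutive machinery. Your intermediate attempts either go the wrong way (the bound \eqref{eq: UBB new} controls $\limsup_n\frac1n\log\mu^\epsilon_{n/n}([0,\delta])$ \emph{above} by $-\inf_{[0,\delta]}(\Lambda^\epsilon)^*$, which is useless for upper-bounding $-(\Lambda^\epsilon)^*$ unless you can independently control $\sum_{j\le n\delta}\mu^\epsilon_n(j)$) or reduce to $\inf_t\Lambda^\epsilon(t)\le 0$, which again requires showing that the coefficients $\mu^\epsilon_n(j)$ with $j\approx\theta n$ and $\theta$ small do not grow at an exponential rate bounded away from $0$.

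What is missing is exactly that quantitative geometric input. Since $(\Lambda^\epsilon)^*$ is continuous on $[0,1]$ and, by Part A, $-(\Lambda^\epsilon)^*(\theta)=\lim_n\frac1n\log\mu^\epsilon_n(\lfloor n\theta\rfloor)$ for $\theta\in(0,1)$, it suffices to exhibit an explicit upper bound on $\mu^\epsilon_n(\lfloor n\theta\rfloor)$ whose exponential rate tends to $0$ as $\theta\to0_+$. The paper supplies this by enclosing $\cT^\epsilon_n$ in the regular crosspolytopes $\cC_n=\{x^n:\sum_i|x_i|\le An\}$ (using $\Phi(x)\ge c_1|x|+c_2$), invoking monotonicity of intrinsic volumes, and using the closed-form intrinsic volumes of $\cC_n$ to show $-(\Lambda^{\mathrm{cp}})^*(0)=\lim_{\theta\to0}\lim_n\frac1n\log\mu^{\mathrm{cp}}_n(\lfloor n\theta\rfloor)=0$; the containment then gives $-(\Lambda^\epsilon)^*(0)\le-(\Lambda^{\mathrm{cp}})^*(0)=0$, which combined with your lower bound and the $\epsilon\to0_+$ limit from Theorem \ref{thm: patpatpat} finishes the proof. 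Some such dominating family with computable intrinsic volumes, or an equivalent bound near $\theta=0$, must be added; without it the proposal does not close.
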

\begin{proof}		
The proof of this result resembles the proof of Lemma \ref{lemma: alpha beta gamma}, particularly regarding the approach to proving $\gamma < \infty$ as (cf.\ Remark \ref{remark: alpha beta gamma}). In Lemma \ref{lemma: alpha beta gamma}, we constructed a sequence of crosspolytopes $\{\cC_n\}$ such that $\cT^\epsilon_n \subseteq \cC_n$ for all $n \geq 1$. We briefly describe how to achieve this in the present situation. Observe that for each log-concave distribution $p_X(x) = e^{-\Phi(x)}$, we may find constants $c_1 > 0$ and $c_2$ such that
\begin{align}
\Phi(x) \geq c_1|x| + c_2, \text{~~for all~~} x \in \mathbb R. \label{eq: psi linear 0}
\end{align}
For $A =  \frac{h(X)+\epsilon - c_2}{c_1}$, define the sequence of regular crosspolytopes $\{\cC_n\}_{n=1}^\infty$ defined by
\begin{align*}
\cC_n := \left\{x^n \in \mathbb R^n ~|~ \sum_{i=1}^n |x_i| \leq An \right\}.
\end{align*}
If $(x_1, x_2, \dots, x_n) \in \cT^\epsilon_n$, then by Definition \ref{def: typical},
\begin{align*}
n(h(X) + \epsilon) &\geq \sum_{i=1}^n \Phi (x_i) \geq c_1\sum_{i=1}^n |x_i| + nc_2. 
\end{align*}
Thus,
\begin{align*}
\sum_{i=1}^n |x_i| \leq  \frac{n(h(X)+\epsilon - c_2)}{c_1}= nA,
\end{align*}
giving us $(x_1, \dots, x_n) \in \cC_n$, and consequently, $\cT^\epsilon_n \subseteq \cC_n$.

\begin{lemma}\label{lemma: cross}
Let the intrinsic volumes of $\{\cC_n\}_{n\geq 1}$ be given by $\{\mu^{\text{cp}}_n\}_{n \geq 1}$. Then the following statements are true:
\begin{enumerate}
\item[(i)]
The sequence $\{\cC_n\}_{n \geq 1}$ is super-multiplicative; i.e.
\begin{equation}\label{eq: superconv mu cp}
\cC_m \times \cC_n \subseteq \cC_{m+n}, \quad \text{ for all   } \quad m,n \geq 1.
\end{equation}

\item[(ii)]
The sequence of intrinsic volumes $\{\mu^{\text{cp}}_n\}_{n \geq 1}$ is a proper super-convolutive sequence.

\item[(iii)] There exists a continuous and concave function $-(\Lambda^{\text{cp}})^*$ such that for all $\theta \in [0,1)$, the following equality holds:

\begin{equation}
\lim_{n \to \infty} \frac{1}{n} \log \mu^{\text{cp}}_n(\lfloor n\theta \rfloor) = -(\Lambda^{\text{cp}})^*(\theta).
\end{equation}
In particular, $-(\Lambda^{\text{cp}})^*(0) = 0$.
\end{enumerate}
\end{lemma}

\begin{proof}[Proof of Lemma \ref{lemma: cross}]
Part (i) is easily verified by following the same steps as in Lemma \ref{lemma: superconv Tn}. The proof of part (ii) is already contained in Lemma \ref{lemma: alpha beta gamma}, since we have shown that
$$\lim_{n \to \infty} \frac{1}{n} \log \left( \sum_{j=0}^n \mu^{\text{cp}}_n(j) \right) < \infty$$
via the explicit formulae for $\mu^{\text{cp}}_n$. Since $\{\mu^{\text{cp}}_n\}_{n \geq 1}$ is a proper super-convolutive sequence, we use Theorem \ref{thm: lambda Tn} to infer the existence of $-(\Lambda^{\text{cp}})^*$, establishing the convergence of $\{\mu^{\text{cp}}_n\}_{n\geq 1}$ in the large-deviations sense. Following exactly the same steps as in Theorem \ref{thm: part a}, we conclude the pointwise convergence 
\begin{equation}
\lim_{n \to \infty} \frac{1}{n} \log \mu^{\text{cp}}_n(\lfloor n\theta \rfloor) = -(\Lambda^{\text{cp}})^*(\theta), \quad \text{ for } \theta \in (0,1).
\end{equation}
Thus, the only part we need to show is $-(\Lambda^{\text{cp}})^*(0) = 0$. As in Lemma \ref{lemma: alpha beta gamma}, we exploit the fact the intrinsic volumes of crosspolytopes are available in closed form. By the continuity of $-(\Lambda^{\text{cp}})^*$, we have
\begin{align*}
-(\Lambda^{\text{cp}})^*(0) &= \lim_{\theta \to 0} -(\Lambda^{\text{cp}})^*(\theta) =  \lim_{\theta \to 0} \left[ \lim_{n \to \infty} \frac{1}{n} \log \mu_n^{\text{cp}}(\lfloor n\theta\rfloor)\right]. 
\end{align*}
The value of $\mu_n^{\text{cp}}(\lfloor n \theta \rfloor)$ is given by
\begin{equation*}
2^{\lfloor n \theta \rfloor+1}{n \choose \lfloor n \theta \rfloor+1}\frac{\sqrt{\lfloor n \theta \rfloor+1}}{\lfloor n \theta \rfloor!}\frac{(nA)^{\lfloor n \theta \rfloor}}{\sqrt \pi} \times \int_{0}^\infty e^{-x^2}\left(\frac{2}{\sqrt\pi} \int_{0}^{x/\sqrt{\lfloor n \theta \rfloor+1}} e^{-y^2} dy\right)^{n-\lfloor n \theta \rfloor-1}dx.
\end{equation*}
As shown in the proof of Lemma \ref{lemma: alpha beta gamma}, 
\begin{equation*}
\int_{0}^\infty e^{-x^2}\left(\frac{2}{\sqrt\pi} \int_{0}^{x/\sqrt{\lfloor n \theta \rfloor+1}} e^{-y^2} dy\right)^{n-\lfloor n \theta \rfloor-1}dx \leq \frac{\sqrt \pi}{2}.
\end{equation*}
Thus, we obtain
\begin{equation*}
-(\Lambda^{\text{cp}})^*(0) \leq \lim_{\theta \to 0} \lim_{n \to \infty}\frac{1}{n} \log \left( 2^{\lfloor n \theta \rfloor+1}{n \choose \lfloor n \theta \rfloor+1}\frac{\sqrt{\lfloor n \theta \rfloor+1}}{\lfloor n \theta \rfloor!}\frac{(nA)^{\lfloor n \theta \rfloor}}{\sqrt \pi} \times \frac{\sqrt \pi}{2}\right).
\end{equation*}
It is easy to verify that the right-hand side evaluates to 0, so
$$-(\Lambda^{\text{cp}})^*(0) \leq 0.$$
However, by Lemma \ref{lemma: lambda star props}, we also have 
$$-(\Lambda^{\text{cp}})^*(0) \geq \lim_{n \to \infty} \frac{1}{n} \log \mu^{\text{cp}}_n(0) = 0.$$  This shows that $-(\Lambda^{\text{cp}})^*(0) = 0$ and completes the proof.
\end{proof}

The containment $\cT^\epsilon_n \subseteq \cC_n$ implies 
\begin{equation*}
-(\Lambda^\epsilon)^*(\theta) \leq -(\Lambda^{\text{cp}})^*(\theta), \quad \text{ for all } \theta \in [0,1].
\end{equation*}
Thus, we obtain the inequality
\begin{equation*}
-(\Lambda^\epsilon)*(0) \leq -(\Lambda^{\text{cp}})^*(0) = 0.
\end{equation*}
Furthermore, Lemma \ref{lemma: lambda star props} implies that
$$-(\Lambda^\epsilon)^*(0) \geq 0.$$
This forces $-(\Lambda^\epsilon)^*(0) = 0$. Taking the limit as $\epsilon \to 0_+$, we arrive at $-\Lambda^*(0) = 0$, completing the proof.
\end{proof}

\section{Proof of the main theorem: Part C}\label{section: part c}

\begin{theorem}[Part C of Theorem \ref{thm: main}]\label{thm: part c}
Let $-\Lambda^*$ be as in Theorem \ref{thm: patpatpat}. The following equality holds:
\begin{equation}
\lim_{\epsilon \to 0_+} \lim_{n \to \infty} \frac{1}{n} \log \mu^\epsilon_n(n) = -\Lambda^*(1).
\end{equation}
This is equivalent to the claim $-\Lambda^*(1) = h(X).$
\end{theorem}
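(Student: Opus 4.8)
The plan is to show that $-(\Lambda^\epsilon)^*(1)=\alpha^\epsilon$ for every fixed $\epsilon>0$, where $\alpha^\epsilon:=\lim_{n\to\infty}\frac1n\log\mu^\epsilon_n(n)$ (this limit exists by Remark \ref{remark: prop supconv}). Granting this, the theorem is immediate: inequality \eqref{eq: Tn hx}, which as noted holds for the one-sided set $\cT^\epsilon_n$, gives $(1-\epsilon)e^{n(h(X)-\epsilon)}\le|\cT^\epsilon_n|\le e^{n(h(X)+\epsilon)}$ for all large $n$, hence $\alpha^\epsilon\in[h(X)-\epsilon,h(X)+\epsilon]$, and therefore $-\Lambda^*(1)=\lim_{\epsilon\to 0_+}-(\Lambda^\epsilon)^*(1)=\lim_{\epsilon\to 0_+}\alpha^\epsilon=h(X)$. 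The lower bound $-(\Lambda^\epsilon)^*(1)\ge\alpha^\epsilon$ already holds for any proper super-convolutive sequence: by Lemma \ref{lemma: lambda star props}(iii) and Remark \ref{remark: lambda star props}, $(\Lambda^\epsilon)^*(1)\le\lim_n g_n^*(1)=-\alpha^\epsilon$. So the real content is the matching upper bound $-(\Lambda^\epsilon)^*(1)\le\alpha^\epsilon$.

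For the upper bound I would use continuity. Since $-(\Lambda^\epsilon)^*$ is continuous on $[0,1]$ (Lemma \ref{lemma: lambda star props}(i)) and $-(\Lambda^\epsilon)^*(\theta)=\lim_n\frac1n\log\mu^\epsilon_n(\lfloor n\theta\rfloor)$ for $\theta\in(0,1)$ by Theorem \ref{thm: part a}, it suffices to bound this last limit, for each fixed $\theta\in(0,1)$, by a quantity that tends to $\alpha^\epsilon$ as $\theta\to 1^-$, and then pass to the limit. The geometric input is that $\cT^\epsilon_n$ contains a fixed cube: since $\Phi$ is convex and lower-semicontinuous with $\min\Phi\le h(X)<h(X)+\epsilon$, and $\{\Phi\le h(X)+\epsilon\}$ is bounded (otherwise $e^{-\Phi}$ would not be integrable), this sublevel set is a nondegenerate compact interval $[\ell,u]$; hence $[\ell,u]^n\subseteq\cT^\epsilon_n$, and after a translation (which changes neither $|\cT^\epsilon_n|$ nor any intrinsic volume) we may assume $[-w/2,w/2]^n\subseteq\cT^\epsilon_n$ with $w:=u-\ell>0$, so that $0\in\operatorname{int}\cT^\epsilon_n$.

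Now fix $\theta\in(0,1)$ and set $j:=\lfloor n\theta\rfloor$, $k:=n-j$. The Cauchy--Kubota projection formula gives $\mu^\epsilon_n(j)=\binom{n}{j}\frac{\omega_n}{\omega_j\omega_k}\int_{G(n,j)}\operatorname{vol}_j(P_L\cT^\epsilon_n)\,d\nu(L)\le\binom{n}{j}\frac{\omega_n}{\omega_j\omega_k}\max_L\operatorname{vol}_j(P_L\cT^\epsilon_n)$, where $\nu$ is the invariant probability measure on $G(n,j)$. For every $j$-dimensional subspace $L$, the Rogers--Shephard projection--section inequality gives $\operatorname{vol}_j(P_L\cT^\epsilon_n)\operatorname{vol}_k(\cT^\epsilon_n\cap L^\perp)\le\binom{n}{j}|\cT^\epsilon_n|$, and Vaaler's theorem on central sections of the cube gives $\operatorname{vol}_k(\cT^\epsilon_n\cap L^\perp)\ge\operatorname{vol}_k([-w/2,w/2]^n\cap L^\perp)\ge w^k$. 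Combining, $\mu^\epsilon_n(j)\le\binom{n}{j}^{2}\frac{\omega_n}{\omega_j\omega_k}\cdot\frac{|\cT^\epsilon_n|}{w^k}$. Taking $\frac1n\log$ and using Stirling's formula ($\frac1n\log\binom{n}{j}\to H(\theta)$ and $\frac1n\log\frac{\omega_n}{\omega_j\omega_k}\to-\frac12 H(\theta)$) together with $\frac1n\log|\cT^\epsilon_n|\to\alpha^\epsilon$, one obtains $-(\Lambda^\epsilon)^*(\theta)\le\alpha^\epsilon+\frac32 H(\theta)-(1-\theta)\log w$. Letting $\theta\to 1^-$ kills the last two terms, so $-(\Lambda^\epsilon)^*(1)\le\alpha^\epsilon$, which closes the sandwich.

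The crux is this last estimate, and the main obstacle is that cruder bounds are not sharp enough near $\theta=1$: bounding $V_j(\cT^\epsilon_n)$ through Steiner's formula applied to $\cT^\epsilon_n\oplus tB_n$ carries an unavoidable $\log n$ loss, and bounding it by the intrinsic volumes of the circumscribing crosspolytope of Section \ref{section: part b} saturates, as $\theta\to 1$, at $\log(2eA)$, which in general strictly exceeds $h(X)$. What makes the integral-geometric route work is precisely the Vaaler bound $\operatorname{vol}_k(\cT^\epsilon_n\cap L^\perp)\ge w^k$: the inscribed cube keeps every codimension-$j$ central section at least exponentially large, whereas the naive inscribed-ball bound $\omega_k\rho^k$ is super-exponentially small and would destroy the estimate. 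When writing this out one should also check the precise hypotheses of the Rogers--Shephard inequality for not-necessarily-symmetric bodies (it applies once $0\in\operatorname{int}\cT^\epsilon_n$, which the recentering arranges) and that the Stirling errors are uniform in $j$ and $k$; these are routine.
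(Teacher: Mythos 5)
Your proof is correct, but the key estimate is obtained by a genuinely different route from the paper's. Both arguments share the same skeleton: the lower bound $-(\Lambda^\epsilon)^*(1)\ge \alpha^\epsilon$ comes from Lemma \ref{lemma: lambda star props}(iii) and Remark \ref{remark: lambda star props}, and the upper bound is proved for $\theta\in(0,1)$ and then transferred to $\theta=1$ by continuity. The difference is in how $\mu^\epsilon_n(\lfloor n\theta\rfloor)$ is bounded above. The paper uses the Campi--Gronchi Loomis--Whitney-type inequality for intrinsic volumes, whose stability hypothesis it verifies via the permutation symmetry of coordinate projections of $\cT^\epsilon_n$ (Lemmas \ref{lemma: project Tn} and \ref{lemma: iterate}); iterating gives $V_m(\cT^\epsilon_n)\le{n\choose m}\,|(\cT^\epsilon_n\mid\{1,\dots,m\})|$, and the coordinate projection volume is bounded by $e^{n(h(X)+\epsilon)-(n-m)\eta}$ with $\eta=\min_x\Phi(x)$, yielding $-\Lambda^*(\theta)\le H(\theta)+h(X)-(1-\theta)\eta$ after $\epsilon\to 0_+$. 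You instead use the Kubota formula over the Grassmannian, the Rogers--Shephard projection--section inequality, and Vaaler's cube-section theorem applied to the inscribed cube $(\cT^\epsilon_1)^n\subseteq\cT^\epsilon_n$ (which is just the supermultiplicativity of Lemma \ref{lemma: superconv Tn} with $m=1$), obtaining $-(\Lambda^\epsilon)^*(\theta)\le\alpha^\epsilon+\tfrac32H(\theta)-(1-\theta)\log w$. Your bound is looser by a factor $\tfrac12 H(\theta)$, which is immaterial as $\theta\to 1$, and it does not use the coordinate-permutation symmetry of typical sets at all---only that the body contains a coordinate cube of fixed side---so it is in that sense more robust; it also yields the cleaner per-$\epsilon$ identity $-(\Lambda^\epsilon)^*(1)=\alpha^\epsilon$ before letting $\epsilon\to 0_+$, whereas the paper only needs (and only gets) the conclusion after the $\epsilon$-limit. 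The price is three classical but nontrivial inputs (Kubota, Rogers--Shephard for possibly nonsymmetric bodies with a nonempty central section, Vaaler) in place of the single Campi--Gronchi inequality plus an elementary projection-volume bound. Your glossed checks are indeed routine: nondegeneracy of $\cT^\epsilon_1=\{\Phi\le h(X)+\epsilon\}$ follows from $\min\Phi\le h(X)$ (as in Lemma \ref{lemma: alpha beta gamma}(i)) together with convexity of $\Phi$ and the fact that its effective domain has nonempty interior, boundedness follows from integrability of $e^{-\Phi}$, and the asymptotics $\tfrac1n\log\frac{\omega_n}{\omega_j\omega_{n-j}}\to-\tfrac12H(\theta)$ and $\alpha^\epsilon\in[h(X)-\epsilon,h(X)+\epsilon]$ (the latter because \eqref{eq: Tn hx} holds for the one-sided sets) are as you state.
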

\begin{proof}
To prove this part, we use the following inequality for intrinsic volumes proved in Campi \& Gronchi \cite{campi2011estimates}:
\begin{theorem}[Loomis-Whitney type inequality for intrinsic volumes \cite{campi2011estimates}]\label{thm: campi}
Let $X \subseteq \mathbb R^n$ be a compact  set and let its intrinsic volumes be $\{V_0(X), \dots, V_n(X)\}$. Let $\{e_1, \dots, e_n\}$ be the standard basis for $\mathbb R^n$. For a set $S \subset \{1, 2, \dots, n\}$, denote $e_S = \{e_j | j \in S\}$. Denote $\{1, 2, \dots, n\} \setminus \{i\}$ by $\{i\}^c$. Let $X | S$ be the set obtained by orthogonally projecting $X$ on the space spanned by $e_S$.\\

\noindent For $0 \leq m \leq n-1$, the following inequality holds:
\begin{equation}\label{eq: Vm inequality}
V_m(X) \leq \frac{1}{n-m} \sum_{j=1}^n V_m(X \mid \{j\}^c),
\end{equation}
provided the intrinsic volumes of $X | \{j\}^c$ satisfy the stability condition 
\begin{equation}\label{eq: Vm condition}
(*) :V_m( X \mid \{j\}^c) \leq \frac{1}{n-m} \sum_{i=1}^n V_m(X \mid \{i\}^c), ~\forall j \le n.
\end{equation}
\end{theorem}

Our first lemma concerns intrinsic volumes of projections of typical sets:
\begin{lemma}\label{lemma: project Tn}
Let $S_1, S_2 \subset \{1, 2, \dots, n\}$ be such that $|S_1| = |S_2| = k$. Then the intrinsic volumes of $(\cT^\epsilon_n\mid S_1)$ are the same as those of $(\cT^\epsilon_n\mid S_2)$. In other words, intrinsic volumes of such projections depend only on the dimension and not the specific choice among the ${n \choose k}$ possible subspaces.
\end{lemma}
\begin{proof}		
Without loss of generality, let $S_1 = \{1, 2, \dots, k\}$ and $S_2 = \{u_1, u_2, \dots, u_k\}$. The projection $(\cT^\epsilon_n\mid S_1)$ is the set
\begin{equation*}
(\cT^\epsilon_n\mid S_1) = \left\{(x_1, \dots, x_k) \mid \exists (x_1, \dots, x_n) \text{ such that } \sum_{i=1}^n \phi(x_i) \leq n(h(X)+\epsilon)\right\}.
\end{equation*}
Suppose $\min_x \Phi(x) = \eta$. Then the above expression is equivalent to
\begin{equation}\label{eq: project eta}
(\cT^\epsilon_n\mid S_1) = \left\{(x_1, \dots, x_k) \mid \sum_{i=1}^k \phi(x_i) \leq n(h(X)+\epsilon) - (n-k)\eta\right\}.
\end{equation}
Similarly,
\begin{align*}
(\cT^\epsilon_n\mid S_2) &= \left\{(x_{u_1}, \dots, x_{u_k}) \mid \exists (x_1, \dots, x_n) \text{ such that } \sum_{i=1}^n \phi(x_i) \leq n(h(X)+\epsilon)\right\}\\
&= \left\{(x_{u_1}, \dots, x_{u_k}) \mid \sum_{i=1}^k \phi(x_{u_i}) \leq n(h(X)+\epsilon) - (n-k)\eta\right\}.
\end{align*}
It is now clear that these two sets are exactly the same, except in different dimensions. In particular, they have identical intrinsic volumes.
\end{proof}

Our next lemma shows that Theorem \ref{thm: campi} is applicable to sets of the form $(\cT^\epsilon_n\mid S)$:
\begin{lemma}\label{lemma: iterate}
Let $S = \{u_1, u_2, \dots, u_k\}  \subseteq \{1, 2, \dots, n\}$. Let $(\cT^\epsilon_n \mid S)$ be the set $\cT_S$. Let $0 \leq m \leq k-1$. Then $\cT_S$ satisfies the following inequality:
\begin{equation*}
V_m(\cT_S) \leq \frac{1}{k-m} \sum_{j=1}^k V_m(\cT_S \mid S \setminus \{u_j\}).
\end{equation*}
\end{lemma}
\begin{proof}
We only need to check that the condition $(*)$ holds for $\cT_S$; i.e., we need to check, for all $i$, that
\begin{equation}\label{eq: k-1}
V_m(\cT_S \mid S \setminus \{u_i\}) \leq \frac{1}{k-m} \sum_{j=1}^k V_m(\cT_S \mid S \setminus \{u_j\}).
\end{equation}
The sets $(\cT_S \mid S \setminus \{u_i\})$ are simply $(k-1)$-dimensional projections of $\cT^\epsilon_n$, and by Lemma \ref{lemma: project Tn} all such sets have identical intrinsic volumes. This trivially yields inequality \eqref{eq: k-1} and completes the proof.
\end{proof}

With Lemma \ref{lemma: iterate} established, we are now in a position to iterate the inequality \eqref{eq: Vm inequality} in Theorem \ref{thm: campi} until we are left with $m$-dimensional projections of $\cT^\epsilon_n$. Note that each of the ${n \choose m}$ orthogonal subspaces is counted $(n-m)!$ times in the iteration. Additionally, there is a factor of $\frac{1}{(n-m)(n-m-1)\cdots 1} = \frac{1}{(n-m)!}$ outside the summation. These factors cancel out, yielding the key inequality

\begin{align}
V_m(\cT^\epsilon_n) &\leq \!\!\! \sum_{1 \leq i_1< \dots< i_{m} \leq n} V_m( \cT^\epsilon_n \mid \{{i_1}, \dots, {i_{m}}\}) \notag \\
 &= {n \choose m}V_m(\cT^\epsilon_n \mid \{1, 2, \dots, m\}) \notag \\
 &= {n \choose m} |(\cT^\epsilon_n \mid \{1, 2, \dots, m\})|. \label{eq: VmVm}
\end{align}

Our final lemma produces a bound on $|(\cT^\epsilon_n \mid \{1, 2, \dots, m\})|$:
\begin{lemma}\label{lemma: bound on vm}
Let $\eta = \min_x \Phi(x)$. Then we have the bound
\begin{equation}\label{eq: vol M}
|(\cT^\epsilon_n \mid \{1, 2, \dots, m\})| \leq e^{n(h(X) + \epsilon) - (n-m)\eta}.
\end{equation}
\end{lemma}
\begin{proof}	
Using equation \eqref{eq: project eta}, we have
\begin{equation*}
(\cT^\epsilon_n \mid \{1, 2, \dots, m\}) = \left\{(x_1, \dots, x_m) \mid \sum_{i=1}^m \Phi(x_i) \leq n(h(X) + \epsilon) - (n-m)\eta \right\},
\end{equation*}
where $\eta = \min_x \Phi(x)$. Rewriting, we have
\begin{equation*}
(\cT^\epsilon_n \mid \{1, 2, \dots, m\}) = \left\{(x_1, \dots, x_m) \mid \prod_{i=1}^m p_X(x_m) \geq e^{-n(h(X) + \epsilon) + (n-m)\eta} \right\}.
\end{equation*}
Since the probability density over $(\cT^\epsilon_n \mid \{1, 2, \dots, m\})$ is at most 1, we obtain the upper bound
\begin{equation*}
|(\cT^\epsilon_n \mid \{1, 2, \dots, m\})| \leq e^{n(h(X) + \epsilon) - (n-m)\eta}, 
\end{equation*}
completing the proof.
\end{proof}

For $\theta \in (0,1)$, choose $m = \lfloor n\theta \rfloor$. Substituting in inequality \eqref{eq: VmVm} and using inequality \eqref{eq: vol M}, we obtain 
\begin{equation*}
\mu^\epsilon_n(\lfloor n\theta \rfloor)  \leq {n \choose \lfloor n\theta \rfloor}e^{n(h(X)+\epsilon) - (n-\lfloor n\theta \rfloor)\eta}.
\end{equation*}
Taking logarithms of both sides and dividing by $n$, we have
\begin{equation*}
\frac{1}{n} \log \mu^\epsilon_n(\lfloor n\theta \rfloor) \leq \frac{1}{n}\log \left({n \choose \lfloor n\theta \rfloor}e^{n(h(X)+\epsilon) - (n-\lfloor n\theta \rfloor)\eta} \right).
\end{equation*}
Taking the limit as $n \to \infty$ and using Theorem \ref{thm: part b}, we then have
\begin{equation*}
-\left(\Lambda^{\epsilon}\right)^*(\theta) \leq H(\theta) + h(X) + \epsilon - (1-\theta)\eta.
\end{equation*}
Taking the limit as $\epsilon \to 0_+$ and using Theorem \ref{thm: patpatpat}, we obtain
\begin{equation*}
-\Lambda^*(\theta) \leq H(\theta) + h(X) - (1-\theta) \eta.
\end{equation*}
Taking the limit as $\theta \to 1$ and using the continuity of $-\Lambda^*$ from Theorem \ref{thm: patpatpat}, we also have 
$$-\Lambda^*(1) \leq h(X).$$
Additionally, we have the lower bound from Lemma \ref{lemma: lambda star props}, which asserts that 
$$-\Lambda^*(1) \geq \lim_{\epsilon \to 0_+} \lim_{n \to \infty} \frac{1}{n} \log \mu^\epsilon_n(\cT^\epsilon_n) = h(X).$$ 
This forces $-\Lambda^*(1) = h(X)$ and completes the proof.
\end{proof}


\section{Conclusion}\label{section: discussion}

In this paper, we have established an entropic lifting of the concept of intrinsic volumes for the special case of log-concave distributions. For a log-concave random variable $X$, the function $h_X(\cdot)$ may be interpreted as a generalization of the entropy functional, in a similar way as intrinsic volumes are considered a generalization of the volume functional. We now briefly describe several future research directions and open problems.

The first natural question is whether it is possible to define intrinsic entropies for random variables that do not possess a log-concave density. Note that the main reason we focused on log-concave random distributions was that it is very straightforward to define \emph{convex} typical sets for these distributions, and convex sets have well-defined intrinsic volumes. However, the concept of intrinsic volumes is not just restricted to convex sets or polyconvex sets, and may be extended to a larger class of sets. Federer \cite{Fed59} defined the concept of sets with \emph{positive reach}, as follows: The reach of a set $A \subset \mathbb R^n$ is the largest value $r$ such that for all $x$ satisfying $d(x, A) < r$, the set $A$ contains a unique point that is nearest to $x$. Here, $d(x,A)$ is the minimum of the distance between $x$ and some point in $A$. A set has a positive reach if $r>0$. Federer  \cite{Fed59} showed that it is possible to define intrinsic volumes for all sets having a positive reach. One possible approach to defining intrinsic entropies for distributions that are not log-concave would be to show that typical sets have a positive reach, and then take the limits of the intrinsic volumes as defined in  \cite{Fed59}. A paper by Schanuel \cite{Sch86} neatly illustrates where the theory of convex sets breaks down when we consider non-convex sets and suggests ways to surmount such difficulties. 

Another natural question is the following: How crucially does the value of $h_X(\theta)$ depend on the specific way we have defined typical sets in Definition \ref{def: typical}? Suppose we had an alternate definition for a sequence of (convex) sets $\{ \tilde \cT^\epsilon_n \}$, satisfying the following two properties:
\begin{enumerate}
\item[(i)]
$\lim_{\epsilon \to 0} \lim_{n \to \infty} \frac{1}{n} \log |\tilde \cT^\epsilon_n| = h(X)$, and
\item[(ii)]
$\lim_{n \to \infty} P(\tilde \cT^\epsilon_n) = 1$.
\end{enumerate}
It is interesting to ask if this sequence would exhibit the same limit for its intrinsic volumes as $\{\cT^\epsilon_n\}$. One example of such an alternate sequence is as follows: Suppose $X$ satisfies $EX = 0$ and $\text{Var}(X) = \sigma^2$. We define
$$\tilde \cT^\epsilon_n =  \cT^\epsilon_n \cap B_n(\sigma\sqrt{n} + \epsilon).$$ 
For such alternate definitions, the super-multiplicative property of typical sets in expression \eqref{eq: superconv Tn} may no longer hold. Thus, we may not be able to use the same analysis as presented in this paper to evaluate the limit of the intrinsic volumes. We believe the limit remains $h_X(\cdot)$ for any alternate definition of convex typical sets satisfying properties (i) and (ii), so it is a intrinsic property of the distribution itself. Although this problem is still open, we show in Appendix \ref{appendix: alternate} that  for many natural definitions of typical sets, intrinsic volumes grow at the same shared rate $h_X$.
%

In this paper, we have considered one-dimensional random variables. However, it is not hard to see that similar results also hold for multi-dimensional random variables. Thus, it is possible to generalize concepts such as joint entropy, conditional entropy, and mutual information by replacing entropy by intrinsic entropy in the definitions. The main focus of this paper was showing the existence of $h_\theta(X)$, as opposed to studying the particular properties of $h_X$. An important first step towards this would be to check whether intrinsic entropies satisfy an analog of the Brunn-Minkowski inequality. Here, we conjecture a version of the entropy power inequality, inspired by the complete Brunn-Minkowski inequality for intrinsic volumes \cite{schneider2013convex}: 
\begin{equation}\label{eq: h theta epi}
e^{\frac{2h_\theta(X)}{\theta}}+e^{\frac{2h_\theta(Y)}{\theta}} \leq e^{\frac{2h_\theta(X+Y)}{\theta}},
\end{equation}
where we recover the usual EPI for $\theta = 1$.
\section*{Acknowledgements}
The research of the authors was supported by NSF grant ECCS-1343398 and the NSF Science \& Technology Center grant CCF-0939370, Science of Information. VA also acknowledges support from the NSF grants CNS 1527846 and CCF 1618145.


\bibliographystyle{plain}	
\bibliography{myrefs}		
\begin{appendix}

\section{Proofs for Section \ref{section: candidate}}

\subsection{Proof of Lemma \ref{lemma: lambda props}}\label{proof: lemma: lambda props}

\begin{enumerate}
\item[(i)]
Condition \eqref{eq: supc condition on G} implies that
\begin{equation}
nG_1(t) \leq G_n(t),
\end{equation}
which implies
\begin{equation}
 g_1(t) \leq g_n(t).
\end{equation}
Taking the limit in $n$, we see that $g_1(t) \leq \Lambda(t)$.\\
 
For every $n$ and every $t \geq 0$,
\begin{align}
\frac{G_n(t)}{n}  &= \frac{1}{n} \log \sum_{j=0}^n \mu_n(j)e^{jt}\\
&\leq \frac{1}{n} \log \left( \left(\sum_{j=0}^n \mu_n(j)\right)e^{nt}\right)\\
&= \frac{G_n(0)}{n}  + t.
\end{align}
Taking the limit in $n$ and using $\gamma < \infty$, we see that for $\Lambda(t) \leq t+\gamma$ for $t \geq 0$. Similarly, for $t \leq 0$
\begin{align}
\frac{G_n(t)}{n}  &= \frac{1}{n} \log \sum_{j=0}^n \mu_n(j)e^{jt}\\
&\leq \frac{1}{n} \log \left(\sum_{j=0}^n \mu_n(j)\right)\\
&= \frac{G_n(0)}{n}.
\end{align}
Taking the limit in $n$ and using $\gamma < \infty$, we see that $\Lambda(t) \leq \gamma$ for $t \leq 0$.

\item[(ii)]
The functions $\{g_n\}$ are convex and monotonically increasing. Since $\Lambda$ is the pointwise limit of these functions, $\Lambda$ is also convex and monotonically increasing.
\end{enumerate}

\subsection{Proof of Lemma \ref{lemma: lambda star props}}\label{proof: lemma: lambda star props}
\begin{enumerate}
\item[(i)]
Note that the convex conjugates of the functions $g_1(t)$ and $\max(\gamma, t+\gamma)$ are both supported on $[0,1]$. By Lemma \ref{lemma: lambda props}, the function $\Lambda$ is trapped between these two functions, and thus $\Lambda^*$ is also supported on $[0,1]$. Furthermore, since $\Lambda^*$ is a convex conjugate function, it is convex and lower-semicontinuous on its domain. Since the domain is a closed interval, we obtain that $\Lambda^*$ must be continuous on $[0,1]$.

\item[(ii)]
We'll need to show that $g_n^*$ converges pointwise to $\Lambda^*$ on $(0,1)$. Fix an $x \in (0,1)$, and define 
$$\arg \max_t xt - g_n(t) := t_n.$$
Clearly, $g_n^*(x) = xt_n - g_n(t_n)$. Note that
\begin{align*}
g_n^*(x) &\geq xt - g_n(t) \Big|_{t = 0}\\
& = -g_n(0)\\
&\stackrel{(a)} \geq -\gamma,
\end{align*}
where $(a)$ is because $\gamma = \sup_n g_n(0)$.  If $t > \frac{\gamma - \log \mu_1(1)}{1-x}$, then we have
 \begin{align*}
 xt - g_n(t) &< xt - g_1(t)\\
 &\leq xt - (t + \log\mu_1(1))\\
 &= -(1-x)t - \log \mu_1(1) \\
 &< -(\gamma - \log \mu_1(1)) - \log \mu_1(1)\\
 &= -\gamma.
 \end{align*}
 This gives us that $t_n \leq \frac{\gamma - \log \mu_1(1)}{1-x}$. Similarly, if $t < \frac{\log \mu_1(0) -\gamma}{x}$, then
 \begin{align*}
 xt - g_n(t) &< xt - g_1(t)\\
 &\leq xt - \log \mu_1(0)\\
 &<(\log \mu_1(0) - \gamma) - \log \mu_1(0)\\
 &= -\gamma.
 \end{align*}
 This gives us that $t_n \geq \frac{\log \mu_1(0) -\gamma}{x}$. We can thus conclude that for all $n$,
\begin{equation}\label{eq: supc bounds on tn}
 t_n \in \left[\frac{\log \mu_1(0) -\gamma}{x}, \frac{\gamma - \log \mu_1(1)}{1-x}\right] := I_x.
\end{equation}

Note that all we used to prove relation \eqref{eq: supc bounds on tn} is that $g_n(t)$ is trapped between $g_1(t)$ and $\max(\gamma, t+\gamma)$. Since $\Lambda$ also satisfies this, we have
\begin{equation}\label{eq: supc lambda tn}
\arg\max_{t} xt - \Lambda(t) \in I_x.
\end{equation}
We now restrict our attention to the compact interval $I_x$. Let $\hat g_n$ be $g_n$ restricted to $I_x$. The convex functions $\hat g_n$ converge pointwise to a continuous limit $\hat \Lambda$, where $\hat \Lambda$ is $\Lambda$ restricted to $I_x$. This convergence must therefore be uniform \cite{jog2015geometric}, which implies convergence of $\hat g_n^*(x)$ to $\hat \Lambda^*(x)$. Furthermore, relation (\ref{eq: supc bounds on tn}) implies $\hat g_n^*(x)$ equals $g_n^*(x)$, and relation \eqref{eq: supc lambda tn} gives $\hat \Lambda^*(x)$ equals $\Lambda^*(x)$. Thus, $g_n^*(\cdot)$ converges pointwise to $\Lambda^*(\cdot)$ on $(0,1)$.\\

\item[(iii)]
To get the inequality for $t=0$ and $t=1$, we note that the function $\lim_{n} g_n^*(t)$ and $\Lambda^*$ are both convex on $[0,1]$ with $\Lambda^*$ also being continuous on $[0,1]$. As both these function agree on $(0,1)$, it is immediate that $\Lambda^*(0) \leq -\beta$ and $\Lambda^*(1) \leq  -\alpha$.
\end{enumerate}

\subsection{Example showing strict inequality at $0$}\label{example: strict inequality}
We show that it is possible for a proper super-convolutive sequences obey the strict inequality in Lemma \ref{lemma: lambda star props} (iii), using the following example:
\begin{example}\label{remark: alpha}
Let $\delta, \alpha >0$ be such that $\alpha > 1$ and $\delta < \frac{1}{2}$. For $n \geq 1$, consider the sequence $\{\mu_n(\cdot)\}$, given by
\begin{equation}
\label{eq: mu_n}
\mu_n(i) = 
\begin{cases}
{n-1 \choose i}\alpha^{i}, &\text{ for } 0 \leq i \leq n-1,\\
\delta, &\text{ for } i = n.\\
\end{cases}
\end{equation}
By an explicit calculation, we may show that for this sequence is a proper super-convolutive sequence, and we have the strict inequality $\Lambda^*(1) = -\log \alpha < 0$.
\end{example}


\subsection{Proof of Lemma \ref{lemma: inf open}}\label{proof: lemma: inf open}
We first prove the following lemma:
\begin{lemma}\label{lemma: venkat}
Let $\{g_n\}$ be a sequence of continuous convex function on $[c,d]$
converging pointwise to $g$, where $g$ is a continuous function. 
Let $F \subseteq [c,d]$ be a relatively open set.
Then 
\[
\lim_n \{ \inf_{x \in F} g_n(x) \} = \inf_{x \in F} g(x)~.
\]
\end{lemma}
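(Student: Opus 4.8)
\textbf{Proof proposal for Lemma \ref{lemma: venkat}.} The plan is to establish the equality by proving two inequalities, $\limsup_n \inf_{x \in F} g_n(x) \le \inf_{x \in F} g(x)$ and $\liminf_n \inf_{x \in F} g_n(x) \ge \inf_{x \in F} g(x)$, exploiting that pointwise convergence of convex functions on a compact interval is automatically uniform on compact sub-intervals (the result cited as \cite{jog2015geometric} in the proof of Lemma \ref{lemma: lambda star props}(ii)).

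First I would handle the easy direction. For any $x_0 \in F$ we have $\inf_{x\in F} g_n(x) \le g_n(x_0) \to g(x_0)$, so $\limsup_n \inf_{x\in F} g_n(x) \le g(x_0)$; taking the infimum over $x_0 \in F$ gives $\limsup_n \inf_{x\in F} g_n(x) \le \inf_{x \in F} g(x)$. Note this direction uses only pointwise convergence and holds for any set $F$.

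For the reverse direction, fix $\eta > 0$ and pick $x^\star \in F$ (or, if the infimum of $g$ over $F$ is not attained, a point with $g(x^\star) \le \inf_{x\in F} g(x) + \eta$). Since $F$ is relatively open in $[c,d]$, there is a closed sub-interval $[c', d'] \subseteq [c,d]$ with $x^\star \in [c',d']$ and $[c',d'] \cap F$ still a neighborhood of $x^\star$ within $[c,d]$ — concretely, I would just take a small closed interval $J$ around $x^\star$ contained in $F$ relative to $[c,d]$ (possible precisely because $F$ is one of the interval types listed). On $J$, the convex functions $g_n$ converge uniformly to the continuous convex $g$, so $\inf_{x \in J} g_n(x) \to \inf_{x \in J} g(x) \ge \inf_{x\in F} g(x)$; combined with $\inf_{x\in F} g_n(x)$ being at most... wait, that is the wrong direction. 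Instead: we cannot bound $\inf_F g_n$ from below by $\inf_J g_n$. So the correct route is to bound $g_n$ from below on all of $F$ using convexity. Here is the key step: since each $g_n$ is convex on $[c,d]$, its infimum over the relatively open set $F$ is approached near the infimum over the closure $\overline F$, and more usefully, a convex function on $[c,d]$ attains values on $F$ that are controlled by its values at finitely many reference points. Concretely, I would show $\inf_{x \in F} g_n(x) \ge \min\{g_n(c), g_n(d)\} - o(1)$ is too weak; instead note that for a convex function, $\inf_{x \in F} g_n = \inf_{x \in \overline F \cap [c,d]} g_n$ up to the boundary, and pick a compact interval $K$ with $F \subseteq K \subseteq [c,d]$ and $\inf_F g_n$ differing from $\inf_K g_n$ by an amount that vanishes. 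The cleanest implementation: take $K = [c,d]$ itself, so $\inf_F g_n \ge \inf_{[c,d]} g_n$; but this over-shoots.

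Let me restructure: the real argument is that $g_n \to g$ uniformly on all of $[c,d]$ (Dini-type / convexity argument, since $[c,d]$ compact and $g$ continuous), hence $\sup_{x\in[c,d]}|g_n(x)-g(x)| =: \epsilon_n \to 0$. Then for every $x \in F$, $g_n(x) \ge g(x) - \epsilon_n \ge \inf_{y\in F} g(y) - \epsilon_n$, so $\inf_{x\in F} g_n(x) \ge \inf_{y \in F} g(y) - \epsilon_n$, giving $\liminf_n \inf_F g_n \ge \inf_F g$. The main obstacle is therefore establishing uniform convergence on the whole closed interval $[c,d]$: pointwise convergence of convex functions is uniform on compact subsets of the \emph{interior}, but one must check it does not break down at the endpoints. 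This is where continuity of the limit $g$ on the closed interval $[c,d]$ is essential — I would argue that $g_n(c) \to g(c)$ and $g_n(d) \to g(d)$ by hypothesis, combine with uniform convergence on $[c+\delta, d-\delta]$, and use convexity to squeeze the behavior on the two boundary strips $[c, c+\delta]$ and $[d-\delta, d]$, controlling $|g_n - g|$ there by the values at $c$, $c+\delta$ (resp. $d-\delta$, $d$) and a modulus-of-continuity bound for $g$. This gives uniform convergence on all of $[c,d]$, and the lemma follows. Finally, Lemma \ref{lemma: inf open} is the special case of Lemma \ref{lemma: venkat} with $[c,d] = [a,b]$ and $g = f$, which is continuous and finite by hypothesis, so the conclusion transfers directly.
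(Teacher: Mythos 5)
Your proposal is correct and takes essentially the same route as the paper: the paper's proof of this lemma simply invokes its Lemma \ref{lemma: convex} (pointwise convergence of continuous convex functions to a continuous limit on a closed interval is uniform) and then concludes with elementary $\epsilon$-estimates, which is exactly the reduction you make, with your endpoint-squeeze sketch being in substance a re-derivation of Lemma \ref{lemma: convex}. The only inaccuracy is your closing aside that Lemma \ref{lemma: inf open} is a direct special case: there $f$ is only assumed finite on $[a,b]$, not continuous, so continuity can fail at the endpoints and the paper performs additional boundary work in that proof --- but this does not affect your argument for the present lemma.
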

\begin{proof}
By Lemma \ref{lemma: convex}, we know that $g_n$ converge uniformly to $g$ on
$[c,d]$. Thus, for any $\epsilon > 0$ we have 
\[
g_n(x) - \epsilon < g(x) < g_n(x) + \epsilon~,
\]
uniformly over $x \in [c, d]$, for all $n$ sufficiently large.
Let $x_n \in F$ be such that
\[
\inf_{x \in F} g_n(x) \le g_n(x_n) < \inf_{x \in F} g_n(x) + \epsilon~.
\]
Such an $x_n$ exists by the relative openness of $F$ and the 
continuity of $g_n$ on $[c,d]$.
Then, for all $n$ sufficiently large we have
\[
\inf_{x \in F} g(x) \le g(x_n) < g_n(x_n) + \epsilon 
< \inf_{x \in F} g_n(x) + 2 \epsilon~,
\]
and we also have
\[
\inf_{x \in F} g(x) \ge \inf_{x \in F} (g_n(x) - \epsilon)
= ( \inf_{x \in F} g_n(x) ) - \epsilon~,
\]
which completes the proof.
\end{proof}

The function $f$, being a pointwise limit of the convex functions 
$f_n$ on $[a,b]$,
is convex on $[a,b]$. 
Since $|f(x)| < \infty$ for all $x \in [a,b]$, we may conclude that
$f$ is continuous on $(a,b)$. Let $F \subseteq [a,b]$ be the relatively open
subset of interest. The convexity of $f$ on $[a,b]$ implies that 
\begin{equation}		\label{add.01}
\inf_{x \in F} f(x) = \lim_{\delta \to 0} 
\inf_{x \in F \cap [a+\delta, b-\delta]} f(x)~.
\end{equation}
Let $\delta > 0$ be sufficiently small that 
$F \cap [a+\delta, b-\delta] \neq \emptyset$. 
By Lemma \ref{lemma: convex} we conclude that, for
any $\delta > 0$, $f_n$ converges uniformly to $f$ on
$[a+\delta, b-\delta]$. By  Lemma \ref{lemma: venkat}, we conclude that
\[
\lim_n \{ \inf_{x \in F \cap [a+\delta, b-\delta]} f_n(x) \} 
= \inf_{x \in F \cap [a+\delta, b-\delta]} f(x)~.
\]
Since $\inf_{x \in F \cap [a+\delta, b-\delta]} f_n(x)$ 
is nondecreasing in $\delta$, we have
\[
\limsup_n \inf_{x \in F} f_n(x)
\le \inf_{x \in F \cap [a+\delta, b-\delta]} f(x)~,
\]
for all $\delta > 0$. Taking the limit as $\delta \to 0$ and using
(\ref{add.01}) we get
\[
\limsup_n \inf_{x \in F} f_n(x) \le \inf_{x \in F} f(x)~.
\]

For the inequality in the opposite direction, let $M < \infty$ be such
that $f_n(a) \le M$ and $f_n(b) \le M$ for all $n \ge 1$.
Such an $M$ exists under the assumptions of Lemma \ref{lemma: inf open}. 
For any $\delta > 0$ sufficiently small and any
$b - \delta < x \le b$, from the convexity of $f_n$, we have
\[
f_n(b-\delta) \le \frac{b-\delta -a}{x -a} f_n(x) 
+ \frac{x - b + \delta}{x - a} M~.
\]
This can be rearranged to read
\[
f_n(x) \ge \frac{x-a}{b - \delta -a} f_n(b-\delta) 
- \frac{x - b+ \delta}{b - \delta -a} M~.
\]
We can similarly get a lower bound on $f_n(x)$ for $a \le x < a + \delta$.
We conclude that there is a universal constant $C < \infty$ such that for all
$F \subset [a, b]$ that are relatively open, for 
$\delta > 0$ sufficiently small (depending on $F$), all $n \ge 1$, all
$x \in (b - \delta, b]$, and all $x \in [a, a + \delta)$, we have
\[
f_n(x) \ge \inf_{x \in F \cap [a + \delta, b - \delta]} f_n(x) - C \delta~.
\]
It follows that 
\[
\inf_{x \in F} f_n(x) 
\ge \inf_{x \in F \cap [a + \delta, b - \delta]} f_n(x) - C \delta~.
\]
Fix $\delta$ and let $n \to \infty$ and use (\ref{add.01}) to conclude
that
\[
\liminf_n \inf_{x \in F} f_n(x) 
\ge \inf_{x \in F \cap [a + \delta, b - \delta]} f(x) - C \delta~.
\]
Next, let $\delta \to 0$ and use the convexity of $f$ to conclude that
\[
\liminf_n \inf_{x \in F} f_n(x) 
\ge  \inf_{x \in F} f(x)~.
\]
This completes the proof of Lemma \ref{lemma: inf open}.


\subsection{Proof of Lemma \ref{lemma: alpha beta gamma}}\label{proof: lemma: alpha beta gamma}
Note that  $\mu^\epsilon_m \star \mu^\epsilon_n(m+n) = \mu^\epsilon_n(n)\mu^\epsilon_m(m)$ and $\mu^\epsilon_m \star \mu^\epsilon_n(0) = \mu^\epsilon_n(0)\mu^\epsilon_m(0)$. Thus the existence of the limits defining $\alpha$ and $\beta$ is given by sub-additivity. Existence of limit defining $\gamma$ follows from the equality $\gamma = \Lambda^{\epsilon}(0)$.
\begin{enumerate}
\item[(i) and (ii):]
The value of $\mu_n(0)$ is the Euler characteristic, which equals $1$ when $\cT_n^\epsilon$ is non-empty. We show that for every $n \geq 1$, the set $\cT^\epsilon_n$ has a nonempty interior; i.e., $\text{Vol}(\cT^\epsilon_n) = \mu_n(n) > 0$. Let $M = \max_x p_X(x) = e^{-\min_x \Phi(x)}$. Note that the set of minimizers of $\Phi$ is a nonempty set, since $\Phi \to +\infty$ as $|x| \to +\infty$. Let $x^*$ be any such minimizer of $\Phi$. For the point $(x^*, \dots, x^*) \in \mathbb R^n$, we have
\begin{align*}
 \sum_{i=1}^n \Phi (x_i) &= -n\log M.
\end{align*}
We also have the inequality
\begin{align*}
-h(X) &= \int_{\mathbb R}p_X(x)\log p_X(x) dx \leq \int_{\mathbb R}p_X(x)\log M dx 
= \log M.
\end{align*}
Thus, for the point $(x^*, \dots, x^*)$, we have
\begin{align*}
\sum_{i=1}^n \Phi (x_i) &= -n\log M < n(h(X)+\epsilon),
\end{align*}
so $(x^*, \dots, x^*) \in \cT_n^\epsilon$. By the continuity of $\Phi$ at $x^*$, we conclude that $\cT_n^\epsilon$ has a nonempty interior.

\item[(iii)]
Since $\Phi(x)  \to \pm\infty$ as $|x| \to \pm \infty$ and $\Phi$ is convex, we may find constants $c_1 > 0$ and $c_2$ such that
\begin{align}
\Phi(x) \geq c_1|x| + c_2, \text{~~for all~~} x \in \mathbb R. \label{eq: psi linear}
\end{align}
We start by showing that for $A :=  \frac{h(X)+\epsilon - c_2}{c_1}$, the sequence of regular crosspolytopes $\{\cC_n\}_{n=1}^\infty$ defined by
\begin{align*}
\cC_n := \{x^n \in \mathbb R^n ~|~ \sum_{i=1}^n |x_i| \leq An \}
\end{align*}
satisfies the containment 
\begin{equation}\label{eq: Tn in Cn}
\cT^\epsilon_n  \subseteq \cC_n, \text{~~for~~} n \geq 1. 
\end{equation}
For $x^n \in \cT^\epsilon_n$, using definition \eqref{eq: def Tn 4} and inequality \eqref{eq: psi linear}, we have 
\begin{align*}
\sum_{i=1}^n (c_1|x_i| + c_2)&\leq \sum_{i=1}^n \Phi (x_i) \leq  n(h(X)+\epsilon),
\end{align*}
implying that
\begin{align*}
 \sum_{i=1}^n |x_i| \leq n\left(\frac{h(X)+\epsilon - c_2}{c_1}\right),
\end{align*}
so $x^n \in \cC_n$. Hence, $\cT^\epsilon_n \subseteq \cC_n$, as claimed. Let the intrinsic volumes of $\cC_n$ be $ \mu^{\text{cp}}_n(\cdot)$. Note that $\mu_n(i) \leq  \mu^{\text{cp}}_n(i)$ for all $0 \leq i \leq n$, by the containment \eqref{eq: Tn in Cn}. Thus, $\gamma \leq \gamma^{\text{cp}}$, where 
\begin{equation}
\gamma^{\text{cp}} := \lim_{n \to \infty} \frac{1}{n} \log \left(\sum_{i=0}^n  \mu^{\text{cp}}_n(i)\right).
\end{equation}
We claim that $ \gamma^{\text{cp}} < \infty$. Define 
$$ G^{\text{cp}}_n(t) = \log \sum_{i=0}^n  \mu^{\text{cp}}_n(i) e^{it}, \text{~~and~~}  g^{\text{cp}}_n(t) = \frac{ G^{\text{cp}}_n(t)}{n}.$$
Note that the sequence $\{\cC_n\}$ is super-convolutive, so $ g^{\text{cp}}_n(t)$ converges pointwise. In particular,
for $t=0$, we have
\begin{align*}
 \gamma^{\text{cp}} = \lim_{n \to \infty} \frac{1}{n} \log \left(\sum_{i=0}^n \mu^{\text{cp}}_n(i) \right) \text{~~exists, and is possibly~~} +\infty.
\end{align*}
The $i$-th intrinsic volume of $\cC_n$ is given by \cite{betke1993intrinsic}
\begin{align*}
\mu^{\text{cp}}_n(i) = 
\begin{cases}
2^{i+1}{n \choose i+1}\frac{\sqrt{i+1}}{i!}\frac{(nA)^i}{\sqrt \pi} \times\int_{0}^\infty e^{-x^2}\left(\frac{2}{\sqrt\pi} \int_{0}^{x/\sqrt{i+1}} e^{-y^2} dy\right)^{n-i-1}dx \quad &\text{ if }  i \leq n-1\\
\frac{2^n}{n!}(nA)^n &\text{ if } i = n.
\end{cases}
\end{align*}
Note that
\begin{align*}
&\int_{0}^\infty e^{-x^2}\left(\frac{2}{\sqrt\pi} \int_{0}^{x/\sqrt{i+1}} e^{-y^2} dy\right)^{n-i-1}dx \\
&\leq \int_{0}^\infty e^{-x^2}\left(\frac{2}{\sqrt\pi} \int_{0}^{\infty} e^{-y^2} dy\right)^{n-i-1}dx\\
&\leq \int_{0}^\infty e^{-x^2} dx\\
&= \frac{\sqrt\pi}{2}.
\end{align*}
Thus, for $0 \leq i \leq n-1$,
\begin{align*}
\hat \mu_n(i) &\leq 2^{i+1}{n \choose i+1}\frac{\sqrt{i+1}}{i!}\frac{(nA)^i}{\sqrt \pi} \times \frac{\sqrt\pi}{2}\\
&=  2^{i}{n \choose i+1}\frac{\sqrt{i+1}}{i!}(nA)^i\\
&\leq 2^n \times 2^n \times \sqrt{n+1} \times A^i \times \frac{n^i}{i!}\\
&\leq 2^{2n}\sqrt{n+1} \times \max(1, A^n) \times  \frac{n^n}{n!}.\\
\end{align*}
We may check that the inequality also holds for $i=n$. Hence,
\begin{align*}
&\frac{1}{n} \log\left(\sum_{i=0}^n \mu^{\text{cp}}_n(i) \right)\\
 &\leq \frac{1}{n}\log \left((n+1)\times 2^{2n}\sqrt{n+1} \times \max(1, A^n) \times  \frac{n^n}{n!} \right).
\end{align*}
Taking the limit as $n \to \infty$, we obtain
\begin{align*}
\lim_{n \to \infty} \frac{1}{n} \log\left(\sum_{i=0}^n \mu^{\text{cp}}_n(i) \right) &\leq  2\log 2 + \max(0, \log A) + 1\\
& < \infty.
\end{align*}
This shows that $\gamma^{\text{cp}}$ is finite, and thus $\gamma$ is also finite.
\end{enumerate}


\subsection{Proof of Lemma \ref{lemma: lambda without epsilon}}\label{proof: lemma: lambda without epsilon}
Without loss of generality, take $a = 0$ and $b = 1$. Since $f$ is the pointwise limit of concave functions, it is also concave. The continuity of $f$ is not obvious a priori: it could be discontinuous at the endpoints 0 and 1.  Let $f(0) = \ell_0$ and $f(1) = \ell_1$. For any $n \geq 1$, the function $f_n$ is lower-bounded by the line joining $(0, \ell_0)$ and $(1, \ell_1)$. Call this lower bound $L(\theta)$, for $\theta \in [0,1]$.  We prove continuity at $0$, by showing that for $\eta > 0$, there exists a $\delta>0$ such that for $\theta \in [0, \delta)$, we have $|f(\theta)- \ell_0| < \eta$. Pick $N$ large enough such that
 $$f_N(0) - \ell_0 < \eta/2.$$ The function $f_N$ is continuous on $[0,1]$, so there exists $\delta_1 > 0$ such that for $\theta \in [0, \delta_1)$, we have
 $|f_N(\theta) - f_N(0)| < \eta/2$. Now pick a $\delta_2$ such that $|L(\theta) - \ell_0| < \eta/2$, for $\theta \in [0, \delta_2)$. Let $\delta = \min(\delta_1, \delta_2)$. For $n > N$, we have
 $$L(\theta) \leq f_n(\theta) \leq f_N(\theta).$$
 Thus, for $\theta \in [0, \delta)$, we obtain
 \begin{align*}
 f_n(\theta) \leq f_N(\theta) \leq f_N(0) + \eta/2 \leq \ell_0 + \eta,
 \end{align*}
 and
 \begin{align*}
 f_n(\theta) \geq L(\theta) \geq \ell_0-\eta/2.
 \end{align*}
Thus, for all $n > N$ and $\theta \in [0, \delta)$, we have
$$ \ell_0-\eta/2 \leq f_n(\theta) \leq \ell_0 + \eta.$$
Taking the limit as $n \to \infty$, we conclude that for $\theta \in [0, \delta)$,
$$\ell_0-\eta/2 \leq f(\theta) \leq \ell_0 + \eta,$$
implying continuity at $0$. Continuity at $1$ follows similarly.

\section{Proofs for Section \ref{section: part a}}

\subsection{Proof of Lemma \ref{lemma: alex}}\label{proof: lemma: alex}
\noindent To show concavity of $a_n(\cdot)$, note that all we need to prove is that
\begin{equation}
a_n\left( \frac{j}{n} \right) \geq \frac{ a_n\left(\frac{j-1}{n} \right) + a_n\left( \frac{j+1}{n}\right) }{2} \text{~~for all~~} 1 \leq j \leq n-1,
\end{equation}
as $a_n$ is a linear interpolation of the values at $\frac{j}{n}$. This is equivalent to proving
\begin{equation}
\mu_n(j)^2 \geq \mu_n(j-1)\mu_n(j+1) \text{~~for all~~} 1 \leq j \leq n-1.
\end{equation}
This is an easy application of the Alexandrov-Fenchel inequalities for mixed volumes. For a proof we refer to McMullen \cite{mcmullen1991inequalities}, where in fact the author obtains 
$$\mu_n(j)^2 \geq \frac{j+1}{j}\mu_n(j-1)\mu_n(j+1).$$

\subsection{Proof of Lemma \ref{lemma: sn linearized}}\label{proof: lemma: sn linearized}

Let $\eta > 0$ be given. The function $-(\Lambda^\epsilon)^*$, being continuous on the bounded interval $[0,1]$, is uniformly continuous. The same holds over the closed set $\cI \subset (0,1)$. Choose $\delta >0$ such that
$$|(\Lambda^\epsilon)^*(x) -(\Lambda^\epsilon)^*(y)| < \eta, \text{~~whenever~~} |x-y| < \delta, \text{ and } x, y \in \cI$$
Choose $N_0 > 1/(\delta/3)$, and divide the interval $\cI$ into the the $N_0$ equally sized closed intervals $I_j$, for $1 \leq j \leq N_0$. Let the interval $I_j := [c_{j-1}, c_j]$, where $[c_0, c_{N_0}] = \cI$. Without loss of generality, let $\theta_0$ lie in the interior of the $k$-th interval (we can always choose a different value of $N_0$ to make sure $\theta_0$ does not coincide with the boundary points $c_{k-1}$ or $c_k$). Thus,
$$c_{k-1}  < \theta_0 < c_{k}.$$
Theorem \ref{thm: lambda Tn} along with the continuity of $-(\Lambda^\epsilon)^*$ imply that
\begin{equation}
\lim_{n \to \infty} \frac{1}{n} \log \mu^\epsilon_n(I_j) = \sup_{\theta \in I_j} -(\Lambda^\epsilon)^*(\theta).
\end{equation}
For $n > 2/\min \left( \theta_0 - c_{k-1}, c_k - \theta_0\right)$, there exists an $i$ such that
\begin{equation}
c_{k-1} < \frac{i}{n} < \theta_0 < \frac{i+1}{n} < c_k.
\end{equation}
Thus for some $\lambda > 0$, we can write
\begin{equation}
a_n^\epsilon(\theta_0) = \lambda\frac{1}{n}\log \mu^\epsilon_{n/n}(i/n) + (1-\lambda)\frac{1}{n} \log\mu^\epsilon_{n/n}((i+1)/n),
\end{equation}
and obtain the inequality 
\begin{align}
a^\epsilon_n(\theta_0) &= \lambda\frac{1}{n}\log \mu^\epsilon_{n/n}(i/n) + (1-\lambda)\frac{1}{n} \log\mu^\epsilon_{n/n}((i+1)/n)\\
&\leq \max\left(\frac{1}{n}\log \mu^\epsilon_{n/n}(i/n), \frac{1}{n} \log\mu^\epsilon_{n/n}((i+1)/n)\right)\\
&\leq \frac{1}{n} \log \mu^\epsilon_{n/n}(I_k).
\end{align}
Thus we have the upper bound
\begin{align}\label{eq: an ub}
\limsup_n  a^\epsilon_n(\theta_0) &\leq \lim_{n\to \infty} \frac{1}{n} \log \mu^\epsilon_{n/n}(I_k)\\
 &= \sup_{\theta \in I_k} -(\Lambda^\epsilon)^*(\theta)\\
 &\stackrel{(a)}\leq -(\Lambda^\epsilon)^*(\theta_0) + \eta,
\end{align}\\
where $(a)$ follows from the choice of $N_0$ and uniform continuity of $-(\Lambda^\epsilon)^*$.\\

\noindent Define $$\hat \theta_n(j) = \arg \sup_{\frac{i}{n} \text{~s.t.~} \frac{i}{n} \in I_j} \mu^\epsilon_{n/n}\left(\frac{i}{n}\right).$$ As 
$$\mu^\epsilon_{n/n}(\hat \theta_n(j)) \leq \mu^\epsilon_{n/n}(I_j) \leq \left(n|I_j|+2\right)\mu^\epsilon_{n/n}(\hat \theta_n(j)) \leq n\mu^\epsilon_{n/n}(\hat \theta_n(j)),$$ it is easy to see that 
\begin{align}
\lim_{n\to\infty}\frac{1}{n} \log \mu^\epsilon_{n/n}(\hat \theta_n(j)) &= \lim_{n \to \infty} \frac{1}{n} \log \mu^\epsilon_{n/n}(I_j)\\
&= \sup_{\theta \in I_j} -(\Lambda^\epsilon)^*(\theta).
\end{align}
Note that since $a^\epsilon_n$ is obtained by a linear interpolation, we have
$$\sup_{\theta \in I_j}  a^\epsilon_n(\theta) \geq \frac{1}{n} \log \mu^\epsilon_{n/n}(\hat\theta_n(j)).$$
This implies that for the intervals $I_{k-1}$ and $I_{k+1}$,
\begin{align}
\liminf_{n\to \infty} \left[\sup_{\theta \in I_{k-1}}  a^\epsilon_n(\theta)\right] \geq \sup_{\theta \in I_{k-1}} -(\Lambda^\epsilon)^*(\theta) \geq -(\Lambda^\epsilon)^*(\theta_0) - \eta\\
\liminf_{n\to \infty} \left[\sup_{\theta \in I_{k+1}} a^\epsilon_n(\theta)\right] \geq \sup_{\theta \in I_{k+1}} -(\Lambda^\epsilon)^*(\theta) \geq -(\Lambda^\epsilon)^*(\theta_0) - \eta.
\end{align}
Since $ a^\epsilon_n(\theta)$ is concave, this implies 
\begin{align}
 a^\epsilon_n(\theta_0) &\geq \min (\sup_{\theta \in I_{k-1}}  a^\epsilon_n(\theta), \sup_{\theta \in I_{k+1}}  a^\epsilon_n(\theta)).
\end{align}
Taking the $\liminf$ on both sides,
\begin{align}\label{eq: an lb}
\liminf_{n \to \infty}  a^\epsilon_n(\theta_0) \geq -(\Lambda^\epsilon)^*(\theta_0) - \eta.
\end{align}
Inequalities \eqref{eq: an ub} and \eqref{eq: an lb} prove the pointwise convergence of $a^\epsilon_n(\theta_0)$ to $-(\Lambda^\epsilon)^*(\theta_0)$. We then use the following lemma to conclude uniform convergence:
\begin{lemma}\label{lemma: convex}
Let $\{f_n\}$ be a sequence of continuous convex functions which converge pointwise to a continuous function $f$ on an interval $[a, b]$. Then $f_n$ converge to $f$ uniformly.
\end{lemma}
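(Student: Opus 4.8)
The plan is to prove the lemma by contradiction, relying only on two elementary facts: a pointwise limit of convex functions is itself convex (hence continuous on the open interval $(a,b)$, so that the hypothesis ``$f$ continuous on $[a,b]$'' adds exactly continuity at the two endpoints), and the chord inequalities for convex functions. First I would assume the convergence is \emph{not} uniform. Then there exist $\epsilon_0 > 0$, a subsequence (which I relabel $\{f_n\}$), and points $x_n \in [a,b]$ with $|f_n(x_n) - f(x_n)| \ge \epsilon_0$ for all $n$; by compactness of $[a,b]$ I may pass to a further subsequence with $x_n \to x^\ast \in [a,b]$. Passing once more, I may assume that one of the two inequalities $f_n(x_n) \ge f(x_n) + \epsilon_0$ (the ``upper excess'' case) or $f_n(x_n) \le f(x_n) - \epsilon_0$ (the ``lower excess'' case) holds for every $n$, and I treat the two cases separately.

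In the upper excess case I would sandwich $x_n$ between fixed reference points $y_1 < x^\ast < y_2$ taken close enough to $x^\ast$ that $|f(y_i) - f(x^\ast)| < \epsilon_0/3$ for $i = 1,2$ (when $x^\ast$ is an endpoint, the corresponding $y_i$ is taken to be that endpoint). For $n$ large, $x_n \in [y_1, y_2]$, so convexity of $f_n$ on $[y_1,y_2]$ gives $f_n(x_n) \le \max(f_n(y_1), f_n(y_2))$; letting $n \to \infty$, this upper bound tends to $\max(f(y_1), f(y_2)) \le f(x^\ast) + \epsilon_0/3$. Since also $f(x_n) \to f(x^\ast)$, this forces $\limsup_n \left( f_n(x_n) - f(x_n) \right) \le \epsilon_0/3 < \epsilon_0$, contradicting the upper excess.

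The lower excess case is the main obstacle, because a convex function is easily bounded \emph{above} on an interval but bounding it \emph{below} requires an extrapolation argument. For $x^\ast \in (a,b)$ I would fix $w < u < x^\ast$ in $(a,b)$; for $n$ large one has $w < u < x_n$, so $u$ is a convex combination of $w$ and $x_n$, and rearranging $f_n(u) \le \lambda f_n(w) + (1-\lambda) f_n(x_n)$ yields $f_n(x_n) \ge \big[ (x_n - w) f_n(u) - (x_n - u) f_n(w) \big] / (u - w)$. Letting $n \to \infty$ and then specializing to $u = x^\ast - t$, $w = x^\ast - 2t$ and sending $t \to 0^+$, the right-hand side becomes $2 f(x^\ast) - f(x^\ast) = f(x^\ast)$ by continuity of $f$ at $x^\ast$; hence $\liminf_n f_n(x_n) \ge f(x^\ast)$, which together with $f(x_n) \to f(x^\ast)$ contradicts the lower excess. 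The subcases $x^\ast = a$ and $x^\ast = b$ use the symmetric extrapolation (for $x^\ast = a$: fix $a < u < v < b$, so that for large $n$ one has $x_n < u < v$ with $u$ a convex combination of $x_n$ and $v$, giving $f_n(x_n) \ge \big[ (v - x_n) f_n(u) - (u - x_n) f_n(v) \big]/(v-u)$, and one specializes $u = a + t$, $v = a + 2t$), and it is exactly here that the hypothesis that $f$ is continuous at the endpoints is needed to make the limit $t \to 0^+$ produce $f(a)$ (respectively $f(b)$). Combining the two cases closes the contradiction, proving that $f_n \to f$ uniformly on $[a,b]$.
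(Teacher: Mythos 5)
Your proof is correct, but it takes a genuinely different route from the paper's. The paper argues directly and quantitatively: it fixes a fine uniform grid $a=\alpha_0<\alpha_1<\cdots<\alpha_M=b$, invokes uniform continuity of $f$ and pointwise convergence at the finitely many grid points, bounds $f_n$ above on each cell by $\max\bigl(f_n(\alpha_i),f_n(\alpha_{i+1})\bigr)$ and below by extrapolating the chord through two neighbouring grid values, and thereby exhibits an explicit $N$ with $|f_n(x)-f(x)|<\epsilon$ for all $x$ and $n>N$. You instead argue by contradiction: extract bad points $x_n$ with $|f_n(x_n)-f(x_n)|\ge\epsilon_0$, pass to a subsequence $x_n\to x^\ast$, and dispose of the upper excess by bracketing $x^\ast$ between two nearby reference points, and of the lower excess by the chord-extrapolation inequality through two fixed points on one side of $x^\ast$, taking $n\to\infty$ and then $t\to 0^+$. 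The two arguments rest on the same pair of convexity estimates (maximum over a bracketing interval for the upper bound, chord extrapolation for the lower bound), but yours trades the paper's explicit grid bookkeeping and its appeal to uniform continuity of $f$ for compactness of $[a,b]$, and it isolates transparently the only place where continuity of $f$ at the endpoints---the one property not already forced by convexity---is actually used. One minor point of wording: for $x^\ast=b$ the appropriate extrapolation is the same left-sided one as in your interior case (fixed $w<u<b$ still satisfy $u<x_n$ for large $n$), with continuity of $f$ at $b$ entering only in the final $t\to 0^+$ step; the genuinely mirrored right-sided extrapolation is needed only for $x^\ast=a$, exactly as you spell out.
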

\begin{proof}
Let $\epsilon > 0$. We'll show that there exists a large enough $N$ such that for all $n > N$, $||f_n-f||_\infty < \epsilon$.
\smallskip

The function $f$ is continuous on a compact set, and therefore is uniformly continuous. Choose a $\delta > 0$ such that $|f(x)- f(y)| < \epsilon/10$ for $|x-y| < \delta$. Let $M$ be such that $(b-a)/M < \delta$. We divide the interval $[a,b]$ into $M$ intervals, whose endpoints are equidistant. We denote them by $a = \alpha_0 < \alpha_1 < \cdots < \alpha_M = b$. Since $f_n(\alpha_i) \to f(\alpha_i)$, there exists a $N_i$ such that for all $n > N_i$, $|f_n(\alpha_i) - f(\alpha_i)| < \epsilon/10$. Choose $N = \max(M, N_0, \cdots, N_M)$.
\smallskip

Consider an $x \in (\alpha_i, \alpha_{i+1})$ for some $0 \leq i < M$, and let $n>N$. Using uniform continuity of $f$, we have
\begin{equation}\label{eq: bound on fx}
f(\alpha_i) -\epsilon/10 < f(x) < f(\alpha_i) + \epsilon/10.
\end{equation}
Further, we also have
\begin{align*}
f_n(\alpha_i) &\leq f(\alpha_i) + \epsilon/10~(\text{by pointwise convergence at } \alpha_i),\\
f_n(\alpha_{i+1}) &\leq f(\alpha_{i+1}) + \epsilon/10~(\text{by pointwise convergence at } \alpha_{i+1}),\\
&\leq f(\alpha_i) + 2\epsilon/10~ (\text{by uniform continuity of } f).
\end{align*}
Convexity of $f_n$ implies
\begin{equation}\label{eq: lower bound on fnx}
f_n(x) < \max(f_n(\alpha_i), f_n(\alpha_{i+1})) < f(\alpha_i) + 2\epsilon/10.
\end{equation}
Combining part of equation (\ref{eq: bound on fx}) and equation (\ref{eq: lower bound on fnx}), we obtain
\begin{equation}\label{eq: first half}
f_n(x) - f(x) < 3\epsilon/10.
\end{equation}
We'll now try to lower-bound $f_n(x)$. First consider the case when $i \geq 1$. In this case we have 
$$\alpha_{i-1} < \alpha_i < x < \alpha_{i+1}.$$
We write $\alpha_i$ as a linear combination of $x$ and $\alpha_{i-1}$, and use the convexity of $f_n$ to arrive at
\begin{align*}
&f_n(\alpha_i) \leq \frac{\alpha_i - \alpha_{i-1}}{x - \alpha_{i-1}}f_n(x) + \frac{x-\alpha_i}{x-\alpha_{i-1}}f_n(\alpha_{i-1}).
\end{align*}
This implies
\begin{align*}
 \frac{x-\alpha_{i-1}}{\alpha_i - \alpha_{i-1}}f_n(\alpha_i) - \frac{x-\alpha_i}{\alpha_i - \alpha_{i-1}}f_n(\alpha_{i-1}) \leq f_n(x).
\end{align*}
Taking the infimum of the left side, we get
\begin{align*}
\inf_{u \in (\alpha_i, \alpha_{i+1})} \frac{u-\alpha_{i-1}}{\alpha_i - \alpha_{i-1}}f_n(\alpha_i) - \frac{u-\alpha_i}{\alpha_i - \alpha_{i-1}}f_n(\alpha_{i-1}) \leq f_n(x).
\end{align*}
Note that since the LHS is linear in $x$, the infimum occurs at one of the endpoints of the interval, $\alpha_i$ or $\alpha_{i+1}$. Substituting, we get
\begin{align}
f_n(x) &\geq \min\left( f_n(\alpha_i), 2f_n(\alpha_i) - f_n(\alpha_{i-1})\right) \nonumber\\
&\geq \min(f(\alpha_i)-\epsilon/10, 2(f(\alpha_i)-\epsilon/10) - f(\alpha_{i-1}) - \epsilon/10)~\nonumber \\
&\geq \min(f(\alpha_i)-\epsilon/10, 2f(\alpha_i) - f(\alpha_{i-1}) - 3\epsilon/10)~\nonumber \\
&\geq \min(f(\alpha_i)-\epsilon/10, 2f(\alpha_i) - f(\alpha_i) - \epsilon/10 - 3\epsilon/10)\nonumber \\
&= f(\alpha_i) - 4\epsilon/10. \label{eq: upper bound on fnx}
\end{align}
Combining inequality (\ref{eq: upper bound on fnx}) with a part of inequality (\ref{eq: bound on fx}), we have
\begin{equation}\label{eq: second half}
f_n(x) - f(x) > -5\epsilon/10.
\end{equation}
Combining (\ref{eq: first half}) and (\ref{eq: second half}) we conclude that for all $x \in (\alpha_1, \alpha_M)$, and for all $n >N$, 
\begin{equation}
|f_n(x) - f(x)| < \epsilon/2.
\end{equation}
Now let $x \in (\alpha_0, \alpha_1)$. We express $\alpha_1$ as a linear combination of $x$ and $\alpha_2$ and follows the steps as above to establish \eqref{eq: second half} for $x \in (\alpha_0, \alpha_1)$. This shows that for all $x \in [a, b]$, $||f_n(x) - f(x)|| < \epsilon/2$ for all $n > N$, and concludes the proof.
\end{proof}


\section{Alternate definitions of typical sets}\label{appendix: alternate}

\begin{theorem}\label{thm: gen typ}
Let $X$ be a non-uniform log-concave random variable. For $\epsilon > 0$, define the following sequences of sets
\begin{align}
\overline \cT^\epsilon_n = \{x^n \mid p_{X^n}(x^n) \geq \exp(-n(h(X)+\epsilon))\}\\
\underline \cT^\epsilon_n = \{x^n \mid p_{X^n}(x^n) \geq \exp(-n(h(X)-\epsilon))\}.
\end{align}
Let $\{T_n\}$ be any sequence of compact convex sets. Suppose that for any $\epsilon >0$ there exists an $N(\epsilon)$ such that for all $n > N(\epsilon)$ the following  inclusion holds:

\begin{equation}
\underline \cT^\epsilon_n  \subseteq \cT_n \subseteq \overline \cT^\epsilon_n. 
\end{equation}
Then the following equality holds for all $\theta \in [0,1]$:
\begin{equation}
\lim_{n \to \infty} \frac{1}{n} \log V_{\lfloor n\theta \rfloor}(T_n) = h_X(\theta)
\end{equation}
\end{theorem}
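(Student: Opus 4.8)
The plan is to trap $T_n$ between the two one-sided typical sets $\underline{\cT}^\epsilon_n\subseteq T_n\subseteq\overline{\cT}^\epsilon_n$ and then let $\epsilon\to 0_+$, using Theorem~\ref{thm: main} to control the outer set and a short convexity estimate to control the inner one. First I would note that $\overline{\cT}^\epsilon_n$ is exactly the typical set $\cT^\epsilon_n$ of Definition~\ref{def: typical}, so $V_{\lfloor n\theta\rfloor}(\overline{\cT}^\epsilon_n)=\mu^\epsilon_n(\lfloor n\theta\rfloor)$; by the proof of Theorem~\ref{thm: part a} for $\theta\in(0,1)$, by Fekete's lemma applied to the super-multiplicativity of Lemma~\ref{lemma: superconv Tn} for $\theta=1$, and because $\mu^\epsilon_n(0)=1$ for $\theta=0$, the limit $\ell(\theta,\epsilon):=\lim_{n\to\infty}\tfrac1n\log V_{\lfloor n\theta\rfloor}(\overline{\cT}^\epsilon_n)$ exists for every $\theta\in[0,1]$, and Theorem~\ref{thm: main} gives $\lim_{\epsilon\to 0_+}\ell(\theta,\epsilon)=h_X(\theta)$. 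The inclusion $T_n\subseteq\overline{\cT}^\epsilon_n$ (valid for $n>N(\epsilon)$) together with monotonicity of intrinsic volumes, Theorem~\ref{thm: iv props}(a), already supplies the upper half of the sandwich: $\limsup_{n}\tfrac1n\log V_{\lfloor n\theta\rfloor}(T_n)\le\ell(\theta,\epsilon)$, which $\to h_X(\theta)$ as $\epsilon\to0_+$.

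For the matching lower bound I would show that the ``thin'' set $\underline{\cT}^\epsilon_n$ contains a fixed-ratio shrunken copy of the ``fat'' set $\overline{\cT}^\epsilon_n$. Write $\eta:=\min_x\Phi(x)$, which is attained since $\Phi\to+\infty$ at $\pm\infty$, and fix a minimizer $x^*$. Non-uniformity of $X$ is precisely what guarantees $h(X)>\eta$ --- for a uniform law $h(X)=\eta$ and $\underline{\cT}^\epsilon_n$ is empty --- so for $0<\epsilon<h(X)-\eta$ the ratio $\rho(\epsilon):=\frac{h(X)-\epsilon-\eta}{h(X)+\epsilon-\eta}$ lies in $(0,1)$ and tends to $1$ as $\epsilon\to0_+$. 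A one-line convexity estimate, applying $\Phi\big(\rho(\epsilon)x_i+(1-\rho(\epsilon))x^*\big)\le\rho(\epsilon)\Phi(x_i)+(1-\rho(\epsilon))\eta$ coordinatewise and summing, shows that the image of $\overline{\cT}^\epsilon_n$ under the map $x^n\mapsto x^*\mathbf{1}+\rho(\epsilon)(x^n-x^*\mathbf{1})$, where $\mathbf{1}=(1,\dots,1)$, lies in $\underline{\cT}^\epsilon_n$. That image is a translate of $\rho(\epsilon)\,\overline{\cT}^\epsilon_n$, so by translation-invariance and degree-$j$ homogeneity of intrinsic volumes, $V_j(\underline{\cT}^\epsilon_n)\ge\rho(\epsilon)^{\,j}V_j(\overline{\cT}^\epsilon_n)$ for every $j\ge0$; taking $j=\lfloor n\theta\rfloor$ and using $\underline{\cT}^\epsilon_n\subseteq T_n$ yields $\tfrac1n\log V_{\lfloor n\theta\rfloor}(T_n)\ge\tfrac{\lfloor n\theta\rfloor}{n}\log\rho(\epsilon)+\tfrac1n\log V_{\lfloor n\theta\rfloor}(\overline{\cT}^\epsilon_n)$ for $n>N(\epsilon)$.

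Combining the two halves, for each fixed $\epsilon\in(0,h(X)-\eta)$ and each $\theta\in[0,1]$,
\[
\theta\log\rho(\epsilon)+\ell(\theta,\epsilon)\ \le\ \liminf_{n\to\infty}\tfrac1n\log V_{\lfloor n\theta\rfloor}(T_n)\ \le\ \limsup_{n\to\infty}\tfrac1n\log V_{\lfloor n\theta\rfloor}(T_n)\ \le\ \ell(\theta,\epsilon),
\]
and letting $\epsilon\to0_+$ collapses both ends to $h_X(\theta)$, since $\rho(\epsilon)\to1$ and $\ell(\theta,\epsilon)\to h_X(\theta)$ by Theorem~\ref{thm: main}. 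This establishes $\lim_{n\to\infty}\tfrac1n\log V_{\lfloor n\theta\rfloor}(T_n)=h_X(\theta)$ for all $\theta\in[0,1]$.

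I expect the only genuinely nontrivial step to be the scaling containment $x^*\mathbf{1}+\rho(\epsilon)(\overline{\cT}^\epsilon_n-x^*\mathbf{1})\subseteq\underline{\cT}^\epsilon_n$ and its reliance on $h(X)>\min_x\Phi$: one must keep $\rho(\epsilon)$ bounded away from $0$ by restricting $\epsilon<h(X)-\eta$, and this is where non-uniformity is used in an essential way (for the uniform law the hypothesis $\underline{\cT}^\epsilon_n\subseteq T_n$ is vacuous and the conclusion can fail). The endpoint cases deserve a direct check, but are routine: $\theta=0$ is trivial since $V_0\equiv1$ on nonempty convex sets, and $\theta=1$ is the classical volume estimate~\eqref{eq: Tn hx} applied to both $\overline{\cT}^\epsilon_n$ and $\underline{\cT}^\epsilon_n$ --- equivalently, it is already subsumed by the scaling bound above, which at $j=n$ reads $|\underline{\cT}^\epsilon_n|\ge\rho(\epsilon)^n|\overline{\cT}^\epsilon_n|$. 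Everything else is bookkeeping around Theorem~\ref{thm: main}.
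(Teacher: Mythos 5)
Your proposal is correct and follows essentially the same route as the paper: sandwich $T_n$ between $\underline{\cT}^\epsilon_n$ and $\overline{\cT}^\epsilon_n$, use monotonicity plus Theorem \ref{thm: main} for the upper bound, and compare the thin and fat typical sets by a convexity/scaling argument centered at the minimizer of $\Phi$, with non-uniformity guaranteeing $h(X)>\min_x\Phi(x)$. Your shrinking map with ratio $\rho(\epsilon)$ is exactly the inverse of the paper's ``bloating'' by $1+\alpha=\rho(\epsilon)^{-1}$; the only cosmetic difference is that you conclude via homogeneity and translation invariance of $V_j$ directly, rather than through the limit functions of the super-convolutive sequence attached to $\underline{\cT}^\epsilon_n$ as the paper does.
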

\begin{remark}
We exclude uniform random variables because $\underline \cT^\epsilon_n = \phi$ for all $n$ and all $\epsilon > 0$.
\end{remark}

\begin{proof}
By monotonicity of intrinsic volumes, we note that  it is enough to show 
\begin{equation}
\lim_{n \to \infty} \frac{1}{n} \log V_{\lfloor n\theta \rfloor}(\underline \cT^\epsilon_n) = h_X(\theta).
\end{equation}
Using the same proof idea as in Lemma \ref{lemma: superconv Tn}, we see that$\{\underline \cT^\epsilon_n \}$ is a super-multiplicative sequence. Using the non-uniformity of $X$, we observe that for all small enough $\epsilon$ the set $\underline \cT^\epsilon_n \neq \phi$ for all $n$. Furthermore, noting that $\underline \cT^\epsilon_n \subseteq \overline \cT^\epsilon_n $ we see that the intrinsic volumes of $\{\underline \cT^\epsilon_n \}$ constitute a proper super-convolutive sequence. Thus, the limit function for the intrinsic volumes of $\{\underline \cT^\epsilon_n\}$ exists, although it may not be continuous at 1 (continuity at 0 follows since it is bounded above $h_X$. For ease of notation we refer to the limit functions of the intrinsic volumes of $\{\underline \cT^\epsilon_n\}$ and $\{\overline \cT^\epsilon_n\}$ by $\underline \cG^\epsilon$ and $\overline \cG^\epsilon$. Our goal is to show that $\underline \cG^\epsilon$ and $\overline \cG^\epsilon$, cannot differ by too much. We show that it is possible to ``bloat" $\{\underline \cT^\epsilon_n\}$ by a small fraction, so that the bloated set will contain $\{\overline \cT^\epsilon_n\}$.

Let $p_X(x) = \exp(-\Phi(x))$, and assume without loss of generality that $\Phi$ achieves its minimum at 0. Note that $h(X) = E \Phi(X)$, and thus $h(X) \geq \Phi(0)$. This inequality is strict when $X$ is not uniform. Assume $\epsilon < h(X) - \Phi(0)$. The two sequences of sets may be equivalently described as
\begin{align}
\overline \cT^\epsilon_n = \{x^n \mid \sum \Phi(x_i) \leq  n(h(X)+\epsilon)\}\\
\underline \cT^\epsilon_n = \{x^n \mid \sum \Phi(x_i) \leq  n(h(X)-\epsilon)\}.
\end{align}
For $\alpha > 0$, consider the map that takes $x^n \in \mathbb R$ to  $(1+\alpha)x^n$. Let $x^n$ be a point on the boundary of $\underline \cT^\epsilon_n$ satisfying $\sum \Phi(x_i) = n(h(X)-\epsilon)$. We have the inequalities
\begin{align*}
\sum_{i=1}^n \Phi(x_i(1+\alpha)) &\stackrel{(a)}\geq \sum_{i = 1}^n \Phi(x_i) + \sum_{i=1}^n \alpha (\Phi(x_i) - \Phi(0))\\
&= n(1+\alpha)(h(X)-\epsilon) - n\Phi(0).
\end{align*}
where $(a)$ follows from the convexity of $\Phi$. Now for a choice of $\alpha = \frac{2\epsilon}{h(X)-\epsilon-\Phi(0)}$, we will have
$$\sum_{i=1}^n \Phi(x_i(1+\alpha)) \geq n(h(X) + \epsilon),$$
that is, $x^n(1+\alpha) \notin \overline \cT^\epsilon_n$. Note that $\alpha \to 0$ as $\epsilon \to 0$. Thus, for  this choice of $\alpha$ we must have $$\overline \cT^\epsilon_n \subseteq (1+\alpha)\underline \cT^\epsilon_n.$$ This implies
$$\overline \cG^\epsilon \leq \log (1+\alpha) + \underline \cG^\epsilon.$$
We also have $\underline \cG^\epsilon \leq \overline \cG^\epsilon$, giving
$$\overline \cG^\epsilon \leq \log (1+\alpha) + \underline \cG^\epsilon \leq \log (1+\alpha) + \overline \cG^\epsilon.$$
Since and $\lim_{\epsilon \to 0} \overline \cG^\epsilon = h_X$, we can take the limit as $\epsilon \to 0$ to conclude that 
$$\lim_{\epsilon \to 0} \underline \cG^\epsilon = h_X.$$
This concludes the proof.
\end{proof}

\section{G\"{a}rtner-Ellis theorem}\label{appendix: gartnerellis}
\begin{theorem}[G\"{a}rtner-Ellis theorem]\label{thm: gartner ellis}
Consider a sequence of random vectors $Z_n \in \mathbb R^d$, where $Z_n$ possess the law $\mu_n$ and the logarithmic moment generating function
$$\Lambda_n(\lambda) := \log E\left[\exp\left(\sum_{i=1}^d \lambda_iZ_i\right)\right].$$ We assume the following:
\begin{itemize}
\item[$(*)$]
 For each $\lambda\in \mathbb R^d$, the limit
$$\Lambda(\lambda) := \lim_{n \to \infty} \frac{1}{n} \Lambda_n(n\lambda)$$
exists as an extended a real number. Further the origin belongs to the interior $\cD_\Lambda := \{\lambda \in \mathbb R^d~|~\Lambda(\lambda) < \infty\}$.
\end{itemize}
 Let $\Lambda^*$ be the convex conjugate of $\lambda$ with $\cD_{\Lambda^*} = \{x \in \mathbb R^d~|~ \Lambda^*(x) < \infty\}$. When assumption $\textbf{(*)}$ holds, the following are satisfied:
 \begin{itemize}
 \item[1.] For any closed set $I$,
 $$\limsup_{n \to \infty} \frac{1}{n} \log \mu_n(I) \leq -\inf_{x \in I} \Lambda^*(x).$$
 \item[2.] For any open set $F$,
 $$\liminf_{n \to \infty} \frac{1}{n} \log \mu_n(F) \geq  -\inf_{x \in F \cap \cF} \Lambda^*(x),$$
 where $\cF$ is the set of exposed points of $\Lambda^*$ whose exposing hyperplane belongs to the interior of $\cD_{\Lambda}$.
 \item[3.]
 If $\Lambda$ is an essentially smooth, lower semicontinuous function, then the large deviations principle holds with a good rate function $\Lambda^*$.
 \end{itemize}
\end{theorem}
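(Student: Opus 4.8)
The plan is to establish the three conclusions in the order preliminaries $\to$ upper bound $\to$ lower bound $\to$ essentially smooth case, following the classical change-of-measure (tilting) argument for large deviations. First I would record the convex-analytic facts forced by hypothesis $(*)$. Writing $\Lambda_n^\sharp(\lambda):=\tfrac1n\Lambda_n(n\lambda)$, each $\Lambda_n^\sharp$ is convex by H\"older's inequality with $\Lambda_n^\sharp(0)=0$, so the pointwise limit $\Lambda$ is convex, $\Lambda(0)=0$, and its conjugate $\Lambda^*$ is convex, lower semicontinuous, and nonnegative. Since $0\in\operatorname{int}\cD_\Lambda$, for small $\rho>0$ we have $\pm\rho e_i\in\cD_\Lambda$; a one-line exponential Chebyshev bound then gives \emph{exponential tightness} ($\limsup_n\tfrac1n\log\mu_n(\{|x_i|>R\})\le(\Lambda(\rho e_i)\wedge\Lambda(-\rho e_i))-\rho R\to-\infty$) and also \emph{goodness} of $\Lambda^*$, whose sublevel sets lie in the bounded set $\{x:\langle\pm\rho e_i,x\rangle\le\Lambda(\pm\rho e_i)+c\}$.

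For conclusion 1 (upper bound) I would first treat a compact set $K$. Fix $\alpha>0$; for each $x\in K$ choose $\lambda_x$ with $\langle\lambda_x,x\rangle-\Lambda(\lambda_x)\ge\min(\Lambda^*(x),1/\alpha)-\alpha$, cover $K$ by finitely many balls $B(x_j,r_j)$, and on each use $\mu_n(B(x_j,r_j))\le e^{-n\langle\lambda_{x_j},x_j\rangle+nr_j|\lambda_{x_j}|+\Lambda_n(n\lambda_{x_j})}$. Taking $\tfrac1n\log$, letting $n\to\infty$ via $(*)$, then $r_j\to0$ and $\alpha\to0$, and using a union bound over the finite subcover yields $\limsup_n\tfrac1n\log\mu_n(K)\le-\inf_{x\in K}\Lambda^*(x)$; exponential tightness then upgrades this from compact $K$ to an arbitrary closed $I$ in the usual way.

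For conclusion 2 (lower bound) it suffices to show, for each exposed point $y$ of $\Lambda^*$ with exposing hyperplane $\eta\in\operatorname{int}\cD_\Lambda$ and each $\delta>0$, that $\liminf_n\tfrac1n\log\mu_n(B(y,\delta))\ge-\Lambda^*(y)$; taking unions of such balls inside $F\cap\cF$ and optimizing gives the stated open-set bound. I would introduce the tilted laws $\tilde\mu_n(dz):=e^{\langle n\eta,z\rangle-\Lambda_n(n\eta)}\mu_n(dz)$, well defined for large $n$, so that
\[
\tfrac1n\log\mu_n(B(y,\delta))\ \ge\ -\langle\eta,y\rangle-\delta|\eta|+\tfrac1n\Lambda_n(n\eta)+\tfrac1n\log\tilde\mu_n(B(y,\delta)).
\]
The rescaled log-moment generating functions of $\tilde\mu_n$ converge to $\Lambda(\eta+\cdot)-\Lambda(\eta)$, whose conjugate is $\Lambda^*(\cdot)-\langle\eta,\cdot\rangle+\Lambda(\eta)$; since $y$ is exposed by $\eta$, this conjugate vanishes only at $y$, so the upper bound already proved, applied to $\tilde\mu_n$ on the closed set $B(y,\delta)^c$, forces $\tilde\mu_n(B(y,\delta))\to1$. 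Letting $n\to\infty$ and using $\Lambda^*(y)=\langle\eta,y\rangle-\Lambda(\eta)$, then $\delta\to0$, completes it.

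Finally, for conclusion 3, when $\Lambda$ is essentially smooth and lower semicontinuous I would invoke the standard convex-analysis lemma that $\nabla\Lambda(\operatorname{int}\cD_\Lambda)$ consists of exposed points of $\Lambda^*$ with exposing hyperplane in $\operatorname{int}\cD_\Lambda$, that $\Lambda^*$ is finite (hence continuous) precisely on the closure of this set, and consequently $\inf_{x\in F\cap\cF}\Lambda^*(x)=\inf_{x\in F}\Lambda^*(x)$ for every open $F$; together with the goodness of $\Lambda^*$ established above this promotes conclusions 1 and 2 to a full large deviations principle. The hard part will be the lower bound, specifically proving the tilted measures concentrate at $y$ (i.e.\ $\tfrac1n\log\tilde\mu_n(B(y,\delta))\to0$) and managing the restriction to exposed points whose exposing hyperplane lies in $\operatorname{int}\cD_\Lambda$ — this is exactly the gap that essential smoothness closes in part 3; the convex-duality bookkeeping and the compact-to-closed passage via exponential tightness are the remaining, more routine, technical ingredients.
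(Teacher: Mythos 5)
This statement is the classical G\"{a}rtner--Ellis theorem; the paper does not prove it at all, but quotes it from Dembo and Zeitouni \cite{dembo1998large} (Section 2.3) purely for use in Theorem \ref{thm: lambda Tn}, so there is no in-paper proof to compare against. Your outline is, in substance, the standard proof from that reference and is sound: exponential Chebyshev bounds over a finite ball cover of a compact set, upgraded to closed sets via the exponential tightness furnished by $0 \in \operatorname{int}\cD_\Lambda$, for conclusion 1; exponential tilting by $n\eta$ at an exposed point $y$ with exposing hyperplane $\eta \in \operatorname{int}\cD_\Lambda$, with the already-proved upper bound applied to the tilted laws (whose limiting rate function $\Lambda^*(\cdot)-\langle\eta,\cdot\rangle+\Lambda(\eta)$ is a good rate function vanishing only at $y$) to get concentration, for conclusion 2; and the Rockafellar-type duality between essential smoothness of $\Lambda$ and exposedness for $\Lambda^*$ for conclusion 3. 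Two small inaccuracies are worth fixing if you flesh this out. First, in the exponential-tightness display the exponent should be $\max(\Lambda(\rho e_i),\Lambda(-\rho e_i))-\rho R$, not the minimum, since you union over the two tails; the conclusion is unaffected. Second, in part 3 the assertion that $\Lambda^*$ is finite, hence continuous, ``precisely on the closure'' of $\nabla\Lambda(\operatorname{int}\cD_\Lambda)$ is not quite right (continuity can fail at the boundary of $\cD_{\Lambda^*}$, and finiteness need not characterize that closure); the standard route is that essential smoothness and lower semicontinuity give $\operatorname{ri}(\cD_{\Lambda^*}) \subseteq \cF$, and then a segment/convexity argument (slide any $x \in F$ with $\Lambda^*(x)<\infty$ slightly toward a point of $\operatorname{ri}(\cD_{\Lambda^*})$, staying in the open set $F$) yields $\inf_{F\cap\cF}\Lambda^* = \inf_F \Lambda^*$. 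Also note that in the lower bound you only need the trivial inequality $\Lambda^*(y) \geq \langle\eta,y\rangle-\Lambda(\eta)$; the equality you invoke does hold because $\eta\in\operatorname{int}\cD_\Lambda$ forces $\Lambda^{**}(\eta)=\Lambda(\eta)$, but it requires that extra remark.
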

\begin{remark}\label{remark: gartnerellis derivative}
For definitions of \emph{exposed points}, \emph{essentially smooth functions} and \emph{good rate function}, we refer to Section $2.3$ of \cite{dembo1998large}. For our purpose, it is enough to know that if $\Lambda$ is differentiable on $\cD_{\Lambda} = \mathbb R^d$, then is it essentially smooth.
\end{remark}

\end{appendix}
\end{document}